\renewcommand{\@biblabel}[1]{\quad#1.}
\date{}
\newcommand{\beq}{\begin{equation}}
\newcommand{\eeq}{\end{equation}}
\newcommand{\barr}{\begin{array}}
\newcommand{\earr}{\end{array}}
\newcommand{\beqr}{\begin{eqnarray}}
\newcommand{\eeqr}{\end{eqnarray}}
\newcommand{\beqrn}{\begin{eqnarray*}}
\newcommand{\eeqrn}{\end{eqnarray*}}
\newcommand{\beqn}{\begin{equation*}}
\newcommand{\eeqn}{\end{equation*}}
\newcommand{\bei}{\begin{itemize}}
\newcommand{\beii}{\begin{itemize} \item}
\newcommand{\eei}{\end{itemize}}
\newcommand{\ben}{\begin{enumerate}}
\newcommand{\een}{\end{enumerate}}
\newcommand{\bes}{\begin{small}}
\newcommand{\ees}{\end{small}}
\newcommand{\bec}{\begin{center}}
\newcommand{\eec}{\end{center}}
\newcommand{\betab}{\begin{tabular}}
\newcommand{\eetab}{\end{tabular}}
\newtheorem{theorem}{Theorem}[section]
\newtheorem{lemma}[theorem]{Lemma}
\newtheorem{proposition}[theorem]{Proposition}
\theoremstyle{definition}
\theoremstyle{remark}
\newtheorem{remark}{Remark}
\newcommand{\EV}{\mathbf{E}} 
\newcommand{\EVb}[1]{\EV\left\{#1\right\}} 
\newcommand{\cov}{\mathrm{cov}}
\newcommand{\var}{\mathrm{var}}
\newcommand{\tr}{\mathrm{tr}}
\newcommand{\rank}{\mathrm{rank}}
\newcommand{\vspan}{\mathrm{span}}
\newcommand{\diag}{\mathrm{diag}}
\newcommand{\Flin}{\mathrm{F,lin}}
\newcommand{\mutG}{\mathrm{mut,G}}
\newcommand{\OLE}{\mathrm{OLE}}
\newcommand{\tnorm}[1]{\lVert#1\rVert_2}
\newcommand{\norm}[1]{\lVert#1\rVert}
\newcommand{\abs}[1]{\lvert#1\rvert}
\newcommand{\rhosig}{\rho^{\text{sig}}}
\newcommand\bigzero{\makebox(0,0){\text{\huge0}}}
\newcommand{\namerefquote}[1]{``\nameref{#1}"}
\renewcommand{\@biblabel}[1]{\quad#1.}
\begin{document}

\begin{flushleft}
{\Large
\textbf{The sign rule and beyond:  Boundary effects, flexibility, and noise correlations in neural population codes}
}
\\
Yu Hu$^{1,\ast}$, 
Joel Zylberberg$^{1}$, 
Eric Shea-Brown$^{1,2,3}$
\\
\bf{1}  Department of Applied Mathematics, University of Washington, Seattle, Washington, United States of America
\\
\bf{2} Program in Neurobiology and Behavior, University of Washington, Seattle, Washington, United States of America
\\
\bf{3} Department of Physiology and Biophysics, University of Washington, Seattle, Washington, United States of America
\\
$\ast$ E-mail: huyu@uw.edu
\end{flushleft}

\section*{Abstract}
Over repeat presentations of the same stimulus, sensory neurons show variable responses. This ``noise" is typically correlated between pairs of cells, and a question with rich history in neuroscience is how these noise correlations impact the population's ability to encode the stimulus.

Here, we consider a very general setting for population coding, investigating how information varies as a function of noise correlations, with all other aspects of the problem -- neural tuning curves, etc. -- held fixed. This work yields unifying insights into the role of noise correlations. These are summarized in the form of theorems, and illustrated with numerical examples involving neurons with diverse tuning curves.  Our main contributions are as follows.

(1) We generalize previous results to prove a {\it sign rule} (SR) --- if noise correlations between pairs of neurons have opposite signs vs. their signal correlations, then coding performance will improve compared to the independent case. This holds for three different metrics of coding performance, and for arbitrary tuning curves and levels of heterogeneity. This generality is true for our other results as well.

(2)
As also pointed out in the literature, the SR does not provide a necessary condition for good coding. We show that a diverse set of correlation structures can improve coding. Many of these violate the SR, as do experimentally observed correlations. There is structure to this diversity: we prove that the optimal correlation structures must lie on boundaries of the possible set of noise correlations.

(3) We provide a novel set of necessary and sufficient conditions, under which the coding performance (in the presence of noise) will be as good as it would be if there were no noise present at all.

\section*{Author Summary}
Sensory systems communicate information to the brain --- and brain areas communicate between themselves --- via the electrical activities of their respective neurons. These activities are ``noisy:" repeat presentations of the same stimulus do not yield to identical responses every time. Furthermore, the neurons' responses are not independent: the variability in their responses is typically correlated from cell to cell.  How does this change the impact of the noise --- for better or for worse?  

Our goal here is to classify (broadly) the sorts of noise correlations that are either good or bad for enabling populations of neurons to transmit information. This is helpful as there are many possibilities for the noise correlations,  and the set of possibilities becomes large for even modestly sized neural populations. We prove mathematically that, for larger populations, there are many highly diverse ways that favorable correlations can occur. These often differ from the noise correlation structures that  are typically identified as beneficial for information transmission -- those that follow the so-called ``sign rule."  Our results help in interpreting some recent data that seems puzzling from the perspective of this rule.

\section*{Introduction}
Neural populations typically show correlated variability over repeat presentation of the same stimulus~\cite{mastronarde83,alonso96,Cohen:2011eh,gawne93}. These are called \emph{noise correlations},  to differentiate them from correlations that arise when neurons respond to similar features of a stimulus.  Such \emph{signal correlations} are measured by observing how pairs of mean (averaged over trials) neural responses co-vary as the stimulus is changed~\cite{Cohen:2011eh,Averbeck:2006ew}.

How do noise correlations affect the population's ability to encode information? This question is well-studied~\cite{Zohary:1994ei,Averbeck:2006ew,Abbott:1999ul,Ecker:2011bx, shamir06, Sompolinsky:2001hh,Averbeck:2006vj,Cohen:2011eh, Roudi_Latham,romo03,daSilveira:2013vf,wilke02,Josic:2009du}, and prior work indicates that the presence of noise correlations can either improve stimulus coding, diminish it, or have little effect (Fig.~\ref{F:tuning}). Which case occurs depends richly on details of the signal and noise correlations, as well as the specific assumptions made. For example ~\cite{shamir06, Ecker:2011bx,daSilveira:2013vf} show that a classical picture --- wherein positive noise correlations prevent information from increasing linearly with population size --- does not generalize to heterogeneously tuned populations.  Similar results were obtained by ~\cite{tkacik2010}, and these examples emphasize the need for general insights. 

Thus, we study a more general mathematical model, and investigate how coding performance changes as the noise correlation are varied. Figure~\ref{F:tuning}, modified and extended from \cite{Averbeck:2006ew}, illustrates this process.  In this figure, the only aspect of the population responses that differs from case to case are the noise correlations, resulting in differently shaped distributions. These different noise structures lead to  different levels of stimulus discriminability, and hence coding performance.  The different cases illustrate our approach:  given any set of tuning curves and noise variances, we study how encoded stimulus information varies with respect to the set of all pairwise noise correlations.

Compared to previous work in this area, there are two key differences that makes our analysis novel: we make no particular assumptions on the structure of the tuning curves; and we do not restrict ourselves to any particular correlation structure such as the ``limited-range" correlations often used in prior work~\cite{Averbeck:2006ew,Ecker:2011bx,Abbott:1999ul}. Our results still apply to the previously-studied cases, but also hold much more generally. This approach leads us to derive mathematical theorems relating encoded stimulus information to the set of pairwise noise correlations. We prove the same theorems for several common measures of coding performance: the linear Fisher information, the precision of the optimal linear estimator (OLE~\cite{Salinas:1994wr}), and the mutual information between Gaussian stimuli and responses.

First, we prove that coding performance is always enhanced -- relative to the case of independent noise -- when the noise and signal correlations have opposite signs for all cell pairs (see Fig.~\ref{F:tuning}).  This ``sign rule" (SR)  generalizes prior work. Importantly, the converse is not true, noise correlations that perfectly violate the SR --and thus have the same signs as the signal correlations -- can yield better coding performance than does independent noise.  Thus, as previously observed~\cite{Ecker:2011bx, shamir06,daSilveira:2013vf}, the SR does not provide a necessary condition for correlations to enhance coding performance. 

Since experimentally observed noise correlations often have the same signs as the signal correlations~\cite{Zohary:1994ei, Cohen:2011eh, kohn05}, new theoretical insights are needed. To that effect, we develop a new organizing principle: optimal coding will always be obtained on the boundary of the set of allowed correlation coefficients.  As we discuss, this boundary can be defined in flexible ways that incorporate constraints from statistics or biological mechanisms.

Finally, we identify conditions under which appropriately chosen noise correlations can yield coding performance as good as would be obtained with deterministic neural responses. For large populations, these conditions are satisfied with high probability, and the set of such correlation matrices is very high-dimensional. Many of them also strongly violate the SR.

\begin{figure}[H]
\begin{center}
\includegraphics[width=6in]{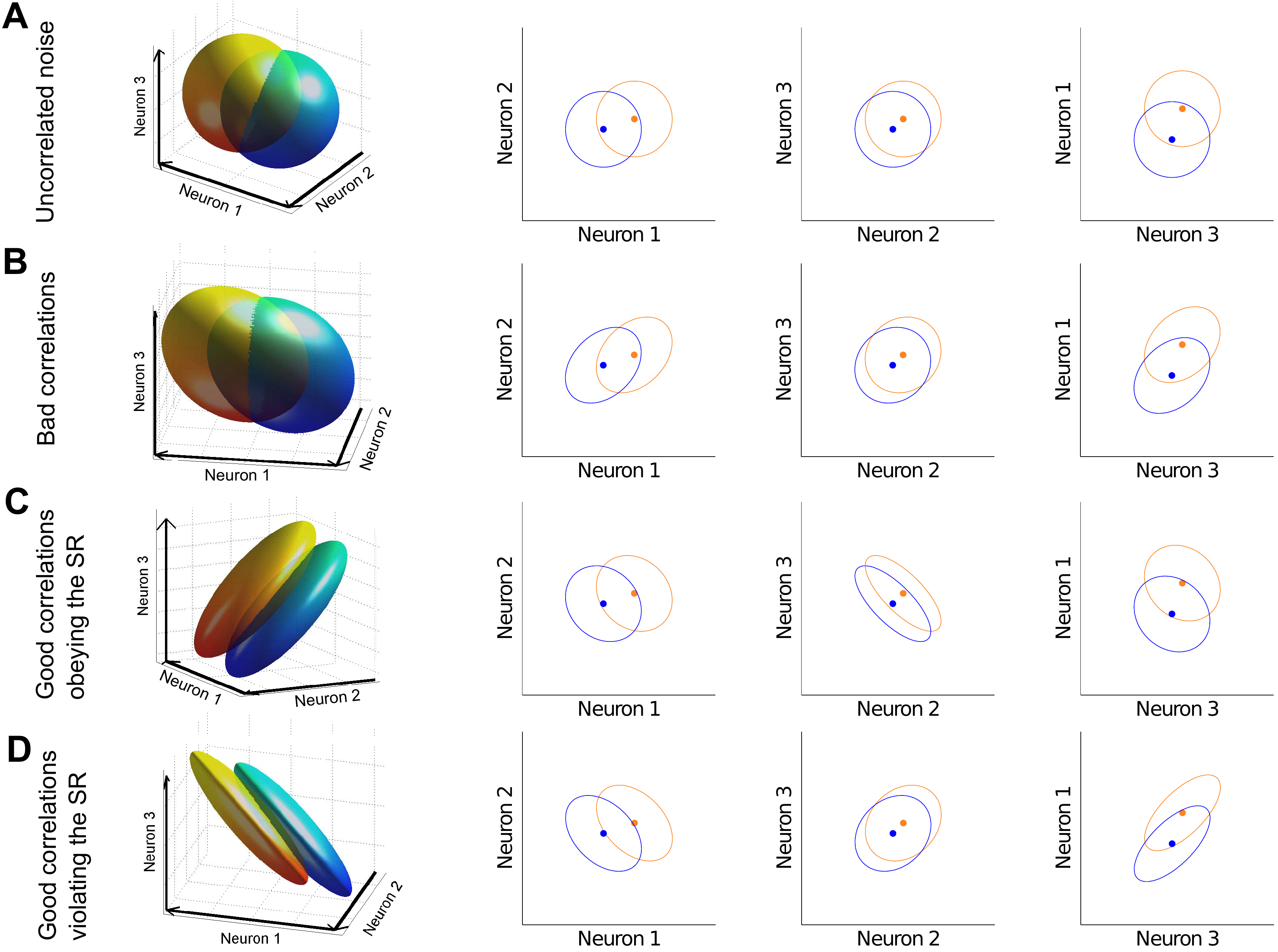}
\end{center}
\caption{\textbf{Different structures of correlated trial-to-trial variability lead to different coding accuracies in a neural population.} (Modified and extended from \cite{Averbeck:2006ew}.) 
We illustrate the underlying issues via a three neuron population, encoding two possible stimulus values (yellow and blue).  Neurons' mean responses are different for each stimulus, representing their tuning.  Trial-to-trial variability (noise) around these means is represented by the ellips(oid)s, which show 95\% confidence levels.  This noise has two aspects: for each individual neuron, its trial-to-trial variance; and at the population level, the noise correlations between pairs of neurons. We fix the former (as well as mean stimulus tuning), and ask how noise correlations impact stimulus coding.  Different choices ({\bf A-D}) of noise correlations affect the orientation and shape of response distributions in different ways, yielding different levels of overlap between the full (3D) distributions for each stimulus.  The smaller the overlap, the more discriminable are the stimuli and the higher the coding performance. We also show the 2D projections of these distributions, to facilitate the comparison with the geometrical intuition of~\cite{Averbeck:2006ew}. First, Row {\bf A} is the reference case where neurons' noise is independent:  zero noise correlations.  Row {\bf B} illustrates how noise correlations can increase overlap and worsen coding performance.  Row {\bf C} demonstrates the opposite case, where noise correlations are chosen consistently with the sign rule (SR) and information is enhanced compared to the independent noise case.  Intriguingly, Row {\bf D} demonstrates that there are more favorable possibilities for noise correlations:  here, these violate the SR, yet improve coding performance vs. the independent case.  Detailed parameter values are listed in Methods Section~\namerefquote{S:numerical_details}.}
\label{F:tuning}
\end{figure}

\section*{Results}
The layout of our Results section is as follows. We will begin by describing our setup, and the quantities we will be computing, in Section~\namerefquote{S:setup}.

 In Section~\namerefquote{S:sign_rule}, we will then discuss our generalized version of the ``sign rule", Theorem \ref{TH:p_or_n}, namely that signal and noise correlations between pairs of neurons with opposite signs will  always improve encoded information compared with the independent case. Next, in Section~\namerefquote{S:boundary}, we use the fact that all of our information quantities are convex functions of the noise correlation coefficients to conclude that the optimal noise correlation structure must lie on the boundary of the allowed set of correlation matrices, Theorem \ref{TH:boundary}.

We will further observe that there will typically be a large set of correlation matrices that all yield optimal (or near-optimal) coding performance, in a numerical example of heterogeneously tuned neural populations in Section~\namerefquote{S:heter_eg}.

We prove that these observations are general in Section~\namerefquote{S:noise_cancel} by studying the noise canceling correlations (those that yield the same high coding fidelity as would be obtained in the absence of noise). We will provide a set of necessary and sufficient conditions  for correlations to be ``noise canceling", Theorem \ref{TH:0noise_cond}, and for a system to allow for these noise canceling correlations, Theorem \ref{TH:0noise_cond_diag}. Finally, we will prove a result that suggests that, in large neural populations with randomly chosen stimulus response characteristics, these conditions are likely to be satisfied, Theorem \ref{TH:0noise_prob}.

A summary of most frequent notations we use is listed in Table~\ref{T:notation}.

\subsection*{Problem setup}
\label{S:setup}

We will consider populations of neurons that generate noisy responses $\vec{x}$ in response to a stimulus $\vec{s}$.  The responses, $\vec{x}$ -- wherein each component $x_i$ represents one cell's response -- can be considered to be continuous-valued firing rates, discrete spike counts, or binary ``words", wherein each neuron's response is a 1 (``spike") or 0 (``not spike"). The only exception is that, when we consider $I_\mutG$ (discussed below), the responses must be continuous-valued.  We consider arbitrary tuning for the neurons; $\mu_i=\EVb{x_i | \vec{s}}$. For scalar stimuli, this definition of ``tuning" corresponds to the notion of a tuning curve. In the case of more complex stimuli, it is similar to the typical notion of a receptive field. Recall that the signal correlations are determined by the co-variation of the mean responses of pairs of neurons as the stimulus is varied, and thus they are determined by the similarity in the tuning functions.

As for the structure of noise across the population, our analysis allows for the general case in which the noise covariance matrix $C^n_{ij} = \text{cov} \left(x_i, x_j  | \vec{s} \right)$ (superscript $n$ denotes ``noise") depends on the stimulus $\vec{s}$. This generality is particularly interesting given the observations of Poisson-like variability~\cite{Sof+93,Britten93} in neural systems, and that correlations can vary with stimuli~\cite{kohn05,delaRocha:2007go,Cohen:2011eh,Josic:2009du}.  We will assume that the diagonal entries of the conditional covariance matrix -- which describe each cells' variance -- will be fixed, and then ask how coding performance changes as we vary the off-diagonal entries, which describe the covariance between the cell's responses (recall that the noise correlations are the pairwise covariances, divided by the geometric mean of the relevant variances $\rho_{ij}= C^n_{ij}/ \sqrt{C^n_{ii} C^n_{jj}}$).

We quantify the coding performance with the following measures, which are defined more precisely in the Methods Section~\namerefquote{S:info_def}, below. First, we consider the linear Fisher information ($I_\Flin$, Eq.~\eqref{E:Flin_def}), which measures how easy it is to separate the response distributions that result from two similar stimuli, with a linear discriminant. This is equivalent to the quantity used by \cite{Averbeck:2006vj} and \cite{Sompolinsky:2001hh} (where Fisher information reduces to $I_\Flin$). While Fisher information is a measure of \emph{local} coding performance, we are also interested in global measures. 

We will consider two such global measures, the OLE information $I_{\OLE}$ (Eq.~\eqref{E:OLE}) and mutual information for Gaussian stimuli and responses $I_\mutG$ (Eq.~\eqref{E:mutG_def}). $I_\OLE$ quantifies how well the optimal linear estimator (OLE) can recover the stimulus from the neural responses: large  $I_{\OLE}$ corresponds to small mean squared error of OLE and vice versa. For the OLE, there is one set of read-out weights used to estimate the stimulus, and those weights do not change as the stimulus is varied. For contrast, with linear Fisher information, there is generally a different set of weights used for each (small) range of stimuli within which the discrimination is being performed. 

Consequently, in the case of $I_{\OLE}$ and $I_\mutG$, we will be considering the average noise covariance matrix $C^n_{ij} =\cov(x_i,x_j)=\EVb{\text{cov} \left(x_i, x_j  | \vec{s} \right) } $ , where the expectation is taken over the stimulus distribution. Here we overload the notation $C^n$ be the covariance matrix that one chooses during the optimization, which will be either local (conditional covariances at a particular stimulus) or global depending on the information measure we consider.

While $I_{\OLE}$ and $I_{\Flin}$ are concerned with the performance of linear decoders, the mutual information $I_{\mutG}$ between stimuli and responses describes how well the optimal read-out could recover the stimulus from the neural responses, without any assumptions about the form of that decoder. However, we emphasize that our results for $I_{\mutG}$ only apply to jointly Gaussian stimulus and response distributions, which is a less general setting than the conditionally Gaussian cases studied in many places in the literature. An important exception is that Theorem~\ref{TH:boundary} additionally applies to the case of conditionally Gaussian distributions (see discussion in Section~\namerefquote{S:convexity}).

For simplicity, we describe most results for scalar stimulus $s$ if not stated otherwise, but the theory holds for multidimensional stimuli (see Methods Section~\namerefquote{S:info_def}).

\subsection*{The sign rule revisited}
\label{S:sign_rule}

Arguments about pairs of neurons suggest that coding performance is improved -- relative to the case of independent, or trial-shuffled data -- when the noise correlations have the opposite sign from the signal correlations~\cite{Averbeck:2006ew,Abbott:1999ul,Sompolinsky:2001hh,romo03}: we dub this the ``sign rule" (SR). This notion has been explored and demonstrated in many places in the experimental and theoretical literature, and formally established for homogenous positive correlations \cite{Sompolinsky:2001hh}.  However, its applicability in general cases is not yet known.

Here, we formulate this SR property as a theorem without restrictions on homogeneity or population size. 

\begin{restatable}{theorem}{PoN}
\label{TH:p_or_n}
If, for each pair of neurons, the signal and noise correlations have opposite signs, the linear Fisher information is greater than the case of independent noise (trial-shuffled data). In the opposite situation where the signs are the same, the linear Fisher information is decreased compared to the independent case, in a regime of very weak correlations.  Similar results hold for $I_{\OLE}$ and $I_{\mutG}$, with a modified definition of signal correlations given in Section~\textnormal{\namerefquote{S:info_def}}.
\end{restatable}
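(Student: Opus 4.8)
The plan is to show that all three coding measures are, after a reduction, monotone increasing functions of a single quadratic form $Q = v^{T} M^{-1} v$, where $M$ is a positive-definite matrix whose diagonal is held fixed and whose off-diagonal entries are exactly the noise covariances we vary, and then to exploit convexity of $Q$ along the straight-line path from the independent to the correlated covariance. Concretely, for $I_{\Flin}$ I would take $v = \mu' = d\mu/ds$ and $M = C^{n}$, so that $I_{\Flin} = (\mu')^{T}(C^{n})^{-1}\mu'$. For $I_{\OLE}$ and $I_{\mutG}$ I would take $v = \beta$, the vector of stimulus--response covariances $\beta_{i} = \cov(s, x_{i})$, and $M = \Sigma = C^{n} + C^{s}$, the total response covariance, where $C^{s}$ is the (fixed) signal covariance of the mean-response vector $\mu(s)$ across stimuli. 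Both $I_{\OLE}$ and $I_{\mutG}$ are strictly increasing functions of $\beta^{T}\Sigma^{-1}\beta$ (for $I_{\mutG}$ via the Gaussian formula $I_{\mutG} = -\tfrac12\log(1 - \beta^{T}\Sigma^{-1}\beta/\var(s))$), so that maximizing information is equivalent to maximizing $Q$ in every case.

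Next I would fix the base matrix $B$ corresponding to independent (trial-shuffled) noise: $B = \diag(C^{n})$ in the Fisher case, and $B = \diag(C^{n}) + C^{s}$ in the OLE and mutual-information cases. In every case the difference $\Delta := M - B$ is precisely the off-diagonal part of the noise covariance matrix, so $\Delta_{ij}$ carries the sign of the noise correlation $\rho_{ij}$. Define $h(t) = v^{T}(B + t\Delta)^{-1} v$ for $t \in [0,1]$, so that $h(0)$ is the independent case and $h(1)$ the correlated one. Since both $B$ and $B+\Delta$ are valid covariance matrices and the positive-definite cone is convex, $B + t\Delta = (1-t)B + tM$ is positive definite throughout, and $h$ is smooth. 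Differentiating gives $h'(t) = -z(t)^{T}\Delta\, z(t)$ and $h''(t) = 2\,(\Delta z(t))^{T}(B+t\Delta)^{-1}(\Delta z(t)) \ge 0$, where $z(t) = (B+t\Delta)^{-1}v$; thus $h$ is convex, strictly so unless $\Delta z(t)\equiv 0$.

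The sign rule enters through $h'(0) = -y^{T}\Delta y$ with $y = B^{-1}v$: writing $y^{T}\Delta y = \sum_{i \ne j}\Delta_{ij}\,y_{i}y_{j}$, the hypothesis that noise and signal correlations have opposite signs is exactly $\Delta_{ij}\,y_{i}y_{j} \le 0$ for every pair, whence $h'(0) \ge 0$. Convexity then yields $h(1) \ge h(0) + h'(0) \ge h(0)$, i.e. improved coding, with strict inequality except in the degenerate case $\Delta z(t)\equiv 0$. In the same-sign case $h'(0) \le 0$, so $h(1) < h(0)$ for sufficiently weak correlations (rescaling $\Delta$ by a small $\epsilon$ makes the first-order term dominate), giving the claimed local decrease; note that the convexity of $h$ simultaneously explains why this decrease cannot persist for large $\Delta$, consistent with the SR not being necessary. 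As a cross-check, in the Fisher case one can instead flip signs by $S=\diag(\text{sign}(y_{i}))$ to reduce to a Stieltjes (entrywise-nonnegative-inverse) correlation matrix acting on a nonnegative vector, recovering the same inequality.

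The one point requiring care --- which I expect to be the main obstacle --- is identifying the correct ``signal correlation'' in the OLE and mutual-information cases. There the base $B = \diag(C^{n}) + C^{s}$ is \emph{not} diagonal, because trial-shuffling removes noise correlations but leaves the signal covariance intact; consequently the relevant vector in $h'(0)$ is $y = B^{-1}\beta$ rather than $\beta$, and the sign condition reads $\text{sign}(\rho_{ij}) = -\text{sign}(y_{i}y_{j})$. This is the ``modified definition of signal correlations'' referenced in the statement, and it collapses to $\text{sign}(\mu_{i}'\mu_{j}')$ precisely in the Fisher case, where $B$ is diagonal and $y = \diag(C^{n})^{-1}\mu'$. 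Verifying the quadratic-form reductions with this choice --- in particular writing $I_{\mutG}$ in the stated Gaussian form and checking that $C^{s}$ contributes identically to $B$ and $M$ so that $\Delta$ really is the noise off-diagonal --- is the technical heart; the remaining degeneracies (vanishing $y_{i}$, or $\Delta z(t)\equiv 0$) affect only strictness and can be handled separately.
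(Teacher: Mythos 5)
Your proposal is correct and follows essentially the same route as the paper's proof: compute the directional derivative of the relevant quadratic form at the independent-noise point (a signed sum over pairs that is nonnegative under the sign rule), then invoke convexity of the information along the straight-line path in correlation space to upgrade the local statement to a global improvement, with the same weak-correlation caveat in the same-sign case. Your vector $y=(D^n+C^\mu)^{-1}L$ for the $I_{\OLE}$/$I_{\mutG}$ cases is exactly the paper's modified sensitivity vector $A^0$, so the "modified signal correlations" you anticipate coincide with those defined in the paper.
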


\noindent In the case of Fisher information, the signal correlation between two neurons is defined as $\rhosig_{ij}=\frac{\nabla \mu_i \cdot \nabla \mu_j}{\tnorm{ \nabla \mu_i }  \tnorm{ \nabla \mu_j }}$ (Section~\namerefquote{S:info_def}). Here, the derivatives are taken with respect to the stimulus. This definition recalls the notion of the alignment in the change in the neurons' mean responses in, e.g., \cite{Averbeck:2006vj}. It is important to note that this definition for signal correlation is locally defined near a stimulus value; thus, it differs from some other notions of ``signal correlation" in the literature, that quantify how similar the whole tuning curves are for two neurons (see discussion on the alternative $\tilde{\rho}^{\text{sig}}_{ij}$ in Section~\namerefquote{S:info_def}). We choose to define signal correlations for $I_\Flin$, $I_\OLE$ and $I_\mutG$ as described in Section~\namerefquote{S:info_def} to reflect precisely the mechanism behind the examples in \cite{Averbeck:2006ew}, among others.

It is a consequence of Theorem~\ref{TH:p_or_n} that the SR holds pairwise; different pairs of neurons will have different signs of noise correlations, as long as they are consistent with their (pairwise) signal correlations. The result holds as well for heterogenous populations. The essence of our proof of Theorem \ref{TH:p_or_n} is to calculate the gradient of the information function in the space of noise correlations.  We compute this gradient at the point representing the case where the noise is independent. The gradient itself is determined by the signal correlations, and will have a positive dot product with any direction of changing noise correlations that obeys the sign rule. Thus, information is increased by following the sign rule, and the gradient points to (locally) the direction for changing noise correlations that maximally improves the information, for a given strength of correlations. A detailed proof is included in Methods Section~\namerefquote{S:SR_proof}; this includes a formula for the gradient direction (Remark~\ref{R:grad_Flin} in Section~\namerefquote{S:SR_proof}). We have proven the same result for all three of our coding metrics, and for both scalar, and multi-dimensional, stimuli.

Intriguingly, there exists an asymmetry between the result on improving information (above), and the (converse) question of what noise correlations are worst for population coding. As we will show later, the information quantities are convex functions of the noise correlation coefficients (see Fig.~\ref{F:2d_sign}).  As a consequence, performance will keep increasing as one continues to move along a ``good" direction, for example indicated by the SR. This is what one expects when climbing a parabolic landscape in which the second derivative is always nonnegative. The same convexity result indicates that the performance will not decrease monotonically along a ``bad" direction, such as the anti-SR direction. For example, if, while following the anti-SR direction, the system passed by the minimum of the information quantity, then continued increases in correlation magnitude would yield increases in the information. In fact, it is even possible for anti-SR correlations to yield better coding performance than would be achieved with independent noise.  An example of this is shown in Fig.~\ref{F:2d_sign}, where the arrow points in the direction in correlation space predicted by the SR, but performance that is better than with independent noise can also be obtained by choosing noise correlations in the opposite direction.

Thus, the result that anti-SR noise correlations harm coding is only a ``local" result -- near the point of zero correlations --  and therefore requires the assumption of weak correlations. We emphasize that this asymmetry of the SR is intrinsic to the problem, due to the underlying convexity.

One obvious limitation of Theorem \ref{TH:p_or_n} and the ``sign rule" results in general is that they only compare information in the presence of correlated noise with the baseline case of independent noise. This approach does not address the issue of finding the optimal noise correlations, nor does it provide much insight into experimental data that do not obey the SR. Does the sign rule rule describe optimal configurations? What are the properties of the global optima? How should we interpret noise correlations that do not follow the SR? We will address these questions in the following sections.

\subsection*{Optimal correlations lie on boundaries}
\label{S:boundary}

Let us begin by considering a simple example to see what could happen for the optimization problem we described in Section~\namerefquote{S:setup}, when the baseline of comparison is no longer restricted to the case of independent noise. This example is for a population of 3 neurons. In order to better visualize the results, we further require that $C^n_{1,2}=C^n_{1,3}$. Therefore the configurations of correlations is two dimensional. In Fig.~\ref{F:2d_sign}, we plot information $I_\OLE$ as a function of the two free correlation coefficients (in this example the variances are all $C^n_{ii}=1$, thus $C^n_{ij}=\rho_{ij}$). 

First, notice that there is a parabola-shaped region of all attainable correlations (in Fig.~\ref{F:2d_sign}, enclosed by black dashed lines and the upper boundary of the square). The region is determined not only by the entry-wise constraint $ \vert \rho_{i,j} \vert \leq 1$ (the square), but also by a global constraint that the covariance matrix $C^n$ must be positive semidefinite. For linear Fisher information and mutual information for Gaussian distributions, we further assume $C^n\succ 0$ (i.e. $C^n$ is positive definite) so that $I_\Flin$ and $I_\mutG$ remain finite (see also Section~\namerefquote{S:info_def}).  As we will see again below, this important constraint leads to many complex properties of the optimization problem. This constraint can be understood by noting that correlations must be chosen to be ``consistent" with each other and cannot be freely and independently chosen. For example, if $\rho_{1,2} = \rho_{1,3}$ are large and positive, then cells 2 and 3 will be positively correlated -- since they both covary positively with cell 1 -- and $\rho_{2,3}$ may thus not take negative values. In the extreme of $\rho_{1,2} = \rho_{1,3}=1$, $\rho_{2,3}$ is fully determined to be 1. Cases like this are reflecting the corner shape in the upper right of the allowed region in Fig.~\ref{F:2d_sign}. 

The case of independent noise is denoted by a black dot in the middle of Fig.~\ref{F:2d_sign}, and the gradient vector of $I_{\OLE}$ points to a quadrant that is guaranteed to increase information vs. the independent case (Theorem~\ref{TH:p_or_n}). The direction of this gradient satisfies the sign rule, as also guaranteed by Theorem~\ref{TH:p_or_n}. However, the gradient direction and the quadrant of the SR both fail to capture the globally optimal correlations, which are at upper right corner of the allowed region, and indicated by the red triangle. This is typically what happens for larger, and less symmetric populations, as we will demonstrate next.

\begin{figure}[H]
\begin{center}
\includegraphics[width=3.5in]{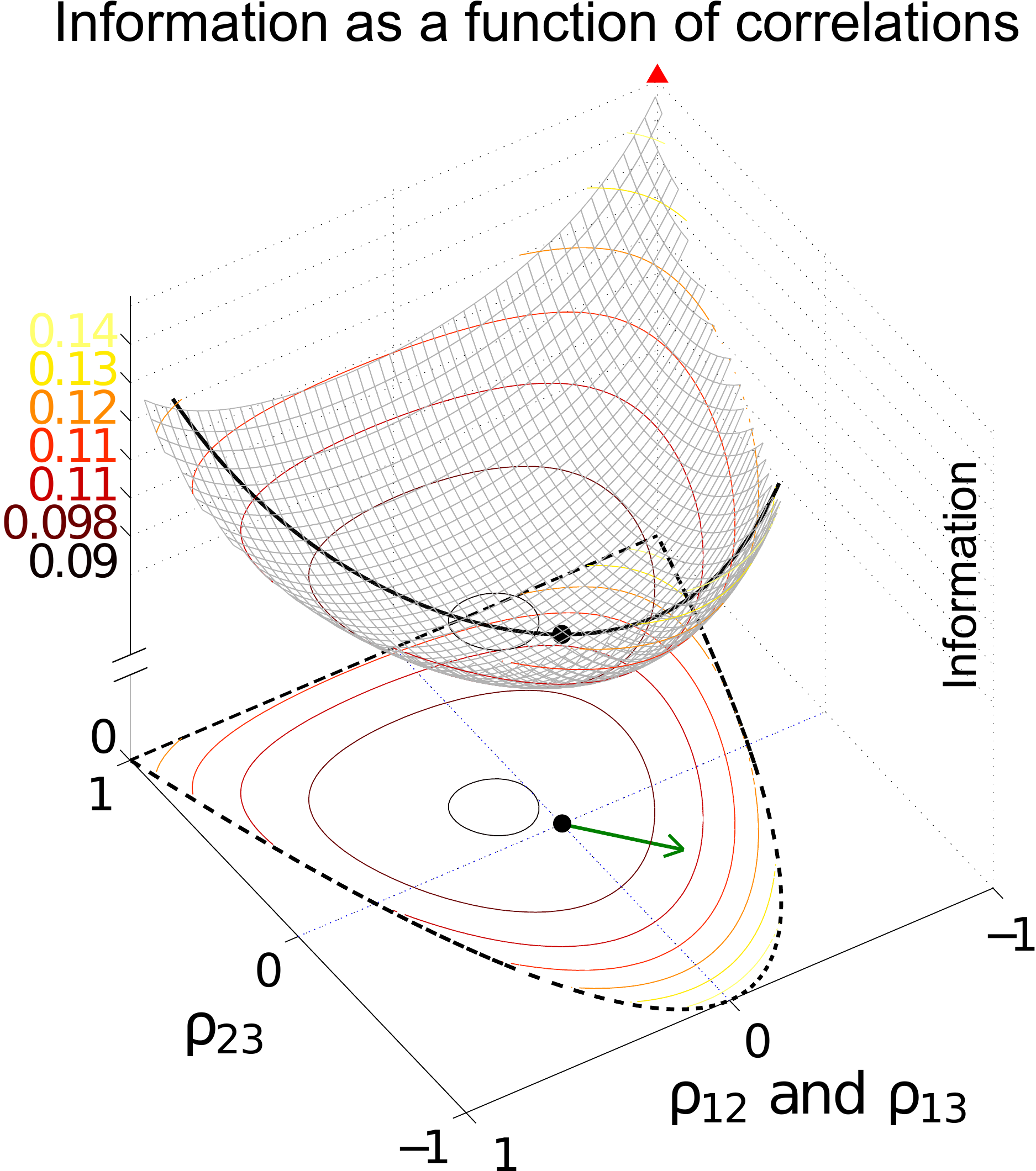}
\end{center}
\caption{ \textbf{The ``sign rule" may fail to identify the globally optimal correlations}. The optimal linear estimator (OLE) information $I_{\OLE}$ (Eq.~\eqref{E:OLE}), which is maximized when the OLE produces minimum-variance signal estimates, is shown as a function of all possible choices of noise correlations (enclosed within the dashed line).  These  values are $C^n_{1,2}=C^n_{1,3}$ (x-axis) and $C^n_{2,3}$ (y-axis) for a 3-neuron population. The bowl shape exemplifies the general fact that $I_\OLE$ is a convex function and thus must attain its maximum on the boundary (Theorem~\ref{TH:boundary}) of the allowed region of noise correlations. The independent noise case and global optimal noise correlations are labeled by a black dot and triangle respectively. The arrow shows the gradient vector of  $I_{\OLE}$, evaluated at zero noise correlations. It points to the quadrant in which noise correlations and signal correlations have opposite signs, as suggested by Theorem \ref{TH:p_or_n}.  Note that this gradient vector, derived from the ``sign rule", does not point towards the global maximum, and actually misses the entire quadrant containing that maximum. This plot is a two-dimensional slice of the cases considered in Fig.~\ref{F:3d_boundary}, while restricting $C^n_{1,2}=C^n_{1,3}$ (see Methods Section~\namerefquote{S:numerical_details} for further parameters).}
\label{F:2d_sign}
\end{figure}

Since the sign rule cannot be relied upon to indicate the global optimum, what other tools do we have at hand?  A key observation, which we prove in the Methods Section~\namerefquote{S:boundary_proof}, is that information is a convex function of the noise correlations (off-diagonal elements of $C^n$). This immediately implies:

\begin{restatable}{theorem}{Boundary}
\label{TH:boundary}
The optimal $C^n$ that maximize information must lie on the boundary of the region of correlations considered in the optimization.
\end{restatable}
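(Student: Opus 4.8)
\emph{Plan.} The statement will follow from two ingredients: the convexity of each information measure in the off-diagonal entries of $C^n$ (established in Section~\namerefquote{S:boundary_proof}), together with the classical maximum principle for convex functions on a convex set. I would first fix notation by letting $\Omega$ denote the feasible region of the optimization: the set of matrices $C^n$ with the prescribed (fixed) diagonal whose off-diagonal entries are allowed, subject to $C^n\succeq 0$ (or $C^n\succ 0$ for $I_\Flin$ and $I_\mutG$) together with any further linear or convex constraints imposed by statistics or biology. The key structural fact is that $\Omega$ is convex: it is the intersection of the positive semidefinite cone (convex) with the affine subspace fixing the diagonal and with the additional convex constraints. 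Writing $f$ for whichever of $I_\Flin$, $I_\OLE$, $I_\mutG$ is under consideration, regarded as a function of the off-diagonal entries ranging over $\Omega$, the task reduces to showing that a convex $f$ attains its maximum over the convex set $\Omega$ on $\partial\Omega$.

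\emph{Main step.} I would establish the boundary principle directly by a one-dimensional restriction argument. Suppose the maximum value $M=\max_{\Omega} f$ is attained at an interior point $x^\ast\in\mathrm{int}(\Omega)$ (if no maximizer is interior there is nothing to prove). Choose any nonzero direction $v$; since $x^\ast$ is interior and $\Omega$ is bounded, the line $t\mapsto x^\ast+tv$ meets $\Omega$ in a closed segment $[a,b]$ with $a<0<b$ whose endpoints lie on $\partial\Omega$. The restriction $g(t)=f(x^\ast+tv)$ is convex on $[a,b]$, so for $t=\lambda a+(1-\lambda)b$ one has $g(t)\le\lambda g(a)+(1-\lambda)g(b)\le\max\{g(a),g(b)\}$; in particular $M=g(0)\le\max\{g(a),g(b)\}$. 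Since $M$ is the global maximum we also have $g(a),g(b)\le M$, hence $\max\{g(a),g(b)\}=M$. Thus at least one endpoint $x^\ast+av$ or $x^\ast+bv$ --- a point of $\partial\Omega$ --- also attains the value $M$. This shows the maximum is always attained at some boundary point; when $f$ is strictly convex along the relevant directions the interior point cannot be a maximizer at all, and every optimizer lies on $\partial\Omega$.

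\emph{Main obstacle.} Granting convexity (the genuine work, carried out in Section~\namerefquote{S:boundary_proof}), the boundary conclusion is elementary; the only point requiring care is \emph{attainment}, i.e.\ compactness of the region over which one optimizes. Boundedness of $\Omega$ is automatic, since $C^n\succeq 0$ with the diagonal fixed forces $\abs{C^n_{ij}}\le\sqrt{C^n_{ii}C^n_{jj}}$. Closedness is the delicate issue: for $I_\OLE$ the function is finite and continuous on the closed semidefinite region, so a maximizer exists and the argument above applies verbatim; but for $I_\Flin$ and $I_\mutG$ we restrict to the open condition $C^n\succ 0$, and these quantities can diverge as $C^n$ approaches the boundary of the semidefinite cone. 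In that case I would either optimize over a compact sub-region (e.g.\ with a fixed lower bound on the smallest eigenvalue) specified by the problem, so that the maximum is attained on $\partial\Omega$, or else interpret the conclusion as a supremum approached on $\partial\Omega$. Notably, the part of the boundary on which $C^n$ becomes singular is exactly where these information measures can blow up, which is precisely the noise-canceling regime analyzed later via Theorem~\ref{TH:0noise_cond}; so this subtlety is not a defect but foreshadows the sharper results to come.
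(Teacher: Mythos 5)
Your proposal proves only the easy half of this theorem and assumes the hard half. The paper's proof of Theorem~\ref{TH:boundary} consists almost entirely of establishing convexity: along any line $C^n(t)=C^n+tB$ it computes
$I_{\OLE}^{\prime\prime}=2\tr\bigl(A^T B\,(C^\mu+C^n)^{-1}B\,A\bigr)\ge 0$ with $A=(C^\mu+C^n)^{-1}L$, using the linear-algebra Lemma~\ref{TH:psd_prod}, and then carries out the analogous (and, for $I_\mutG$, considerably more involved, determinant-based) second-derivative computations for the other two measures; the passage from convexity to ``optima on the boundary'' is treated as immediate. You invert this emphasis: you grant convexity by citing Section~\namerefquote{S:boundary_proof} and elaborate the maximum principle. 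But that section \emph{is} the paper's proof of this theorem --- it is even titled as such --- so the citation is circular; in a self-contained argument you would have to supply the second-derivative computation (or an equivalent convexity argument) yourself. That computation, not the maximum principle, is the mathematical content here.

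There is also a mismatch between what your one-dimensional argument delivers and what the theorem asserts. You show that \emph{some} maximizer lies on $\partial\Omega$, whereas the statement (and the paper's proof) says that \emph{every} maximizer does, apart from a degenerate case. Your fallback to strict convexity does not rescue this: the information measures here are not strictly convex --- any direction $B$ with $BA=0$ gives zero second derivative, and such directions exist for $N\ge 3$ (indeed Theorem~\ref{TH:0noise_cond_diag} exhibits high-dimensional flat sets of optima). The correct completion is a short extension of your own chord argument: if an interior point attains the maximum, convexity forces the function to equal that maximum on every chord through the point, hence to be constant on all of $\Omega$; the paper then shows via Proposition~\ref{TH:convex_0} that constancy forces $BA=0$ for every $B$, hence $A=0$ and $L=0$, which is the trivial exception it allows. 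On the positive side, your attainment analysis --- boundedness of the spectrahedron, continuity of $I_\OLE$ on the closed region (since $C^\mu\succ 0$), and the failure of closedness for $I_\Flin$ and $I_\mutG$ --- is a genuine addition; the paper disposes of this only with a remark that, for $I_\Flin$, the closure of the allowed region is assumed to lie inside the open cone $C^n\succ 0$.
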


As we saw in Fig.~\ref{F:2d_sign}, mathematically feasible noise correlations may not be chosen arbitrarily but are constrained by the fact that the noise covariance matrix be positive semidefinite.  We denote this condition by $C^n \succcurlyeq 0$, and recall that it is equivalent to all of its eigenvalues being non-negative. According to our problem setup, the diagonal elements of $C^n$, which are the individual neurons' response variances, are fixed. It can be shown that this diagonal constraint specifies a linear slice through the cone of all $C^n \succcurlyeq 0$, resulting a bounded convex region in $\mathbb{R}^{N(N-1)/2}$ called a \emph{spectrahedron}, for a population of $N$ neurons. These spectrahedra are the largest possible regions of noise correlation matrices that are physically realizable, and are the set over which we optimize, unless stated otherwise. 

Importantly for biological applications, Theorem \ref{TH:boundary} will continue to apply, when additional constraints define smaller allowed regions of noise correlations within the spectrahedron. These constraints may come from circuit or neuron-level factors. For example, in the case where correlations are driven by common inputs~\cite{Bin+01,delaRocha:2007go}, one could imagine a restriction on the maximal value of any individual correlation value.  In other settings, one might consider a global constraint by restricting the maximum Euclidean norm (2-norm) of the noise correlations (defined in Eq.~\eqref{E:rho_2norm} in Methods).

For a population of $N$ neurons, there are $N(N-1)/2$ possible correlations to consider;  naturally, as $N$ increases, the optimal structure of noise correlations can therefore become more complex.  Thus we illustrate the Theorem above with an example of 3 neurons encoding a scalar stimulus, in which there are 3 noise correlations to vary. In Fig.~\ref{F:3d_boundary}, we demonstrate two different cases, each with distinct $(C^\mu)_{ij}=\cov(\mu_i,\mu_j)$ matrix and vector $L_i=\cov(s,\mu_i)$ (values are given in Methods Section~\namerefquote{S:numerics}).  In the first case, there is a unique optimum (panel {\bf A}, largest information is associated with the lightest color).  In the second case, there are 4 disjoint optima (panel {\bf B}), all of which lie on the boundary of the spectrahedron. 

In the next section, we will build from this example to a more complex one including more neurons.  This will suggest further principles that govern the role of noise correlations in population coding.

\begin{figure}[H]
\begin{center}
\includegraphics[width=2.8in]{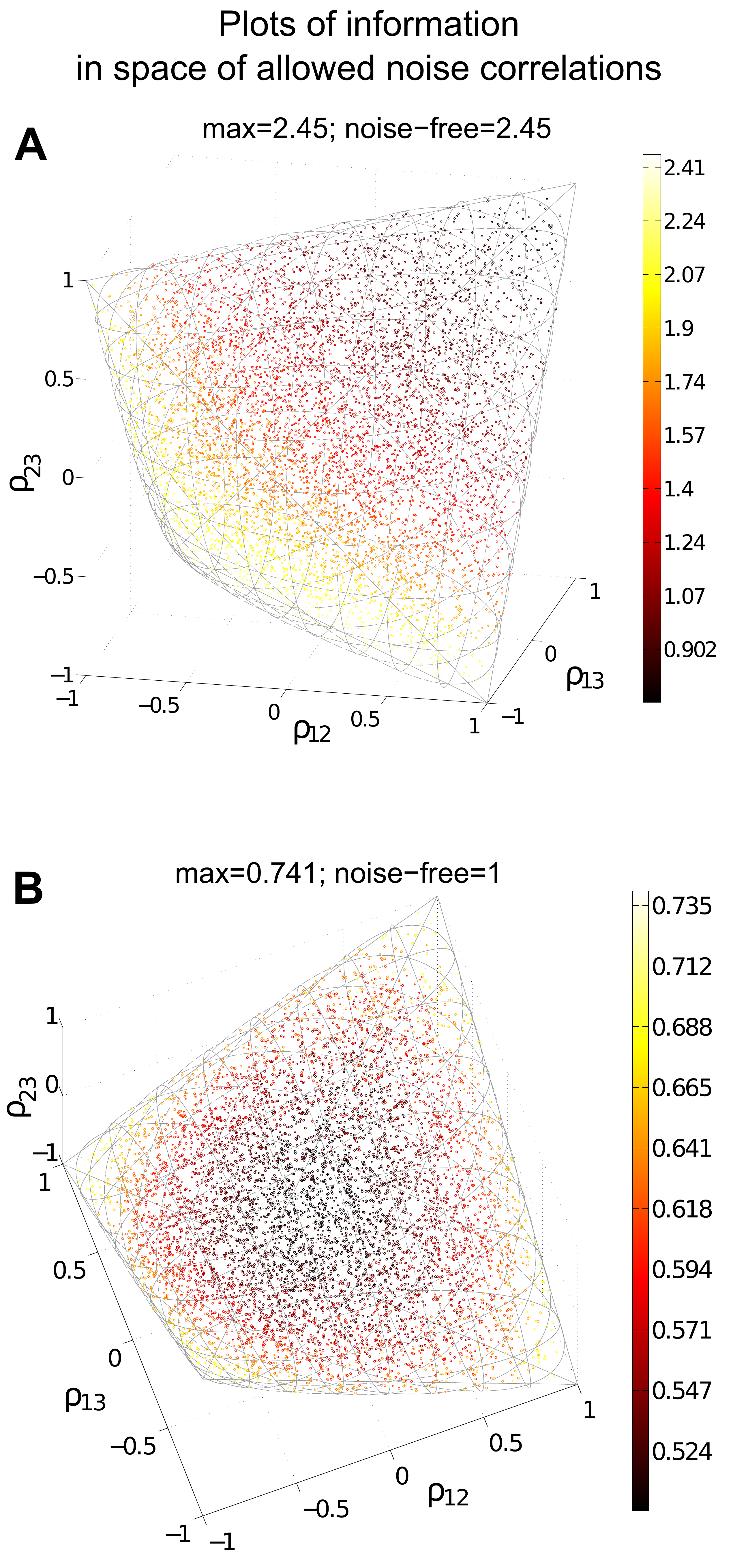}
\end{center}
\caption{\textbf{Optimal coding is obtained on the boundary of the allowed region of noise correlations.} For fixed neuronal responses variances and tuning curves, we compute coding performance -- quantified by $I_{\OLE}$ information values -- for different values of the pair-wise noise correlations. To be physically realizable, the correlation coefficients must form a positive semi-definite matrix. This constraint defines a spectrahedron, or a swelled tetrahedron, for the $N=3$ cells used. The colors of the points represent $I_{\OLE}$ information values. With different parameters $C^\mu$ and $L$ (see values in Methods Section~\namerefquote{S:numerical_details} ), the optimal configuration can appear at different locations, either unique ({\bf A}) or attained at multiple disjoint places ({\bf B}), but always on the boundary of the spectrahedron. In both panels, plot titles give the maximum value of $I_{\OLE}$ attained over the allowed space of noise correlations, and the value of $I_{\OLE}$  that would obtained with the given tuning curves, and perfectly deterministic neural responses. This provides an upper bound on the attainable $I_{\OLE}$ (see text Section~\namerefquote{S:noise_cancel}). Interestingly, in panel (\textbf{A}), the noisy population achieves this upper bound on performance, but this is not the case in (\textbf{B}). Details of parameters used are in Methods Section~\namerefquote{S:numerical_details}. }
\label{F:3d_boundary}
\end{figure}

\subsection*{Heterogeneously tuned neural populations}
\label{S:heter_eg}

We next follow~\cite{wilke02,shamir06,Ecker:2011bx} and study a numerical example of a larger ($N=20$) heterogeneously tuned neural population. The stimulus encoded is the direction of motion, which is described by a 2-D vector $\vec{s} = (\cos(\theta), \sin(\theta))^T$. We used the same parameters and functional form for the shape of tuning curves as in \cite{Ecker:2011bx}, the details of which are provided in Methods Section~\namerefquote{S:numerical_details}. The tuning curve for each neuron was allowed to have randomly chosen width and magnitude, and the trial-to-trial variability was assumed to be Poisson: the variance is equal to the mean. As shown in Fig.~\ref{F:diff_tuning}{\bf A}, under our choice of parameters the neural tuning curves -- and by extension, their responses to the stimuli -- are highly heterogenous. Once again, we quantify coding by $I_\OLE$ (see definition in Section~\namerefquote{S:setup} or Eq.~\eqref{E:OLE} in Methods).  


Our goal with this example is to illustrate two distinct regimes, with different properties of the noise correlations that lead to optimal coding.  In the first regime, which occurs closest to the case of independent noise, the SR determines the optimal correlation structure.  In the second, moving further away from the independent case, the optimal correlations may disobey the SR.  (A related effect was found by~\cite{Ecker:2011bx}; we return to this in the Discussion.)  We accomplish this in a very direct way:  for gradually increasing the (additional) constraint on the Euclidean norm of correlations (Eq.~\eqref{E:rho_2norm} in Methods Section~\namerefquote{S:info_def}), we numerically search for optimal noise correlation matrices and compare them to predictions from the SR.    

In Fig.~\ref{F:diff_tuning}{\bf B} we show the results, comparing the information attained with noise correlations that obey the sign rule with those that are optimized, for a variety of different noise correlation strengths. As they must be, the optimized correlations always produce information values as high as, or higher than, the values obtained with the sign rule.

\begin{figure}[H]
\begin{center}
\includegraphics[width=6in]{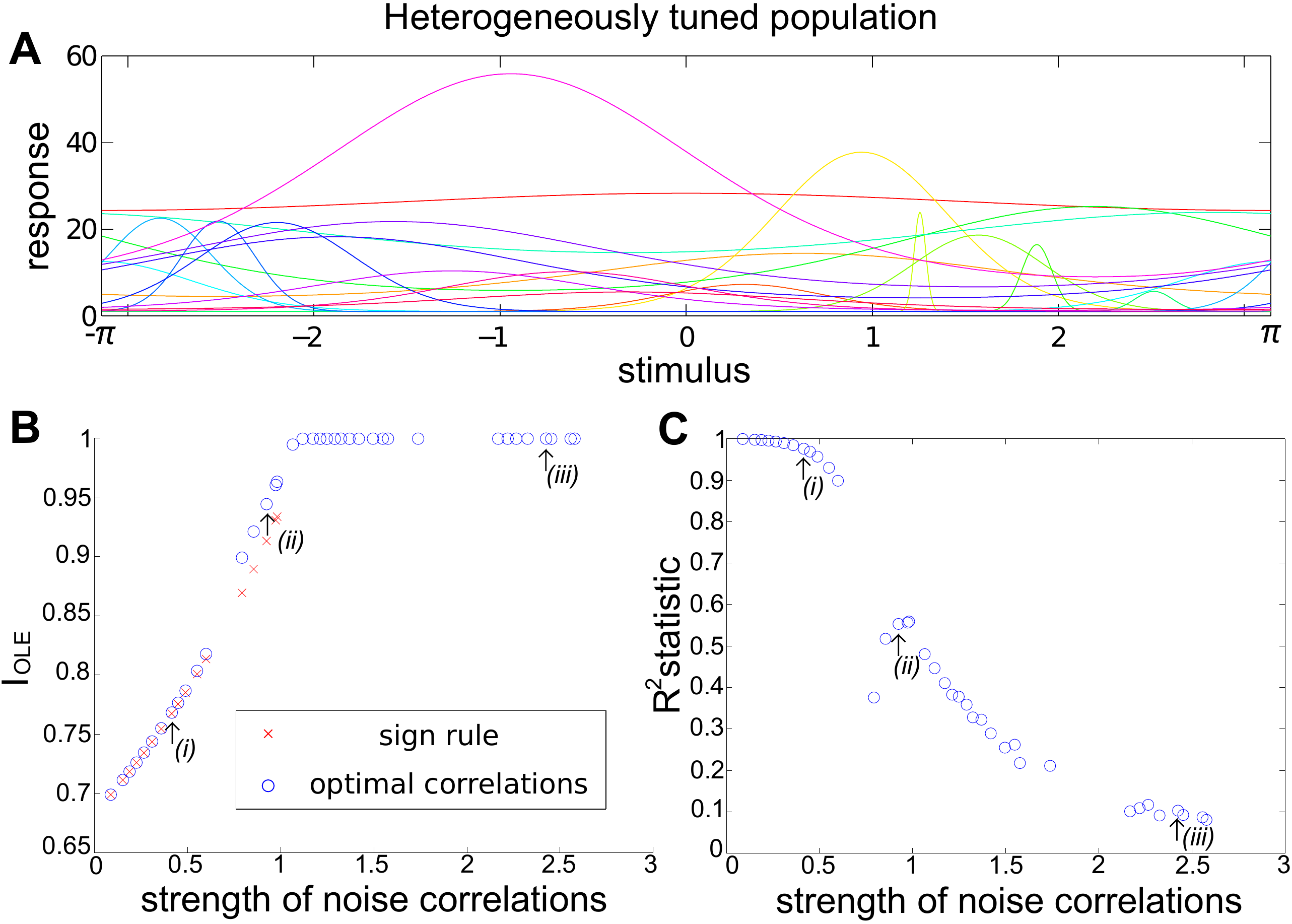}
\end{center}
\caption{\textbf{Heterogeneous neural population and violations of the sign rule with increasing correlation strength.}  We consider signal encoding in a population of 20 neurons, each of which has a different dependence of its mean response on the stimulus (heterogeneous tuning curves shown in {\bf A}).  We optimize the coding performance of this population with respect to the noise correlations, under several different constraints on the magnitude of the allowed noise correlations. Panel ({\bf B}) shows the resultant -- optimal given the constraint -- values of OLE information $I_\OLE$, with different noise correlation strengths (blue circles). The strength of correlations is quantified by the Euclidean norm (Eq.~\eqref{E:rho_2norm}).  For comparison, the red crosses show information obtained for correlations that obey the sign rule (in particular, pointing along the gradient giving greatest information for weak correlations); this information is always less than or equal to the optimum, as it must be.  Note that correlations that follow the sign rule fail to exist for large correlation strengths, as the defining vector points outside of the allowed region (spectrahedron) beyond  a critical length (labeled (ii)).  For correlation strengths beyond this point, distinct optimized noise correlations continue to exist; the information values they obtain eventually saturate at noise-free levels (see text), which is $1$ for the example shown here. This occurs for a wide range of correlation strengths.  Panel ({\bf C}) shows how well these optimized noise correlations are predicted from the corresponding signal correlations (by the sign rule), as quantified by the $R^2$ statistic (between 0 and 1, see Fig.~\ref{F:scatter_C}). For small magnitudes of correlations, the $R^2$ values are high, but these decline when the noise correlations are larger. }
\label{F:diff_tuning}
\end{figure}

In the limit where the correlations are constrained to be small, the optimized correlations agree with the sign rule; an example of these ``local" optimized correlations is shown in Fig.~\ref{F:scatter_C}{\bf ADG}, corresponding to the point labeled  $(i)$ in Fig.~\ref{F:diff_tuning}{\bf BC}. This is predicted by Theorem~\ref{TH:p_or_n}.  In this ``local" region of near-zero noise correlations, we see a linear alignment of signal and noise correlations (Fig.~\ref{F:scatter_C}{\bf D}).  As larger correlation strengths are reached (points $(ii)$ and $(iii)$ in Fig.~\ref{F:diff_tuning}{\bf BC}), we observe a gradual violation of the sign rule for the optimized noise correlations.  This is shown by the gradual loss of the linear relationship between signal and noise correlations in Fig.~\ref{F:diff_tuning}{\bf D} vs.{\bf E} vs.{\bf F}, as quantified by the $R^2$ statistic. Interestingly, this can happen even when the correlation coefficients continue have reasonably small values, and are broadly consistent with the ranges of noise correlations seen in physiology experiments~\cite{Cohen:2011eh,hansen12,Ecker:2011bx}.

The two different regimes of optimized noise correlations arise because, at a certain correlation strength, the correlation strength can no longer be increased along the direction that defines the sign rule without leaving the region of positive semidefinite covariance matrices.  However, correlation matrices still exist that allow for more informative coding with larger correlation strengths. This reflects the geometrical shape of the spectrahedron, wherein the optima may lie in the ``corners", as shown in Fig.~\ref{F:3d_boundary}. For these larger-magnitude correlations, the sign rule no longer describes optimized correlations, as shown with an example of optimized correlations in Fig.~\ref{F:scatter_C}{\bf CF}.

Fig.~\ref{F:scatter_C} illustrates another interesting feature.  There is a diverse set of correlation matrices, with different Euclidean norms beyond the value of (roughly) 1.2, that all achieve the same globally optimal information level. As we see in the next section, this phenomenon is actually typical for large populations, and can be described precisely.

\begin{figure}[H]
\begin{center}
\includegraphics[width=3in]{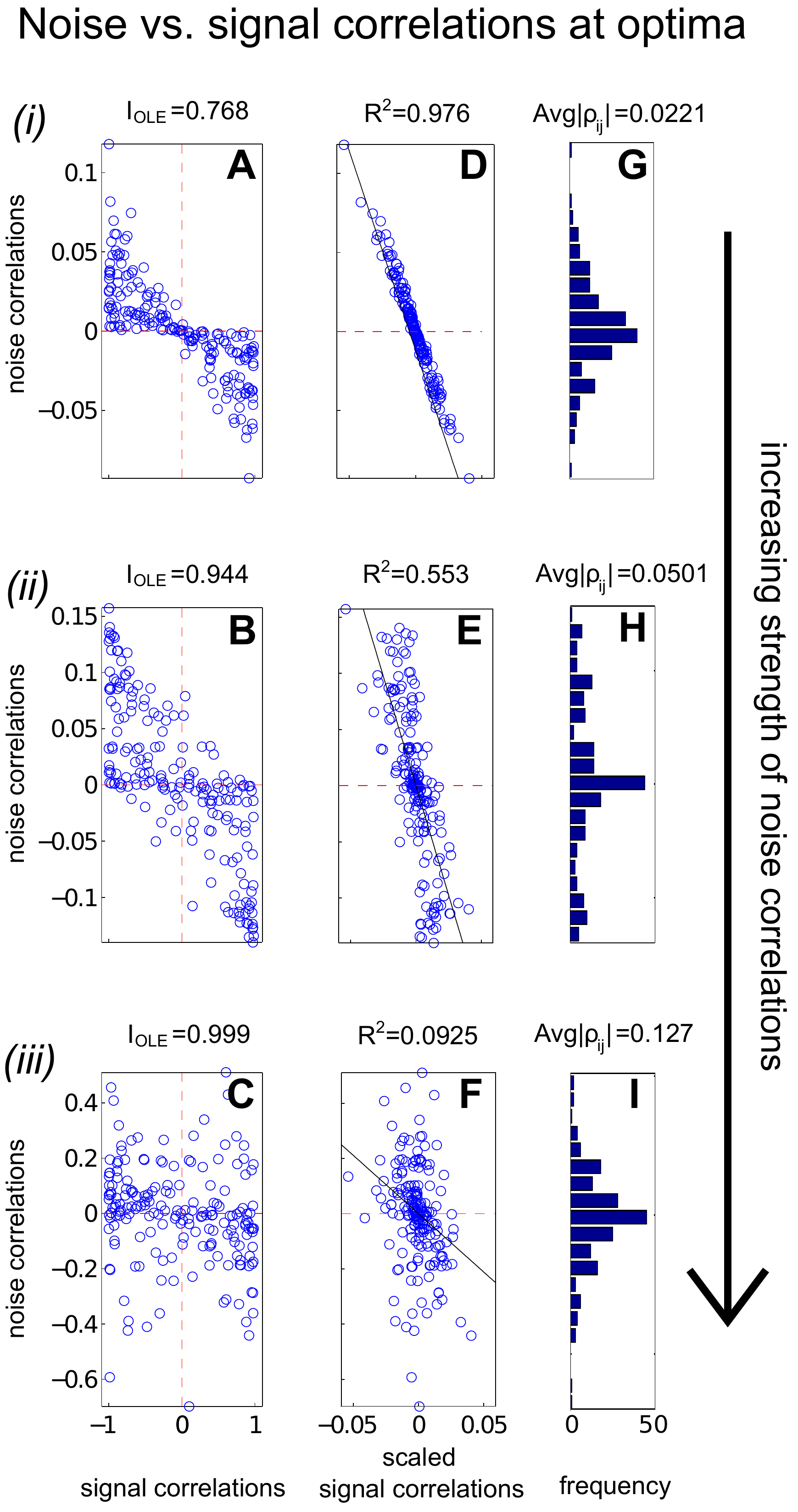}
\end{center}
\caption{\textbf{In our larger neural population, the sign rule governs optimal noise correlations only when these correlations are forced to be very small in magnitude; for stronger correlations, optimized noise correlations have a diverse structure.}  
Here we investigate the structure of the optimized noise correlations obtained in Fig.~\ref{F:diff_tuning}; we do this for three examples with increasing correlation strength, indicated by the labels $(i), \, (ii), \, (iii)$ in that figure.  First ({\bf ABC}) show scatter plots of the noise correlations of the neural pairs, as a function of their signal correlations (defined in Methods Section~\namerefquote{S:info_def}). For each example, we also show ({\bf DEF}) a version of the scatter plot where the signal correlations have been rescaled in a manner discussed in Section~\namerefquote{S:numerics}, that highlights the linear relationship (wherever it exists) between signal and noise correlations.  In both sets of panels, we see the same key effect:  the sign rule is violated as the (Euclidean) strength of noise correlations increases.  In ({\bf ABC}), this is seen by noting the quadrants where the dots are located: the sign rule predicts they should only be in the second and fourth quadrants. In ({\bf DEF}), we quantify agreement with the sign rule by the $R^2$ statistic. Finally, ({\bf GHI}) display histograms of the noise correlations; these are concentrated around 0, with low average values in each case.}
\label{F:scatter_C}
\end{figure}

\subsection*{Noise cancellation}
\label{S:noise_cancel}

For certain choices of tuning curves and noise variances, including the examples in Fig.~\ref{F:3d_boundary}{\bf A} and Section~\namerefquote{S:heter_eg}, we can tell precisely the value of the globally optimized information quantities --- that is, the information levels obtained with optimal noise correlations. For the OLE, this global optimum is the upper bound on $I_{\OLE}$. This is shown formally in Lemma \ref{TH:0noise}, but it simply translates to an intuitive lower bound of the OLE error, similar to the data processing inequality for mutual information. This bound states that the OLE error cannot be smaller than the OLE error when there is no noise in the responses, i.e. when the neurons produce a deterministic response conditioned on the stimulus. This upper bound may --- and often will (Theorem~\ref{TH:0noise_prob}) --- be achievable by populations of noisy neurons.  

Let us first consider an extremely simple example. Consider the case of two neurons with identical tuning curves, so that their responses are $x_i = \mu(s) + n_i$, where $n_i$ is the noise in the response of neuron $i \in \{1,2\}$, and $\mu(s)$ is the same mean response under stimulus $s$. In this case, the ``noise free" coding is when $n_1= n_2\equiv 0$ on all trials, and the inference accuracy is determined by the shape of the tuning curve $\mu(s)$ (whether or not it is invertible, for example). Now let us consider the case where the noise in the neurons' responses is non-zero but perfectly anti correlated, so that $n_1 = - n_2$ on all trials. We can then choose the read-out as $(x_1 + x_2)/2 = \mu(s)$ to cancel the noise and achieve the same coding accuracy as the ``noise free" case.

The preceding example shows that, at least in some cases, one can choose noise correlations in such a way that a linear decoder achieves ``noise-free" performance. 
One is naturally left to wonder whether this observation applies more generally.

First, we state the conditions on the {noise covariance matrices} under which the noise-free coding performance is obtained. We will then identify the conditions on {parameters of the problem}, i.e. the tuning curves (or receptive fields) and noise variances, under which this condition can be satisfied. Recall that the OLE is based on a fixed (across stimuli) linear readout coefficient vector $A$ defined in Eq.~\eqref{E:A_def}

\begin{restatable}{theorem}{ZeroNoiseCond}
\label{TH:0noise_cond}
A covariance matrix $C^n$ attains the noise-free bound for OLE information (and hence is optimal), if and only if $C^n A=C^n  (C^\mu)^{-1}L=0$. Here $L$ is the cross-covariance between the stimuli responses (Eq.~\eqref{E:def_L}), $C^\mu$ is the covariance of the mean response (Eq.~\eqref{E:def_Gamma_Cmu}), and $A$ is the linear readout vector for OLE, which is the same as in the noise-free case --- that is, $A=(C^n+C^\mu)^{-1}L=(C^\mu)^{-1}L$ --- when the condition is satisfied.
\end{restatable}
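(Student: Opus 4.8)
The plan is to work directly with the closed-form OLE error and to exploit the positive-(semi)definiteness of the covariance matrices involved. Write $B = C^\mu$ and $M = C^n + C^\mu$ for the noise-free and noisy response covariances, so that the OLE readout is $A = M^{-1}L$ and its mean-squared error equals $\var(s) - L^T M^{-1} L$. By Lemma~\ref{TH:0noise} the noise-free value $\var(s) - L^T B^{-1} L$ is a lower bound on this error; equivalently the information gap
\[
\Delta \;=\; L^T B^{-1} L - L^T M^{-1} L
\]
is always nonnegative, and a given $C^n$ attains the noise-free bound for $I_\OLE$ precisely when $\Delta = 0$. Since $I_\OLE$ is a strictly decreasing function of the OLE error, the theorem reduces to the single equivalence $\Delta = 0 \iff C^n A = 0$.

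The first step is the resolvent identity $B^{-1} - M^{-1} = B^{-1}(M-B)M^{-1} = B^{-1} C^n M^{-1}$, which is valid because we assume $C^\mu \succ 0$ and hence $B$ is invertible. Substituting $L = MA$ and then $M = B + C^n$, I would regroup the gap into a manifestly positive-semidefinite form:
\begin{align*}
\Delta
&= L^T(B^{-1}-M^{-1})L = L^T B^{-1} C^n M^{-1} L = A^T M B^{-1} C^n A \\
&= A^T(B+C^n)B^{-1}C^n A = A^T C^n A + (C^n A)^T B^{-1}(C^n A).
\end{align*}
Both summands are nonnegative: the first because $C^n \succcurlyeq 0$, the second because $B^{-1} = (C^\mu)^{-1} \succ 0$.

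With the gap written this way the equivalence follows at once. If $C^n A = 0$ then both terms vanish and $\Delta = 0$. Conversely, if $\Delta = 0$ then in particular the second summand $(C^n A)^T B^{-1}(C^n A)$ vanishes, and since $(C^\mu)^{-1}$ is positive \emph{definite} this forces $C^n A = 0$. To recover the precise statement $C^n A = C^n (C^\mu)^{-1} L = 0$ and the claim that the optimal readout coincides with the noise-free one, I would finally observe that $MA = L$ together with $C^n A = 0$ gives $BA = (M - C^n)A = L$, hence $A = B^{-1}L = (C^\mu)^{-1}L$; conversely $A = B^{-1}L = M^{-1}L$ yields $(M-B)A = C^n A = 0$.

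I expect the main obstacle to be the converse direction. The implication $C^n A = 0 \Rightarrow$ (bound attained) is a one-line substitution, but showing that attaining the bound \emph{forces} $C^n A = 0$ requires exhibiting the decomposition above and invoking the \emph{strict} positivity of $(C^\mu)^{-1}$; the weaker inequality $\Delta \ge 0$, which already follows from $M \succcurlyeq B$, is not by itself enough to conclude, so the algebraic regrouping that isolates the term $(C^n A)^T B^{-1}(C^n A)$ is the real crux. One should also keep careful track of the two covariance conventions flagged in the setup (the diagonal of $C^n$ is held fixed, and for $I_\OLE$ the stimulus-averaged covariance is used), but these do not affect the linear-algebraic core of the argument.
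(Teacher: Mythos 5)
Your proof is correct, and it takes a genuinely different route from the paper's. The paper proves sufficiency via the algebraic Lemma~\ref{TH:A_B}, but proves necessity by a calculus argument: it applies the mean value theorem to $t \mapsto I_{\OLE}(tC^n)$ on $[0,1]$, finds $t_1$ where the derivative $-\tr\left(L^T(C^\mu+t_1C^n)^{-1}C^n(C^\mu+t_1C^n)^{-1}L\right)$ vanishes, invokes Lemma~\ref{TH:psd_prod} to conclude $C^n(C^\mu+t_1C^n)^{-1}L=0$, and then transfers this back to $C^n(C^\mu)^{-1}L=0$ via Lemma~\ref{TH:A_B}. You instead establish a single exact identity for the information gap,
\begin{equation*}
\Delta \;=\; I_{\OLE}(0)-I_{\OLE}(C^n) \;=\; A^T C^n A \;+\; (C^n A)^T (C^\mu)^{-1} (C^n A),
\end{equation*}
from which both directions are immediate: each summand is nonnegative, and strict positive definiteness of $(C^\mu)^{-1}$ forces $C^nA=0$ whenever $\Delta=0$. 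What your route buys: it is purely linear-algebraic (no parametrized family, no mean value theorem), and the identity is strictly more informative than the bare equivalence --- it quantifies the deficit from noise-free performance for an arbitrary $C^n$, and its first term $A^TC^nA$ is precisely the conditional variance of the OLE, so the paper's remark that all residual error is bias when the bound is attained becomes an immediate corollary. What the paper's route buys: it recycles machinery (the derivative formula, Lemma~\ref{TH:psd_prod}, Lemma~\ref{TH:A_B}) already needed for Theorem~\ref{TH:boundary} and Proposition~\ref{TH:convex_0}, so its marginal cost is low and the same template carries over to $I_{\Flin}$ and $I_{\mutG}$.

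Two small points to tighten. First, you write everything for scalar $s$; for multidimensional stimuli ($L$ a matrix) wrap each term in $\tr(\cdot)$, and the necessity step then needs ``$\tr(G^TFG)=0$ with $F \succ 0$ implies $G=0$,'' which is exactly the paper's Lemma~\ref{TH:psd_prod}. Second, your closing ``conversely'' clause starts from the assumption $A=(C^\mu)^{-1}L=M^{-1}L$ rather than from the hypothesis $C^n(C^\mu)^{-1}L=0$ itself; to close that loop you need the one-line observation $(C^\mu+C^n)(C^\mu)^{-1}L = L + C^n(C^\mu)^{-1}L = L$ (the paper's Lemma~\ref{TH:A_B}), which is implicit in your algebra but should be stated.
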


We note that when the condition is satisfied, the conditional variance of the OLE is $A^T C^n A=0$. This indicates that all the error comes from the bias, if we as usual write the mean square error (for scalar $s$) in two parts, $\EVb{(\hat{s}-s)^2}=\EVb{\var(\hat{s} | s)}+\var(\EVb{\hat{s}|s}-s)$. The condition obtained here can also be interpreted as ``signal/readout being orthogonal to the noise." While this perspective gives useful intuition about the result, we find that other ideas are more useful for constructing proofs of  this and other results. We discuss this issue more thoroughly in Section~\namerefquote{S:eigenvector_discuss}.

In general, this condition may not be satisfied by some choices of pairwise correlations. The above theorem implies that, given the tuning curves, the issue of whether or not such ``noise free" coding is achievable will be determined only by the relative magnitude, or heterogeneity,  of the noise variances for each neuron -- the diagonal entries of $C^n$. The following theorem outlines precisely the conditions under which such ``noise-free" coding performance is possible, a condition that can be easily checked for given parameters of a model system, or for experimental data.  

\begin{restatable}{theorem}{ZeroNoiseCondDiag}
\label{TH:0noise_cond_diag}
For scalar stimulus, let $q_i=\sqrt{A_i^2 C^n_{ii}}$, $i=1,\cdots, N$, where $A=(C^\mu)^{-1}L$ is the readout vector for OLE in the noise-free case. Noise correlations may be chosen so that coding performance matches that which could be achieved in the absence of noise if and only if
\beq
\label{E:max_cond}
\max \{q_i\} \leq \frac{1}{2} \sum_{i=1}^N q_i.
\eeq
When ``$<$" is satisfied, all optimal correlations attaining the maximum form a $\frac{N(N-3)}{2}$ dimensional convex set on the boundary of the spectrahedron. When ``$=$" is attained, the dimension of that set is $\frac{N_0(N_0+1)}{2}$, where $N_0$ is the number of zeros in $\{q_i\}$.  
\end{restatable}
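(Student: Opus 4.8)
The plan is to reduce the statement, via Theorem~\ref{TH:0noise_cond}, to a purely geometric existence question about positive semidefinite matrices with a prescribed diagonal, and then to recognize that question as a classical ``polygon closure'' problem. By Theorem~\ref{TH:0noise_cond}, a covariance $C^n$ attains the noise-free OLE bound if and only if $C^n A = 0$ with $A=(C^\mu)^{-1}L$. Hence the assertion ``noise correlations may be chosen to match the noise-free performance'' is equivalent to the existence of a symmetric $C^n \succcurlyeq 0$ having the prescribed diagonal entries $C^n_{ii}$ and satisfying $A \in \ker C^n$, and the optimal set is exactly $\{\,C^n \succcurlyeq 0 : \diag C^n \text{ fixed},\; C^n A = 0\,\}$. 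This set is convex, being the intersection of the PSD cone with an affine subspace, which already accounts for the ``convex set'' claim.

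The key reformulation I would use is the Gram-matrix picture. Write any $C^n \succcurlyeq 0$ as $C^n_{ij}=\langle \bfv_i,\bfv_j\rangle$ with $\norm{\bfv_i}=\sqrt{C^n_{ii}}$ fixed. Then $C^n A = 0$ is equivalent to $\sum_i A_i \bfv_i = 0$: indeed $\langle \bfv_j, \sum_i A_i\bfv_i\rangle = (C^n A)_j$, so $C^n A=0$ forces $\sum_i A_i\bfv_i$ to be orthogonal to every $\bfv_j$, hence to itself, hence zero, and the converse is immediate. Setting $\bfw_i = A_i \bfv_i$ produces vectors with $\norm{\bfw_i}=\abs{A_i}\sqrt{C^n_{ii}}=q_i$, and the condition becomes $\sum_i \bfw_i = 0$. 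The existence question is thus: do there exist vectors of prescribed lengths $q_i$ summing to zero? Such a closed configuration exists if and only if the longest length is at most the sum of the others, i.e. $\max_i q_i \le \tfrac12\sum_i q_i$, which is Eq.~\eqref{E:max_cond}. Necessity is the triangle inequality applied to $\bfw_{i^\ast}=-\sum_{j\ne i^\ast}\bfw_j$; sufficiency follows from an explicit planar construction. This settles the ``if and only if'' part.

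For the dimension in the strict case $\max_i q_i < \tfrac12\sum_i q_i$, I would work directly in the space of symmetric matrices with the given fixed diagonal, of dimension $\tfrac{N(N-1)}{2}$. The conditions $C^n A=0$ impose $N$ linear constraints on the off-diagonal entries whose coefficient array is a weighted vertex--edge incidence matrix of the complete graph $K_N$; a short kernel computation shows this array has full rank $N$ (for $N\ge 3$, using that $K_N$ is nonbipartite). Hence the affine slice has dimension $\tfrac{N(N-1)}{2}-N=\tfrac{N(N-3)}{2}$. Strict inequality lets one realize the $\bfw_i$ non-degenerately (e.g.\ in the plane with nonzero enclosed area, then perturb), giving a relative-interior feasible point, so the PSD intersection is full-dimensional in this slice, of dimension $\tfrac{N(N-3)}{2}$. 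For the equality case, take $q_1=\max_i q_i=\sum_{j\ge 2}q_j$; equality in the triangle inequality forces $\bfw_2,\dots,\bfw_N$ to be parallel and $\bfw_1$ antiparallel, so all $\bfv_i$ with $A_i\ne 0$ are collinear and the corresponding block of $C^n$ (indices with $q_i>0$) is rigidly fixed. The residual freedom sits only in the $N_0$ directions $\bfv_k$ with $q_k=0$ (equivalently $A_k=0$), whose optimality constraint $\langle \bfv_k,\sum_i A_i\bfv_i\rangle=0$ is automatic; together with the common line direction $\hat u$ these $N_0+1$ vectors form a Gram problem with fixed norms, and the free off-diagonal entries of $C^n$ — the $\binom{N_0}{2}$ couplings among the $q_k=0$ indices plus the $N_0$ couplings of each to the collinear block — range over a full-dimensional spectrahedron of dimension $\binom{N_0}{2}+N_0=\tfrac{N_0(N_0+1)}{2}$.

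I expect the main obstacle to be the equality-case dimension count: one must argue carefully that the equality case of the triangle inequality rigidly pins down the entire ``positive'' block of $C^n$, and then correctly bookkeep the remaining freedom, checking that the map from the $(N_0{+}1)$-vector Gram data to the free entries of $C^n$ is linear and injective and that the associated spectrahedron is genuinely full-dimensional. The strict case requires the subsidiary verification that the $N$ constraints $C^n A=0$ are independent (full incidence rank) and that a relative-interior point exists; these are more routine but still need the nondegenerate planar realization to be exhibited.
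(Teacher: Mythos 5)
Your overall architecture is the same as the paper's: reduction via Theorem~\ref{TH:0noise_cond}; reformulation of $C^nA=0$ as a closed-polygon condition $\sum_i \bfw_i=0$ with $\tnorm{\bfw_i}=q_i$ (your Gram-vector step is exactly the paper's factorization $DC^nD=B^TB$, just written more directly); the triangle-inequality equivalence; the constraint-rank count giving the affine slice dimension $\frac{N(N-1)}{2}-N=\frac{N(N-3)}{2}$; and the collinearity/rigidity analysis in the equality case, which matches the paper's block reduction. All of that is sound.

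The genuine gap is in the strict case, at the step ``realize the $\bfw_i$ non-degenerately (e.g.\ in the plane with nonzero enclosed area, then perturb), giving a relative-interior feasible point.'' A planar realization has Gram rank $2$, while the maximal rank of a feasible matrix is $N-1$ (its kernel must contain $A$); for $N\geq 4$ a rank-$2$ point therefore lies on the relative \emph{boundary} of the feasible set, not in its relative interior, no matter how large the enclosed area. Concretely, if $C_0$ has rank $2$, then $\ker C_0\cap A^{\perp}$ has dimension $N-3>0$, and for a generic direction $P$ in the affine slice ($PA=0$, $\diag P=0$) the quadratic form $z^TPz$ takes both signs on that subspace, so $C_0+\epsilon P$ exits the PSD cone for one sign of $\epsilon$; you cannot conclude that a full $\frac{N(N-3)}{2}$-dimensional neighborhood is feasible. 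What the argument requires is a feasible point of rank exactly $N-1$ --- a closed configuration with side lengths $q_i$ whose vectors span $\mathbb{R}^{N-1}$ --- since then positive definiteness on $A^{\perp}$ makes every sufficiently small perturbation within the slice feasible. The existence of such a spanning realization under strict inequality is precisely the content of the paper's Lemma~\ref{TH:q_to_B}, proved by an explicit induction (triangle base case, then splitting one side in two while adding a dimension); your ``then perturb'' hides exactly this construction, and passing from a rank-$2$ polygon to a rank-$(N-1)$ one while preserving closure and all side lengths is not a small-perturbation statement. Your closing remark that one only needs ``the nondegenerate planar realization to be exhibited'' confirms the misreading: planarity is never enough once $N\geq 4$. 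A secondary, fixable point: the rank-$N$ claim for the system $(C^n)'A=0$ needs at least three nonzero $A_i$ (guaranteed by strict inequality in Eq.~\eqref{E:max_cond}); when some weights vanish, the bare ``nonbipartite $K_N$ incidence'' fact does not apply verbatim, which is why the paper performs the elimination by hand using three nonzero entries of $A$.
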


We pause to make three observations about this Theorem.  First, the set of optimal correlations, when it occurs, is high-dimensional.  This bears out the notion that there are many different, highly diverse noise correlation structures that all give the same (optimal) level of the information metrics.  Second, and more technically, we note that the (convex) set of optimal correlations is flat (contained in a hyperplane of its dimension), as viewed in the higher dimensional space $\mathbb{R}^{\frac{N(N-1)}{2}}$. A third intriguing implication of the theorem is that when noise-cancellation is possible, all optimal correlations are connected, as the set is convex (any two points are connected by a linear segment that also in the set), and thus the case of disjoint optima as in Fig.~\ref{F:3d_boundary}{\bf B} will never happen when optimal coding achieves noise-free levels. Indeed, in Fig.~\ref{F:3d_boundary}{\bf B}, the noise-free bound is not attained. 

The high dimension of the convex set of noise-canceling correlations explains the diversity of optimal correlations seen in Fig.~\ref{F:diff_tuning}{\bf B} (i.e., with different Euclidean norms). Such a property is nontrivial from a geometric point of view. One may conclude prematurely that the dimension result is obvious if one considers algebraically the number of free variables and constraints in the condition of Theorem~\ref{TH:0noise_cond}. This argument would give the dimension of the resulting linear space. However, as shown in the proof, there is another nontrivial step to show that the linear space has some finite part that also satisfies the positive semidefinite constraint. Otherwise, many dimensions may shrink to zero in size, as happens at the corner of the spectrahedron, resulting in a small dimension.

The optimization problem can be thought of as finding the level set of information function associated with as large as possible value while still intersecting with the spectrahedron. The level sets are collections of all points where the information takes the same value.  These form high dimensional surfaces, and contain each other, much as layers of an onion. Here these surfaces are also guaranteed to be convex as the information function itself is. Next, note from Fig.~\ref{F:3d_boundary} that we have already seen that the spectrahedron has sharp corners. Combining this with our view of the level sets, one might guess that the set of optimal solutions --- i.e. the intersection --- should be very low dimensional. Such intuition is often used in mathematics and computer science, e.g. with regards to the sparsity promoting tendency of L1 optimization. The high dimensionality shown by our theorem therefore reflects a nontrivial relationship between the shape of the spectrahedron and the level sets of the information quantities.

Although our theorem only characterizes the abundance of the set of \emph{exact} optimal noise correlations, it is not hard to imagine the same, if not more, abundance should also hold for correlations that approximately achieve the maximal information level. This is indeed what we see in numerical examples.  For example, note the long, curved level-set curves in Fig.~\ref{F:2d_sign} near the boundaries of the allowed region.  Along these lines lie many different noise correlation matrices that all achieve the same nearly-optimal values of $I_{\OLE}$.  The same is true of the many dots in Fig.~\ref{F:3d_boundary}{\bf A} that all share a similar ``bright" color corresponding to large $I_{\OLE}$.

One may worry that the noise cancellation discussed above is rarely achievable, and thus somewhat spurious. The following theorem suggests that the opposite is true.  In particular, it gives one simple condition under which the noise cancellation phenomenon, and resultant high-dimensional sets of optimal noise correlation matrices, will almost surely be possible in large neural populations.  

\begin{restatable}{theorem}{ZeroNoiseProb}
\label{TH:0noise_prob}
If the $\{q_i\}$ defined in Theorem \ref{TH:0noise_cond_diag} are independent and identically distributed (i.i.d.) as a random variable $X$ on $[0, \infty)$ with $0<\EVb{X}<\infty$, then the probability
\beq
P(\text{the inequality of Eq.\eqref{E:max_cond} is satisfied}) \rightarrow 1 \text{, as } N \rightarrow \infty. 
\eeq
\end{restatable}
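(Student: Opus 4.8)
The plan is to exploit a scale mismatch between the two sides of Eq.~\eqref{E:max_cond}: under the finite-mean hypothesis the sum $S_N := \sum_{i=1}^N q_i$ grows linearly in $N$, whereas the maximum $M_N := \max_i q_i$ grows only sublinearly, so the inequality $M_N \le \tfrac12 S_N$ should hold with probability tending to $1$. First I would rewrite the condition transparently: writing $S_N = M_N + R_N$, where $R_N$ is the sum of the $N-1$ non-maximal terms, the inequality $M_N \le \tfrac12 S_N$ is equivalent to $M_N \le R_N$, i.e. the largest $q_i$ is no bigger than the sum of all the others. (This is exactly the condition that $N$ segments of lengths $q_i$ can be closed up into a polygon, which fits the geometric flavor of Theorem~\ref{TH:0noise_cond_diag}.) It therefore suffices to establish $P(M_N \le \tfrac12 S_N) \to 1$.

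Next I would dispose of the sum by the weak law of large numbers. Since the $q_i$ are i.i.d.\ copies of $X$ with $0 < \mu := \EVb{X} < \infty$, we have $S_N/N \to \mu$ in probability, so $P(S_N > \tfrac{\mu}{2} N) \to 1$. The real work is to control the maximum, and this is the main obstacle: I would show $M_N/N \to 0$ in probability, i.e.\ that $M_N$ is $o(N)$. By a union bound, for any fixed $\eps > 0$ we have $P(M_N > \eps N) \le N\, P(X > \eps N)$, so it is enough to prove $N\, P(X > \eps N) \to 0$.

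This is the one place where the full force of $\EVb{X} < \infty$ is required --- Markov's inequality alone only gives $N\,P(X>\eps N) \le \mu/\eps$, which is bounded but does not vanish. Instead I would invoke the sharper tail estimate $u\, P(X > u) \to 0$ as $u \to \infty$, which follows from $\mu = \int_0^\infty P(X > t)\,dt < \infty$: since $P(X>t)$ is nonincreasing, $u\,P(X > 2u) \le \int_u^\infty P(X > t)\,dt \to 0$, and substituting $v = 2u$ gives $v\, P(X > v) \to 0$. Writing $N\, P(X > \eps N) = \eps^{-1} (\eps N)\, P(X > \eps N)$ and letting $N \to \infty$ then yields $N\, P(X > \eps N) \to 0$, hence $M_N / N \to 0$ in probability and in particular $P(M_N < \tfrac{\mu}{4} N) \to 1$.

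Finally I would combine the two estimates. On the intersection of the events $\{S_N > \tfrac{\mu}{2} N\}$ and $\{M_N < \tfrac{\mu}{4} N\}$ --- each of probability tending to $1$, so their intersection does too by a union bound on complements --- we get $M_N < \tfrac{\mu}{4} N = \tfrac12 \cdot \tfrac{\mu}{2} N < \tfrac12 S_N$, which is precisely Eq.~\eqref{E:max_cond}. Therefore $P(\text{Eq.}~\eqref{E:max_cond}\text{ holds}) \to 1$. If an almost-sure statement is preferred, the same tail bound gives $\sum_N P(X > \eps N) \le \eps^{-1} \mu < \infty$, so by Borel--Cantelli $M_N/N \to 0$ almost surely, and together with the strong law $S_N/N \to \mu$ the event then holds for all sufficiently large $N$ on a set of probability one.
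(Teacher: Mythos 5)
Your proof is correct, and its overall architecture is the same as the paper's: intersect a law-of-large-numbers event for the sum with a smallness event for the maximum, and note that the intersection of two events of probability tending to one also has probability tending to one (your union bound on complements is the paper's $P(A\cap B)\ge P(A)+P(B)-1$). The only genuine difference is in how the maximum is controlled. The paper works with the exact distribution of $\max_i q_i$, writing $P\left(\max_i q_i > \tfrac{N}{4}\EVb{X}\right)=\int_{N\EVb{X}/4}^{\infty} N F^{N-1}(x)\,dF(x)$, bounding $N\le 4x/\EVb{X}$ and $F^{N-1}\le 1$ on the integration domain, and then invoking dominated convergence to send the truncated first moment $\int_{N\EVb{X}/4}^{\infty}x\,dF(x)$ to zero. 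You instead use the union bound $P(\max_i q_i>\eps N)\le N\,P(X>\eps N)$ together with the tail lemma $u\,P(X>u)\to 0$, which you derive from $\EVb{X}=\int_0^\infty P(X>t)\,dt<\infty$ and monotonicity of the survival function. The two routes rest on equivalent consequences of finite mean (indeed $u\,P(X>u)\le\int_u^\infty x\,dF(x)$, so the paper's estimate implies your lemma), but yours is arguably the more elementary: it avoids both the order-statistic density and dominated convergence, and your explicit remark that plain Markov gives only a bounded, non-vanishing estimate correctly identifies where the full strength of integrability enters. Your closing Borel--Cantelli remark giving the almost-sure version is a correct bonus beyond what the paper claims, though the step from ``only finitely many $i$ satisfy $q_i>\eps i$'' to $\max_{i\le N} q_i/N\to 0$ deserves one more line (split the maximum over $i<i_0$ and $i\ge i_0$, then intersect over $\eps=1/k$).
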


\noindent In actual populations, the $q_i$ might not be well described as i.i.d..  However, we believe that the conditions of the inequality of Eq.\eqref{E:max_cond} are still likely to be satisfied, as the contrary seems to require one neuron with highly outlying tuning and noise variance value (a few comparable outliers won't necessary violate the condition, as their magnitudes will enter on the right hand side of the condition, thus the condition only breaks with a single ``outlier of outliers").

\section*{Discussion}

\subsection*{Summary}

In this paper, we considered a general mathematical setup in which we investigated how coding performance changes as noise correlations are varied. Our setup made no assumptions about the shapes (or heterogeneity) of the neural tuning curves (or receptive fields), or the variances in the neural responses. Thus, our results -- which we summarize below -- provide general insights into the problem of population coding. These are as follows:

\bei
\item We proved that the {\it sign rule} --- if signal and noise correlations have opposite signs, then the presence of noise correlations will improve encoded information vs. the independent case --- holds for any neural population. In particular, we showed that this holds for three different metrics of encoded information, and for arbitrary tuning curves and levels of heterogeneity. Furthermore, we showed that, in the limit of weak correlations, the sign rule predicts the optimal structure of noise correlations for improving encoded information. 
\item  However, as also found in the literature (see below), the sign rule is not a necessary condition for good coding performance to be obtained. We observed that there will typically be a diverse family of correlation matrices that yield good coding performance, and these will often violate the sign rule.
\item There is significantly more structure to the relationship between noise correlations and encoded information than that given by the sign rule alone.  The information metrics we considered are all {\it convex} functions with respect to the entries in the noise correlation matrix.  Thus, we proved that the optimal correlation structures must lie on boundaries of any allowed set. These boundaries could come from mathematical constraints -- all covariance matrices must be positive semidefinite -- or mechanistic/biophysical ones.
\item Moreover, boundaries containing optimal noise correlations have several important properties.  First, they typically contain correlation matrices that lead to the same high coding fidelity that one could obtain in the absence of noise.  Second, when this occurs there is a high-dimensional set of different correlation matrices that all yield the same high coding fidelity -- and many of these matrices strongly violate the sign rule.  

\item Finally, for reasonably large neural populations, we showed that both the noise-free, and more general SR-violating optimal, correlation structures emerge while the average noise correlations remain quite low --- with values comparable to some reports in the experimental literature.
\eei

\subsection*{Convexity of information measures}
\label{S:convexity}

Convexity of information with respect to noise correlations arises conceptually throughout the paper, and specifically in Theorem~\ref{TH:boundary}. We have shown that such convexity holds for all three particular measures of information studied above ($I_{\Flin}$, $I_{\OLE}$, and $I_{\mutG}$).  Here, we show that these observations may reflect a property intrinsic to the concept of information, so that our results could apply more generally.

It is well known that mutual information is convex with respect to conditional distributions.  Specifically, consider two random variables (or vectors) $\vec{x},\vec{y}$, each with conditional distribution $\vec{x}|\vec{s} \sim  p_1(\vec{x} | \vec{s})$ and $\vec{y}|\vec{s} \sim  p_2(\vec{y} | \vec{s})$ (with respect the random ``stimulus'' variable(s) $\vec{s}$). Suppose another variable $\vec{z}$ has a conditional distribution given by a nonnegative linear combination of the two, $\vec{z}|\vec{s}\sim  p(\vec{z}|\vec{s})= \alpha p_1(\vec{z}|\vec{s}) + (1-\alpha)p_2(\vec{z}|\vec{s})$, $\alpha\in [0,1]$. The mutual information must satisfy $I(\vec{z},\vec{s})\leq  \alpha I(\vec{x},\vec{s})+ (1-\alpha)I(\vec{y},\vec{s})$. Notably, this fact can be proved using only the axiomatic properties of mutual information (the chain rule for conditional information and nonnegativity) \cite{Cover:2006ub}.

It is easy to see how this convexity in conditional distributions is related to the convexity in noise correlations we use. To do this, we further assume that the two conditional means are the same, $\EVb{\vec{x} |\vec{s}}=\EVb{\vec{y} | \vec{s}}$, and let $\vec{x}, \vec{y}$ be random vectors. Introduce an auxiliary Bernoulli random variable $T$ that is independent of $\vec{s}$, with probability $\alpha$ of being 1. The variable $\vec{z}$ can then be explicitly constructed using $T$: for any $\vec{s}$, draw $\vec{z}$ according to $p_1(\vec{z}|\vec{s})$ if $T=1$ and according to $p_2(\vec{z}|\vec{s})$ otherwise. Using the law of total covariance, the covariance (conditioned on $\vec{s}$) between the $i,j$ elements of $\vec{z}$ is
\beqrn
\cov_{\vec{s}}(z_i,z_j)&=&\EVb {\cov_{\vec{s}}(z_i,z_j | T)}+\cov_{\vec{s}}(\mathbf{E}_{\vec{s}}\{ z_i | T\}, \mathbf{E}_{\vec{s}}\{ z_j | T\})\\
&=& \alpha \cov_{\vec{s}}(z_i, z_j | T=1) +(1-\alpha) \cov_{\vec{s}}(z_i, z_j |T=0 )+ 0\\
&=&\alpha \cov_{\vec{s}}(x_i, x_j) +(1-\alpha) \cov_{\vec{s}}( y_i,y_j).
\eeqrn
This shows that the noise covariances are expressed accordingly as linear combinations. If the information depends only on covariances (besides the fixed means), as for the three measures we consider, the two notions of convexity become equivalent. A direct corollary of this argument is that the convexity result of Theorem~\ref{TH:boundary} also holds in the case of mutual information for conditionally Gaussian distributions (i.e., such that $\vec{x}$ given $\vec{s}$ is Gaussian distributed).

\subsection*{Sensitivity and robustness of the impact of correlations on encoded information}
\label{S:robust}
One obvious concern about our results, especially those related to the ``noise-free" coding performance, is that this performance may not be robust to small perturbations in the covariance matrix -- and thus, for example, real biological systems might be unable to exploit noise correlations in signal coding. This issue was recently highlighted, in particular, by~\cite{Beck:2013}.  

At first, concerns about robustness might appear to be alleviated by our observation that there is typically a large set of possible correlation structures that all yield similar (optimal) coding performance (Theorem~\ref{TH:0noise_cond_diag}). However, if the correlation matrix was perturbed along a direction orthogonal to the level set of the information quantity at hand, this could still lead to arbitrary changes in information. To address this matter directly, we explicitly calculated the following upper bound for the sensitivity of information, or \emph{condition number} $\kappa$ with respect (sufficiently small) perturbations. The condition number $\kappa$ is defined as the ratio of relative change in the function to that in its variables. For example, the condition number corresponding to perturbing $C^n$ is the smallest number $\kappa_{\Flin : C^n}$ that satisfies $\frac{ | \Delta I_{F,lin} |}{|I_{F,lin} |}\leq \kappa_{\Flin:  C^n} \frac{ \norm{ \Delta C^n}}{ \norm{ C^n}}$. Similarly one can define condition number $\kappa_{\Flin: \nabla \mu}$ for perturbing the tuning of neurons $\nabla \mu$.

\begin{restatable}{proposition}{CondCn}
\label{TH:cond_Cn}
The local condition number of $I_{\Flin}$ under perturbations of $C^n$ (where magnitude is quantified by 2-norm) is bounded by 
\beq
\label{E:cond_Cn}
\kappa_{\Flin : C^n}\leq
2 \kappa_2(C^n):= 2\tnorm{ (C^n)^{-1}}\cdot  \tnorm{ C^n}=\frac{2\lambda_{\max}}{\lambda_{\min} },
\eeq
where $\lambda_{\max}$ and $\lambda_{\min}$ are the largest and smallest eigenvalue of $C^n$ respectively. Here $\kappa_2$ is the condition number with respect to the 2-norm, as defined in the above equation.

Similarly, the condition number for perturbing of $\nabla \mu$ is bounded by
 \beq
 \label{E:cond_Cn_L}
 \kappa_{\Flin : \nabla \mu}\leq
 3 \sqrt{\kappa_2(C^n) K}\frac{ \max_i\tnorm{ (\nabla \mu)_{\cdot,i} }}{\min_i \tnorm{ (\nabla \mu)_{\cdot,i} }},
 \eeq
 where $(\nabla \mu)_{\cdot,i}$ is the $i$-th column of $\nabla \mu$ and assume $(\nabla \mu)_{\cdot,i}\neq 0$ for all $i$. Here $K$ is the dimension of the stimulus $s$.
\end{restatable}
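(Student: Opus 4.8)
The plan is to treat $I_{\Flin}=(\nabla\mu)^T (C^n)^{-1}(\nabla\mu)$ --- a scalar when the stimulus dimension $K=1$, a $K\times K$ positive semidefinite matrix in general --- and to carry out a first-order perturbation analysis in each argument separately, bounding the resulting relative change in the operator $2$-norm. For the dependence on $C^n$ I would start from the exact inverse-perturbation identity
\[
(C^n+\Delta C^n)^{-1}-(C^n)^{-1}=-(C^n)^{-1}\,\Delta C^n\,(C^n+\Delta C^n)^{-1},
\]
so that $\Delta I_{\Flin}=-(\nabla\mu)^T(C^n)^{-1}\Delta C^n\,(C^n+\Delta C^n)^{-1}(\nabla\mu)$ holds exactly. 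This is the cleanest route because it keeps the bound honest for finite (if small) perturbations rather than merely to leading order.

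Next I would bound $\tnorm{\Delta I_{\Flin}}$ by submultiplicativity of the operator $2$-norm. The one place a genuine constant enters is the Neumann estimate $\tnorm{(C^n+\Delta C^n)^{-1}}\le 2\tnorm{(C^n)^{-1}}$, valid once $\tnorm{\Delta C^n}\le \tfrac12\lambda_{\min}$; this is the source of the ``$2$'' and the precise meaning of ``sufficiently small'' / local. To convert the absolute bound into the relative bound stated, I would cancel the $\nabla\mu$ factors against $I_{\Flin}$ itself. In the scalar case this is a one-line Rayleigh-quotient computation: writing $\nabla\mu$ in the eigenbasis of $C^n$ shows that $\tnorm{(C^n)^{-1}\nabla\mu}^2\big/\big((\nabla\mu)^T(C^n)^{-1}\nabla\mu\big)$ is a weighted average of the $1/\lambda_i$, hence $\le 1/\lambda_{\min}=\tnorm{(C^n)^{-1}}$; multiplying through gives $|\Delta I_{\Flin}|/|I_{\Flin}|\le 2\tnorm{(C^n)^{-1}}\tnorm{C^n}\cdot\tnorm{\Delta C^n}/\tnorm{C^n}=2\kappa_2(C^n)\,\tnorm{\Delta C^n}/\tnorm{C^n}$. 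For $K>1$ the same cancellation follows from the matrix inequality $I_{\Flin}\succeq \lambda_{\max}^{-1}(\nabla\mu)^T(\nabla\mu)$ (using $(C^n)^{-1}\succeq \lambda_{\max}^{-1}\bfI$), which lower-bounds $\tnorm{I_{\Flin}}$ by $\tnorm{\nabla\mu}^2/\lambda_{\max}$ and again leaves only $\kappa_2(C^n)$.

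For the dependence on $\nabla\mu$ I would perturb $\nabla\mu\to\nabla\mu+\Delta$, giving $\Delta I_{\Flin}=\Delta^T(C^n)^{-1}(\nabla\mu)+(\nabla\mu)^T(C^n)^{-1}\Delta+O(\tnorm{\Delta}^2)$: the two symmetric first-order terms contribute a factor $2$, and bounding the quadratic remainder $\Delta^T(C^n)^{-1}\Delta$ over the small-perturbation regime absorbs into the clean constant $3$. The $\sqrt{K}$ and the ratio $\max_i\tnorm{(\nabla\mu)_{\cdot,i}}/\min_i\tnorm{(\nabla\mu)_{\cdot,i}}$ enter when I translate between the operator $2$-norm used for $\Delta I_{\Flin}$ and the column structure of the $N\times K$ matrix $\nabla\mu$: the lower bound on $\tnorm{I_{\Flin}}$ is controlled by the weakest stimulus direction, i.e.\ by $\min_i\tnorm{(\nabla\mu)_{\cdot,i}}$, the numerator by $\max_i\tnorm{(\nabla\mu)_{\cdot,i}}$, and $\sqrt K$ converts a Frobenius-type sum over the $K$ columns (via $\tnorm{\nabla\mu}\le\sqrt K\,\max_i\tnorm{(\nabla\mu)_{\cdot,i}}$) into the operator norm.

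I expect the main obstacle to be precisely this multidimensional ($K>1$), matrix-valued step: obtaining a clean, tight lower bound on $\tnorm{I_{\Flin}}$ so that the $\nabla\mu$-factors cancel and only $\kappa_2(C^n)$, the column-norm ratio, and $\sqrt K$ survive. The scalar case reduces to the weighted-average inequality above and is routine; the difficulty lies entirely in handling $I_{\Flin}$ as a $K\times K$ matrix and bookkeeping the norms of the rectangular matrix $\nabla\mu$ without losing more than the stated constants. A secondary technical point is to state cleanly the ``sufficiently small'' regime ($\tnorm{\Delta C^n}\le\tfrac12\lambda_{\min}$) under which the Neumann estimate, and hence the factor $2$, is valid.
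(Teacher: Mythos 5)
Your scalar ($K=1$) arguments are sound and essentially reproduce the paper's machinery: the inverse-perturbation identity, the Neumann estimate supplying the factor $2$ in the regime $\tnorm{\Delta C^n}\le\tfrac12\lambda_{\min}$, the $2+1=3$ split for the $\nabla\mu$-perturbation, and the coupled Rayleigh-quotient cancellation $\tnorm{(C^n)^{-1}\nabla\mu}^2\le\tnorm{(C^n)^{-1}}\,\bigl((\nabla\mu)^T(C^n)^{-1}\nabla\mu\bigr)$ are exactly the content of the paper's Lemmas. The genuine gap is in the step you yourself flag as the main obstacle, the case $K>1$, and the mechanism you propose there fails. You replace the coupled cancellation by two \emph{decoupled} bounds: a lower bound on the denominator, $\tnorm{I_{\Flin}}\ge\tnorm{\nabla\mu}^2/\lambda_{\max}$, and a separate upper bound on the numerator. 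But once decoupled, the only bound available on the numerator is via $\tnorm{(C^n)^{-1}\nabla\mu}\le\tnorm{\nabla\mu}/\lambda_{\min}$, which gives
\beq
\frac{\tnorm{\Delta I_{\Flin}}}{\tnorm{I_{\Flin}}}
\;\le\;
\frac{2\,\tnorm{\Delta C^n}\,\tnorm{\nabla\mu}^2/\lambda_{\min}^2}{\tnorm{\nabla\mu}^2/\lambda_{\max}}
\;=\;
2\,\frac{\lambda_{\max}}{\lambda_{\min}^2}\,\tnorm{\Delta C^n}
\;=\;
2\,\kappa_2^2(C^n)\,\frac{\tnorm{\Delta C^n}}{\tnorm{C^n}},
\eeq
i.e.\ $2\kappa_2^2$ rather than the claimed $2\kappa_2$; the analogous decoupled computation for the $\nabla\mu$-perturbation produces a factor $\kappa_2$ where Eq.~\eqref{E:cond_Cn_L} requires $\sqrt{\kappa_2}$. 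Whenever $C^n$ is not perfectly conditioned these are strictly weaker statements, so your route does not establish the Proposition; the assertion that it ``again leaves only $\kappa_2(C^n)$'' is incorrect. A secondary mismatch: the paper's scalar $I_{\Flin}$ for $K>1$ is $\tr\bigl(\nabla\mu^T(C^n)^{-1}\nabla\mu\bigr)$, not the operator norm of that $K\times K$ matrix, so it is the relative change of the \emph{trace} that must be controlled.

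The repair --- and this is where the paper's proof genuinely differs from your plan --- is to never decouple. The paper applies the scalar, coupled lemma once per stimulus coordinate, with $l=(\nabla\mu)_{\cdot,i}$: this yields, \emph{uniformly} in $i$, a per-column relative bound $\abs{e_i^T\Delta I_{\Flin}e_i}\le c\;e_i^T I_{\Flin}e_i$, with $c=2\kappa_2\tnorm{\Delta C^n}/\tnorm{C^n}$ for the $C^n$-perturbation and $c=3\sqrt{\kappa_2}\,\tnorm{\nu e_i}/\tnorm{(\nabla\mu)_{\cdot,i}}\le 3\sqrt{\kappa_2}\,\tnorm{\nu}/\min_i\tnorm{(\nabla\mu)_{\cdot,i}}$ for a perturbation $\nu$ of $\nabla\mu$. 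Because the relative constant is the same for every column, summing over $i$ lets the trace inherit it unchanged (a mediant-inequality step); this is also exactly where $\sqrt{K}$ and the column-norm ratio enter, namely in converting $\tnorm{\nu}/\min_i\tnorm{(\nabla\mu)_{\cdot,i}}$ into a multiple of $\tnorm{\nu}/\tnorm{\nabla\mu}$ via $\tnorm{\nabla\mu}\le\norm{\nabla\mu}_F\le\sqrt{K}\max_i\tnorm{(\nabla\mu)_{\cdot,i}}$ --- not, as you propose, via a lower bound on $\tnorm{I_{\Flin}}$. (If you insist on an operator-norm formulation, the coupling can be retained instead through $\nabla\mu^T(C^n)^{-2}\nabla\mu\preccurlyeq\lambda_{\min}^{-1}\,\nabla\mu^T(C^n)^{-1}\nabla\mu$ and $\tnorm{(C^n)^{-1/2}\nabla\mu}^2=\tnorm{I_{\Flin}}$; either way, it is the coupling between the perturbed form and $I_{\Flin}$ itself, not the matrix framing, that cannot be given up.)
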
 
Though stated for $I_{\Flin}$, same results also hold for $I_\OLE$ when replacing $C^n$ by $C^\mu+C^n$ in Eq.~\eqref{E:cond_Cn} and \eqref{E:cond_Cn_L}. We believe that a similar property is possible to derive for mutual information $I_\mutG$, but that the expression could be quite cumbersome; we do not pursue this further here.

To interpret this Proposition, we make the following observations which explain when the sensitivity or condition numbers will (or will not) be themselves reasonable in size, for given noise correlations $C^n$. 
In our setup, the diagonal of $C^n$ (or $C^\mu+C^n$ for OLE) is fixed, and therefore $\lambda_{\max}$ is bounded (Gershgorin circle theorem). As long as $C^n$ (or $C^\mu+C^n$) is not close to singular, the information should therefore be robust, i.e. with a reasonably bounded condition number. For OLE, as $C^\mu+C^n \succcurlyeq C^\mu$, we always have a universal bound of $\kappa$ determined only by $C^\mu$. For the linear Fisher information, however, nearly singular $C^n$ can more typically occur near optimal solutions; in these cases, the condition numbers will be very large.

\subsection*{Relationship to previous work}
Much prior work has investigated the relationship between noise correlations and the fidelity of signal coding~\cite{Zohary:1994ei,Averbeck:2006ew,Abbott:1999ul,Ecker:2011bx, shamir06, Sompolinsky:2001hh, Cohen:2011eh,wilke02,Josic:2009du,Averbeck:2006vj,romo03,daSilveira:2013vf}. Two aspects of our current work complement and generalize those studies.

The first are our results on the sign rule (Section~\namerefquote{S:sign_rule}).  Here, we find that, if each cell pair has noise correlations that have the opposite sign vs. their signal correlations, the encoded information is always improved, and that, at least in the case of weak noise correlations, noise correlations that have the same sign as the signal correlations will diminish encoded information. This effect was observed by~\cite{Zohary:1994ei} for neural populations with identically tuned cells.  Since the tuning was identical in their work, all signal correlations were positive. Thus, their observation that positive noise correlations diminish encoded information is consistent with the SR results described above. 

Relaxing the assumption of identical tuning, several studies followed~\cite{Zohary:1994ei} that used cell populations with tuning that differed from cell to cell, but maintained some homogeneous structure -- i.e., identically shaped, and evenly spaced (along the stimulus axis) tuning curves, e.g., \cite{Averbeck:2006ew,Abbott:1999ul}.  The models that were investigated then assumed that the noise correlation between each cell pair was a decaying function of the displacement between the cells' tuning curve peaks. The amplitude of the correlation function -- which determines the maximal correlation over all cell pairs, attained for ``nearby" cells -- was the independent variable in the numerical experiments. Recall that these nearby (in tuning-curve space) cells, with overlapping tuning curves, will have positive signal correlations.  These authors found that positive signs of noise correlations diminished encoded information, while negative noise correlations enhanced it.  This is once again broadly consistent with the sign rule, at least for nearby cells which have the strongest correlation.  Finally, we note that \cite{Sompolinsky:2001hh, Averbeck:2006ew,Roudi_Latham} give a crisp geometrical interpretation of the sign rule in the case of $N=2$ cells. 

At the same time, experiments typically show noise correlations that are stronger for cell pairs with higher signal correlations~\cite{Zohary:1994ei, Cohen:2011eh, kohn05}, which is certainly not in keeping with the sign rule. This underscores the need for new theoretical insights. To this effect, we demonstrated that, while noise correlations that obey the sign rule are guaranteed to improve encoded information relative to the independent case, this improvement can also occur for a diverse range of correlation structures that violate it.  (Recall the asymmetry of our findings for the sign rule:  noise correlations that violate the sign rule are only guaranteed to diminish encoded information if those noise correlations are very weak).  

This finding is anticipated by the work of \cite{Ecker:2011bx, shamir06,daSilveira:2013vf}, who used elegant analytical and numerical studies to reveal improvements in coding performance in cases where the sign rule was violated.  They studied heterogeneous neural populations, with, for example, different maximal firing rates for different neurons.  In particular, these authors show how heterogeneity can simultaneously improve the accuracy and capacity of stimulus encoding ~\cite{daSilveira:2013vf}, or can create coding subspaces that are nearly orthogonal to directions of noise covariance~\cite{Ecker:2011bx, shamir06}.   Taken together, these studies show that the same noise correlation structure discussed above -- with nearby cells correlated -- could lead to improved population coding, so long as the noise correlations are sufficiently strong.  \cite{Ecker:2011bx} also demonstrated that the magnitude of correlations needed to satisfy the ``sufficiently strong" condition decreases as the population size increases, and that in the large $N$ limit, certain coding properties become invariant to the structure of noise correlations. Overall, these findings agree with our observations about a large diversity of SR-violating noise correlation structures that improve encoded information. 

One final study requires its own discussion. Whereas the current study (and those discussed above) investigated how coding relates to noise correlations with no concerns for the biophysical origin of those correlations, \cite{tkacik2010} studied a semi-mechanistic model in which noise correlations were generated by inter-neuronal coupling. They observed that coupling that generates anti-SR correlations is beneficial for population coding when the noise level is very high, but that at low noise levels, the optimal population would follow the SR. Understanding why different mechanistic models can display different trends in their noise correlations is important, and we are currently investigating that issue.

\subsection*{The geometry of the covariance matrix}
\label{S:eigenvector_discuss}

One geometrical, and intuitively helpful, way to think about problems involving noise correlations is to ask when the noise is ``orthogonal to the signal": in these cases, the noise can be separated from or be orthogonal to the signal, and high coding performance is obtained.  This geometrical view is equally valid for the cases we study (e.g., the conditions we derive in Theorem~\ref{TH:0noise_cond}), and is implicit in the diagrams in Figure~\ref{F:tuning}.  To make the approach explicit, one could perform an eigenvector analysis on the covariance matrices at hand, where quantities like linear Fisher information are rewritten as a sum of projections of the tuning vector to the eigen-basis of the covariance matrix, weighted by the appropriate eigenvalues.  

This invites the question of whether a simpler way to obtain the results in our paper wouldn't be to consider how covariance eigenvectors and eigenvalues could be manipulated more directly. For example, if one could simply ``rotate" the eigenvectors of the covariance matrix out of the signal direction -- or shrink the eigenvalues in that direction -- one would necessarily improve coding performance. So why don't we simply do this when exploring spaces of covariance matrices? The reason is that these eigenvalue and eigenvector manipulations are not as easy to enact as they might at first sound (to us, and possibly to the reader). Recall that we asked how noise correlations affect coding subject to the specific constraint that the noise variance of each neuron is fixed, which translates in general to rather complex constraints on the eigenvalues and eigenvectors. For example, the eigenvalues of a fixed-diagonal covariance matrix  cannot be \emph{equivalently} described by simply having a fixed sum (which is a necessary condition for the diagonals to be constant, but is not a sufficient one). These facts limit the insights that a direct approach to adjusting eigenvalues and eigenvectors can have for our problem, and emphasize the non-trivial nature of our results.

An exception comes, for example, in special cases when the covariance matrix has a circulant structure, and consequently always has the Fourier basis for eigenvectors. These cases include many situations considered in the literature~\cite{Sompolinsky:2001hh,Ecker:2011bx}.  For contrast, the covariance matrices we studied were allowed to change freely, as long as the diagonals remained fixed.

\subsection*{Limitations and extensions}
We have developed a rich picture of how correlated noise impacts population coding.  For our results on noise cancellation in particular, this was done by allowing noise correlations to be chosen from the largest mathematically possible space (i.e., the entire spectrahedron). This describes the fundamental structure of the problem at hand, but are conclusions derived in this way important for biology? It is not hard to imagine many biological constraints that may further limit the range of possible noise correlations (e.g., limits on the strength of recurrent connections or shared inputs). On the one hand, the likelihood that the underlying phenomena could be found in biological systems seems increased by the fact that many different correlation matrices will suffice for noise free coding and that, as we discuss in Proposition~\ref{TH:cond_Cn}, information levels appear to have some robustness under perturbations of the underlying correlation matrices. 

However, care must still be taken in interpreting what we mean by ``noise free."  As emphasized by, e.g., ~\cite{Beck:2012vm,Ecker:2011bx}, noise upstream from the neural population in question can never be removed in subsequent processing. Therefore, the ``noise free" bound we discuss in Lemma~\ref{TH:0noise} should not allow for a higher information level than that determined by this upstream noise.  In some cases, this fact could lead to a consistency requirement on either the set of signal correlations $C^\mu$, the set of allowed noise correlations $C^n$, or both. To specify these constraints and avoid possible over-interpretations of the abstract coding model as we study, one could combine a explicit mechanistic model with the present approach.  

On another note, we have asked what noise correlations allow for linear decoders to best recover the stimulus from the set of neural population responses. At the same time, there is reason to be wary of linear decoders~\cite{Shamir:2004bw} (see also~\cite{Josic:2009du}), as they might miss significant information that is only accessible via a non-linear read-out. Furthermore, given the non-linearity inherent in dendritic processing and spike generation~\cite{KochBP}, there is added motivation to consider information without assuming linearity.

Furthermore, we have herein restricted ourselves to asking about pairwise noise correlations, while there are many studies that identify higher-order correlations (HOC) in neural data~\cite{Ganmor:2011ct,ohiorhenuan10}, and some numerical results~\cite{Zylberberg:2012ty} that hint at when those HOC are beneficial for coding. In light of this study, it is interesting to ask whether we can derive a similarly general theory for HOC, and to investigate how the optimal pairwise and higher-order correlations interrelate. Note this issue is closely related to the type of decoder that is assumed: the performance of linear decoder (as measured by mean squared error) depends on the pairwise correlations, but not HOC. Therefore the effect of HOC must be studied in the context of nonlinear coding.

Finally, we note that here we used an abstract coding model that evaluates information based on the statistics $C^n, C^\mu, L$ and so on. For generality, we made no assumptions on the structure of these statistics, and any links among them. This suggests two questions for future work: whether an arbitrary set of such statistics is realizable in a constructive model of random variables, and whether there are any typical relationships between these statistics when they arise from tuned neural populations.  As a preliminary investigation, we partially confirmed the answer to the first question, except for a ``zero measure'' set of statistics, under generic assumptions (data not shown).

\subsection*{Experimental implications}

Recall that we observed that, in general, for a given set of tuning curves and noise variances, there will be a diverse family of noise correlation matrices that will yield good (optimal, or near-optimal) performance. This effect can be observed in Figs.~\ref{F:2d_sign}, \ref{F:3d_boundary}, and \ref{F:scatter_C}, as well as our result about the dimension of the set of correlation matrices that yield (when it is possible) noise-free coding performance (Theorem~\ref{TH:0noise_cond_diag}).

At least compared with the alternative of a unique optimal noise correlation structure, our findings imply that it could be relatively ``easy" for the biological system to find good correlation matrices. At the same time, since the set of good solutions is so large, we should not be surprised to see heterogeneity in the correlation structures exhibited by biological systems.  Similar observations have previously been made in the context of neural oscillators: Prinz and colleagues~\cite{prinz04} observed that neuronal circuits with a variety of different parameter values could produce the types of rhythmic activity patterns displayed by the crab stomatogastric ganglion. Consequently, there is much animal-to-animal variability in this circuit~\cite{marder11}, even though the system's performance is strongly conserved.

At the same time, the potential diversity of solutions could present a serious challenge for analyzing data (cf.~\cite{Beck:2013}). Notice, that, at least for the $N=3$ cases of Figs.~\ref{F:2d_sign} and~\ref{F:3d_boundary} for example, how much the performance can vary as one of the correlation coefficients is changed, while keeping the other ones fixed. If this phenomenon is general, it means that, in an experiment where we observe a (possibly small) subset of the correlation coefficients, it may be very hard to know how those correlations actually affect coding: the answer to that question depends strongly on all of the other (unobserved) correlation coefficients. As our recording technologies improve~\cite{stevenson11}, and we make more use of optical methods, these ``gaps" in our datasets will get smaller, and this issue may be resolved; further theoretical work to gauge the seriousness of the underlying issue is also needed. In the meanwhile, caution seems wise when analyzing noise correlations in sparsely sampled data.

Finally, recall that the optimal noise correlations will always lie on the boundary of the allowed region of such correlations.  Importantly, what we mean by that boundary is flexible.  It may be the mathematical requirement of positive semidefinite covariance matrices -- the loosest possible requirement -- or there may be tighter constraints that restrict the set of correlation coefficients. Since biophysical mechanisms determine noise correlations, we expect that there will be identifiable regions of possible correlation coefficients that are possible in a given circuit/system. Understanding those ``allowed" regions will, we anticipate, be important for attempts to relate noise correlations to coding performance, and ultimately to help untangle the relationship between structure and function in sensory systems.

\section*{Methods}

In the Methods below, we will first revisit the problem set-up, and define our metrics of coding quality. We will then prove the theorems from the main text. Finally, we will provide the details of our numerical examples. A summary of our most frequently used notation is listed in Table~\ref{T:notation}.

\begin{table}[H]
\caption{
\bf{Notations}}
\begin{tabular*}{\textwidth}{ l @{\extracolsep{\fill}}  r}
\hline
$\vec{s}$ &  stimulus   \\
$x_i$ & response of neuron $i$ \\
$\mu_i$ & mean response of neuron $i$ \\
$\nabla \mu_i$ & derivative against $\vec{s}$, Eq.~\eqref{E:nabla_mu}\\
$L$ & covariance between $x$ and $\vec{s}$, Eq.~\eqref{E:def_L}\\
$C^n$ & noise covariance matrix (averaged or conditional, Section~\namerefquote{S:set-up})\\
$C^\mu$ & covariance of the mean response, Eq.~\eqref{E:def_Gamma_Cmu}\\
$\succ 0$, $\succcurlyeq 0$ & (a matrix is) positive definite and positive semidefinite\\
$\Gamma=C^\mu+C^n$ & total covariance,Eq.~\eqref{E:def_Gamma_Cmu}\\
$A=\Gamma^{-1} L$ & optimal readout vector of OLE, Eq.~\eqref{E:A_def}\\
$\rho_{ij}$ & noise correlations, Eq.~\eqref{E:def_rho}\\
$\rhosig_{ij}$ & signal correlations, Eq.~\eqref{E:rhosig_def}\\
$I_\Flin$ & linear Fisher information, Eq.~\eqref{E:Flin_def}\\
$I_\OLE$ & OLE information (accuracy of OLE), Eq.~\eqref{E:OLE}\\
$I_\mutG$ & mutual information for Gaussian distributions, Eq.~\eqref{E:mutG_def}\\
\hline
\end{tabular*}
\label{T:notation}
\end{table}

\subsection*{Summary of the problem set-up}
\label{S:set-up}

We consider populations of neurons that encode a stimulus $\vec{s}$ by their noisy responses $x_i$. For simplicity, we will suppress the vector notation in the Methods Unless otherwise stated, most of our results apply equally well to either scalar, or multi-dimensional, stimuli.

The mean activity or ``tuning" of the neurons are described by $\mu_i( s)=\EVb{x_i | s}$. In the case of scalar stimuli, this corresponds to the notion of a tuning curve. For more complex stimuli, this is more aligned with the idea of a receptive field.

The trial-to-trial noise part in $x_i$, given a fixed stimulus, can be described by the conditional covariance $C^n_{ij} = \cov \left(x_i, x_j  | s \right)$ (superscript $n$ denotes ``noise"). In particular $C^n_{ii}=\var\left(x_i | s \right)$ are noise variances of each neuron.

We ask questions of the following type: given fixed tuning curves $\mu$ and noise variances $C^n_{ii}$, how does the choice of noise covariance structure $C^n_{ij}$, $i\neq j$ affect linear Fisher information $I_{\Flin}$ (see Section~\namerefquote{S:info_def})?

Besides the local information measure $I_{\Flin}$ that quantifies coding near a specific stimulus, we also considered global measures that describe overall coding of the entire ensemble of stimuli. These are $I_\OLE$ and $I_\mutG$, described in Section~\namerefquote{S:info_def}. For these quantities, the relevant noise covariance is $\cov(x_i,x_j)=\EVb{\cov \left(x_i, x_j  | s \right)}$. We overload the notation with $C^n=\cov(x_i,x_j)$ in these global coding contexts. The optimization problem can then be identically stated for $I_\OLE$ and $I_\mutG$.

\subsection*{Defining the information quantities, signal and noise correlations}
\label{S:info_def}

\subsubsection*{Linear Fisher information}
Linear Fisher information quantifies how accurately the stimulus near a value $s$ can be decoded by a local linear unbiased estimator, and is given by

\beq
\label{E:Flin_def}
I_{\Flin} =\nabla \mu^T (C^n)^{-1} \nabla \mu.
\eeq
In the case of a $K$ dimensional stimulus the same definition holds, with

\beq
\label{E:nabla_mu}
\nabla \mu=
\left(
\barr{ccc}
\frac{\partial \mu_1}{\partial s_1} & \cdots & \frac{\partial \mu_1}{\partial s_K}\\
\vdots & \vdots & \vdots \\
\frac{\partial \mu_N}{\partial s_1} & \cdots & \frac{\partial \mu_N}{\partial s_K}
\earr
\right).
\eeq
In order for $I_\Flin$ to be defined by Eq.~\eqref{E:Flin_def}, we assume $C^n$ is invertible and hence positive definite: $C^n\succ 0$.
It can be shown that $I_{\Flin}^{-1}$ is the (attainable) lower bound of the covariance matrix of the error of any local linear unbiased estimator. Here the term lower bound is used in the sense of positive semidefiniteness, that is the ordering $A \succcurlyeq B$ if and only if $A-B  \succcurlyeq 0$. To obtain a scalar information quantity, we consider $\tr(I_{\Flin})$ and also denote this by $I_{\Flin}$ if not stated otherwise.

\subsubsection*{Optimal linear estimator}
To quantify the \emph{global} ability of the population to encode the stimulus (instead of \emph{locally}, as for discrimination tasks involving small deviations from a particular stimulus value), we follow \cite{Salinas:1994wr} and consider a linear estimator of the stimulus, given responses $x$:
\beq
\hat{s}=\sum_{i} A_i x_i+s_0=A^T x+s_0,
\eeq
with fixed parameters $A_i$ and $s_0$ unchanged with $s$. The set of readout coefficients $A$ that minimize the mean square error for a scalar random stimulus $s$, i.e.
\beq
\EVb{(\hat{s}-s)^2}  \; ,
\eeq
can be solved analytically as in \cite{Salinas:1994wr}, yielding:
\beq
\label{E:A_def}
A=\Gamma^{-1}L,\qquad \min (\EVb{(\hat{s}-s)^2})=\var(s)-L^T\Gamma^{-1} L,
\eeq
where 
\beq
\label{E:def_Gamma_Cmu}
(\Gamma)_{ij}=\cov(x_i,x_j)=\cov(\mu_i,\mu_j)+\EVb{\cov(x_i,x_j | s)}:=C^\mu+C^n,
\eeq
and $L$ is a column vector with entries $L_i=\cov(x_i,s)$. Here the expectation $\EVb{\cdot}$ generally means averaging over both noise and stimulus (except in $\EVb{\cov(x_i,x_j | s)}$, where averaging is only over the stimulus).

For multidimensional stimuli $s$, similar to the case for linear Fisher information, the lower bound (in sense of positive semidefiniteness) of the error covariance $\EVb{ (\hat{s} -s) (\hat{s} -s)^T}$ is given by $\cov(s,s)-L^T\Gamma^{-1} L$. Here $L$ is extended to form a matrix
\beq
\label{E:def_L}
 L=
 \left(
\barr{ccc}
\cov (x_1, s_1) & \cdots & \cov (x_1, s_K)\\
\vdots & \vdots & \vdots \\
\cov (x_N, s_1) & \cdots &\cov (x_N, s_K)
\earr
\right).
\eeq
Furthermore, a corresponding lower bound for the sum of squared errors $\EVb{ \tnorm{ \hat{s} -s }^2}$ is the scalar version $\tr(\cov(s,s))-\tr(L^T\Gamma^{-1} L)$.

When minimizing the OLE error with respect to noise correlations, $\mu_i$, $\cov(s,s)$ and $L$ are constants with respect to the optimization. Minimizing OLE error is therefore equivalent to maximizing the second term above, given by $L^T\Gamma^{-1} L$. This motivates us to define what we call ``the information for OLE'', which is simply the second term (above) --- i.e., the term that is subtracted from the signal variance to yield the OLE error.
Specifically, \beq
\label{E:OLE}
I_{\OLE}=L^T (C^\mu+C^n)^{-1}L \; \; \; \text{, or the scalar version  } \; \; \; \tr(L^T (C^\mu+C^n)^{-1}L).
\eeq
Thus, when $I_{\OLE}$ is large, the decoding error is small, and vice versa.
Comparing with the expression for $I_\Flin$, we see a similar mathematical structure, which will enable almost identical proofs of our theorems for both of these measures of coding performance.

Similar to $I_\Flin$, we need $C^\mu+C^n$ to be invertible in order to calculate $I_\OLE$. Since the signal covariance matrix $C^\mu$ does not change as we vary  $C^n$, this requirement is easy to satisfy. In particular, we assume $C^\mu$ is invertible ($C^\mu \succ 0$), and thus for all consistent  -- i.e. positive semidefinite -- $C^n$, $C^\mu+C^n\succcurlyeq C^\mu \succ 0$, so that  $C^\mu+C^n$ is invertible.

\subsubsection*{Mutual information for Gaussian distributions}

While the OLE and the linear Fisher information assume that a linear read-out of the population responses is used to estimate the stimulus, one may also be interested in how well the stimulus could be recovered by more sophisticated, nonlinear estimators. Mutual information, based on Shannon entropy is a useful quantity of this sort. It has many desirable properties consistent with the intuitive notion of ``information", and it we will use it to quantify how well a non-linear estimator could recover the stimulus.

Assuming that the joint distribution of $(x,s)$ is Gaussian ($s$ can be multidimensional), the mutual information has a simple expression

\begin{eqnarray}
\nonumber
I_{\mutG}&=&\frac{1}{2}\log\det(\cov(s,s))-\frac{1}{2}\log \det( \cov(s,s)- L^T \Gamma^{-1} L)\\
\label{E:mutG_def}
&=& \frac{1}{2}\log\det(\cov(s,s))-\frac{1}{2}\log \det( \cov(s,s)- L^T (C^\mu+C^n) ^{-1} L).
\end{eqnarray}
The quantities above are the same as in the definitions of $I_{\OLE}$.  Moreover, $\log$ is taken to base $e$, and hence the information is in units of nats. To convert to bits, one must simply divide our $I_{\mutG}$ values by $\log(2)$.

There is a consistency constraint that must be satisfied by any joint distribution of $(x,s)$, namely that
\beq
\label{E:mutG_cond}
\cov(s,s) -L^T (C^\mu+C^n)^{-1} L \equiv \cov(s,s | x) \succcurlyeq 0.
\eeq
This guarantees that $I_{\mutG}$ is always defined and real (but could be $+\infty$). To keep $I_{\mutG}$ finite, one needs to further assume $\cov(s,s) -L^T (C^\mu+C^n)^{-1} L \succ 0$, which is equivalent to $C^n \succ 0$. This can be seen by rewriting mutual information while exchanging the position of the two variables (since mutual information is symmetric),
\beqrn
I_\mutG&=&\frac{1}{2}\log\det(\cov(x,x))-\frac{1}{2}\log \det( \cov(x,x)- L \cov(s,s)^{-1} L^T)\\
&=&\frac{1}{2}\log\det(\cov(x,x))-\frac{1}{2}\log \det( \cov(x,x|s))=\frac{1}{2}\log\frac{\det(\cov(x,x))}{\det(C^n)}.
\eeqrn

It is easy to see that the formula contains terms similar to those in $I_\OLE$ and $I_\Flin$. In the scalar stimulus case, since $\log(\cdot)$ is an increasing function, maximizing $I_\mutG$ is equivalent to maximizing $I_\OLE$. In fact, the leading term in the Taylor expansion of $I_{\mutG}$ with respect to $L^T(C^\mu+C^n)^{-1}L$ is $\frac{L^T(C^\mu+C^n)^{-1}L}{2\var(s)}$, which is proportional to $I_{\OLE}$.  In the case of multivariate stimuli $s$, we note that the operation $\log \det(\cdot )$ preserves ordering defined in the positive semidefinite sense, i.e. $F \succcurlyeq G \Rightarrow \log \det(F ) \ge  \log \det(G ) $.  This close relationship suggests a way of transforming $I_{\OLE}$ to a comparable scale of information in nats (or bits) as $\frac{1}{2}\log\det(\cov(s,s))-\frac{1}{2}\log \det(\cov(s,s) - I_{\OLE})$.

\subsubsection*{Signal and noise correlations}
Given the noise covariance matrix $C^n$ one can normalize it as usual by its diagonal elements (variances) to obtain correlation coefficients
\beq
\label{E:def_rho}
\rho_{ij}=\frac{C^n_{ij}}{\sqrt{C^n_{ii} C^n_{jj}}}.
\eeq

We next discuss signal correlations, which describe how similar the tuning of a pair of neurons is. For linear Fisher information, we define signal correlations as
\beq
\label{E:rhosig_def}
\rhosig_{ij}=
\frac{\nabla \mu_i \cdot \nabla \mu_j}{\tnorm{ \nabla \mu_i  }  \tnorm{ \nabla \mu_j }}.
\eeq
Here $\nabla \mu_i=(\frac{\partial \mu_i}{\partial s_1},\cdots,\frac{\partial \mu_i}{\partial s_K})$ is the sensitivity vector describing how the mean response of neuron $i$ changes with $s$. With the above normalization, $\rhosig_{ij}$ takes value between $-1$ and $1$. 

For the other two information measures we use, $I_\OLE$ and $I_\mutG$, a similar signal correlation can be defined.  Here, we first define analogous tuning sensitivity vectors $A^0_{i}$ for each neuron, which will replace $\nabla \mu_i$ in Eq.~\eqref{E:rhosig_def}.  These vectors are 
\beq
\label{E:rhosig_def_OLE}
A^0=(C^\mu+D^n)^{-1}L \; \; \; \text{    and    }  \; \; \; A^0=(C^\mu+D^n)^{-1}LM^{-\frac{1}{2}}
\eeq
for $I_\OLE$ and $I_\mutG$ respectively. Here $D^n$ is the diagonal matrix of noise variances, and $M=\cov(s,s)- L^T(C^\mu+C^n)^{-1}L$.

The definitions of signal correlations above are chosen so that they are tied directly to the concept of the sign rule, as demonstrated in the proof of Theorem~\ref{TH:p_or_n}. As a consequence, for the case of $I_\OLE$ and $I_\mutG$, signal correlations are defined through the population readout vector.  This has an important implication that we note here.  Consider a case where only a subset of the total population is ``read out" to decode a stimulus.  Then, the population readout vector --- and hence the signal correlations defined above --- could vary in magnitude and even possibly change signs depending on which neurons are included in the subset.

A different definition of signal correlations for OLE is sometimes used in literature, which we denote by $\tilde{\rho}^{\text{sig}}_{ij}={C^\mu_{ij}}/{\sqrt{C^\mu_{ii}C^\mu_{jj}}}$. Naturally, one should not expect our sign rule results to apply exactly under this definition. However, when we redid our plots of signal vs. noise correlations using $\tilde{\rho}^{\text{sig}}_{ij}$ for our major numerical example (Fig.~\ref{F:scatter_C}{\bf ABC}), we observed the same qualitative trend (data not shown).  This reflects the fact that, at least in this specific example, the signal correlations defined in the two ways are positively correlated.  Understanding how general this phenomenon is would require further studies taking into account how the relevant statistics ($C^\mu$, L, etc.) are generated from tuning curves or neuron models.

We next define the notion of the magnitude or strength of correlations, as came up throughout the paper. In particular, in Section~\namerefquote{S:heter_eg}, we considered restrictions on the magnitudes of noise correlations when finding their optimal values.  We proceed as follows.  Since $\rho_{ij}=\rho_{ji}$, the list of all pairwise correlations of the population can be regarded as a single point in $\mathbb{R}^{\frac{N(N-1)}{2}}$.  If not stated otherwise, the vector 2-norm in that space (Euclidean norm) is what we call the ``strength of correlations:"
\beq
\label{E:rho_2norm}
\sqrt{\sum_{i<j} \rho_{ij}^2 }.
\eeq

\subsection*{Proof of Theorem \ref{TH:p_or_n}: the generality of the sign rule}
\label{S:SR_proof}

We will now restate and then prove Theorem \ref{TH:p_or_n}, first for $I_\Flin$ and then for $I_{\OLE}$ and $I_\mutG$.

\PoN*

The proof proceeds by showing that information increases along the direction indicated by the sign rule, and that the information quantities are convex, so that information is guaranteed to increase monotonically along that direction.

\begin{proof}
Consider linear Fisher information
\beq
I_{\Flin}=\tr(\nabla \mu^T (C^n)^{-1} \nabla \mu).
\eeq
Let $D^n$ be the diagonal part of $C^n$, corresponding to (noise) variance for each neuron. We change the off-diagonal entries of $C^n$ along a certain direction $(C^n)^\prime$ in $\mathbb{R}^{\frac{N(N-1)}{2}}$ and consider a parameterization of the resultant covariance matrix, with parameter $t$: $C^n(t)=D^n+(C^n)^\prime t$. We evaluate the directional derivative ($\frac{d}{dt}$) of $I_{\Flin}$ at $C^n=D^n$, 
\begin{eqnarray}
I_{\Flin}^\prime
\nonumber
&=&-\tr(\nabla \mu^T (D^n)^{-1} (C^n)^\prime(D^n)^{-1} \nabla \mu)\\
\nonumber
&=&-\tr((D^n)^{-1} (C^n)^\prime(D^n)^{-1} ( \nabla \mu \nabla \mu^T ))\\
\label{E:dir_prime}
&=& -2\sum_{i<j} \frac{(C^n)_{ij}^\prime \nabla \mu_i  \cdot \nabla \mu_j} { (D^n)_{ii} (D^n)_{jj}}.
\end{eqnarray}
Here $\nabla \mu_i=(\frac{\partial \mu_i}{\partial s_1},\cdots,\frac{\partial \mu_i}{\partial s_K})$, and we have used the identity $\tr(A B^T)=\sum_{ij}A_{ij} B_{ij}$ and the fact $\frac{dX^{-1}(t)}{dt}=-X^{-1}X^\prime X^{-1}$.  Recalling the definition of signal correlations in Eq.~\eqref{E:rhosig_def}, if the sign of $(C_{i,j}^n)^\prime$ is chosen to be opposite to the sign of $\rhosig_{ij}$  for all $i\neq j$, then Eq.~\eqref{E:dir_prime} ensures that the directional derivative $I_{\Flin}^\prime>0$ at $C^n=D^n$. 

We now derive a global consequence of this local derivative calculation.  $I_{\Flin}$ as a function of $t$ has $\frac{d I_\Flin}{dt}>0 |_{t=0} $. Since $I_\Flin$ is smooth, there exists $\delta>0$, such that for $t\in [0,\delta]$, $\frac{d I_\Flin (t)}{dt}>0$. For corresponding $C^n(t)$, applying the mean value theorem, we have $I_\Flin(C^n(t))-I_\Flin(D^n)=t \frac{d I_\Flin}{dt} |_{t_1 \in[0,\delta]} >0$. Similarly, for the opposite case where all the signs of the noise correlations are the same as the signs of $\rhosig_{ij}$, the information will be smaller than the independent case (at least for weak enough correlations). This proves the local ``sign rule" .

Thus, at least for small noise correlations, choosing noise correlations that oppose signal correlations will always be yield higher information values than the case of uncorrelated noise. To prove the ``global" version of this theorem --- that opponent signal and noise correlations always yield better coding than does independent noise --- we will need to establish the convexity of $I_{\Flin}$.  This is done in Theorem~\ref{TH:boundary}. 

Note that, as we will soon prove, $I_{\Flin}$ is a convex function of $t$, and hence $\frac{d I_\Flin}{dt} $ is increasing with $t$. This means that the $\delta$ from our prior argument can be made arbitrarily large, and the same result -- that performance improves when noise correlations are added, so long as they lie along this direction -- will hold, provided that $C^n(\delta)$ is still physically realizable. Thus, the improvement over the independent case is guaranteed globally for any magnitude of noise correlations. 

\end{proof}

Note that the arguments above do not guarantee that the globally optimal noise correlation structure will follow the sign rule.  Indeed, we have seen concrete examples of this in Figs.~\ref{F:2d_sign} and Fig.~\ref{F:3d_boundary}.

\begin{remark}
\label{R:grad_Flin} From Eq.~\eqref{E:dir_prime}, the gradient (steepest uphill direction) of $I_{\Flin}$ evaluated with independent noise $C^n=D^n$ is $(C^{n \prime})_{ij} = -2\frac{ \nabla \mu_i  \cdot \nabla \mu_j} { (D^n)_{ii} (D^n)_{jj}}=-2\frac{ \tnorm{ \nabla \mu_i  }  \tnorm{ \nabla \mu_j }} { (D^n)_{ii} (D^n)_{jj} } \rhosig_{ij}$.
\end{remark}

\begin{remark}
The same result can be shown for $I_\OLE$ and $I_\mutG$, replacing $\nabla \mu$ with $A^0=(C^\mu+D^n)^{-1}L$ and $A^0=(C^\mu+D^n)^{-1}LM^{-\frac{1}{2}}$, respectively, in the definition of $\rhosig$ in Eq.~\eqref{E:rhosig_def}. The gradients are $- 2A^0_i  \cdot A^0_j$ and $-A^0_i  \cdot A^0_j$, respectively, where $A^0_i$ is $i$-th row of $A^0$, and $M=\cov(s,s)- L^T(C^\mu+C^n)^{-1}L$.
\end{remark}

\subsection*{Proof of Theorem \ref{TH:boundary}: optima lie on boundaries}
\label{S:boundary_proof}

We begin by restating Theorem \ref{TH:boundary}, which we then prove first for $I_\OLE$ and then for $I_{\Flin}$ and $I_\mutG$.

\Boundary*

We will show that $I_{\OLE}$ is a convex function of $C^n$ and hence it will either attain its maximum value \emph{only} on the boundary of the allowed region, or it will be uniformly constant. The latter is a trivial case that only happens when $L=0$, as we see below.

\begin{proof}
To show that a function is convex, it is sufficient to show its second derivative along any linear direction is non-negative. For any constant direction $(C^n)^\prime=B$ of changing (off-diagonal entries of) $C^n$, we consider a straight-line perturbation, $C^n(t)=C^n+tB$ parameterized by $t$. Taking the derivative of $I_{\OLE}$ with respect to $t$, 
\beq
\label{e:u_1}
I_{\OLE}^\prime=-\tr(L^T (C^\mu+C^n(t))^{-1}
B
(C^\mu+C^n(t))^{-1} L).
\eeq
We have used that $\frac{dX^{-1}(t)}{dt}=-X^{-1}X^\prime X^{-1}$.
Let $A= (C^\mu+C^n(t))^{-1}L$. Taking another derivative gives
\beq
\label{e:Ipp}
I_{\OLE}^{\prime\prime}=2\tr(A^T B (C^\mu+C^n)^{-1}
B  A )\ge 0.
\eeq
The inequality is because of Lemma \ref{TH:psd_prod} (see below) and $(C^\mu+C^n)^{-1}$ being positive definite. Also, note that $(BA)^T=A^T B$.

For the case when $I_\OLE$ is constant over the region, using Proposition~\ref{TH:convex_0} (below), $B A=0$ for any direction of change $B$. Letting $B_{ij}=\delta_{ip}\delta_{jq}$, $p\neq q$, we see that the $p,q$-th row of $A$ must be 0. This leads to $A=0$ and, since $A = \Gamma^{-1}L$, to $L=0$.  This was the claim in the beginning. In other words, in the case where $I_{\OLE}$ is constant with respect to the noise correlations, the optimal read-out is zero, regardless of the neurons' responses. With the exception of this (trivial) case, the optimal coding performance is obtained when the noise correlation matrix lies on a boundary of the allowed region.
\end{proof}

\begin{lemma}(Linear algebra fact)
\label{TH:psd_prod}
For any positive semidefinite matrix $F$, and any matrix $G$, $G^T F G$ (assuming the dimensions match for matrix multiplications) is positive semidefinite and hence $\tr(G^T F G )\ge 0$. If ``=" is attained, then $FG=0$.
\end{lemma}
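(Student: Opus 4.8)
The plan is to prove the three claims in sequence, with only the equality case requiring a small idea. For the first claim, that $G^T F G$ is positive semidefinite, I would test the definition directly against an arbitrary vector: for any $v$, write $v^T (G^T F G) v = (Gv)^T F (Gv) \ge 0$, where the inequality is exactly the positive semidefiniteness of $F$ applied to the vector $Gv$. Since this holds for every $v$, the matrix $G^T F G$ is positive semidefinite. The trace inequality $\tr(G^T F G) \ge 0$ then follows immediately, since the trace of a positive semidefinite matrix equals the sum of its eigenvalues, all of which are nonnegative.

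For the equality case the key step is to factor $F$ through its positive semidefinite square root. Because $F \succcurlyeq 0$, there is a symmetric matrix $F^{1/2}$ with $F = F^{1/2} F^{1/2}$. Then $G^T F G = (F^{1/2} G)^T (F^{1/2} G)$, so the trace becomes a sum of squares of the entries of $F^{1/2} G$, namely $\tr(G^T F G) = \tr((F^{1/2}G)^T (F^{1/2}G)) = \sum_{i,j} (F^{1/2} G)_{ij}^2$. If this vanishes, every entry of $F^{1/2} G$ must be zero, i.e. $F^{1/2} G = 0$. Multiplying on the left by $F^{1/2}$ gives $FG = F^{1/2}(F^{1/2} G) = 0$, which is the desired conclusion.

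The main (and essentially only) obstacle is the equality case, and it is handled entirely by the square-root factorization above, which recasts the trace as a manifestly nonnegative sum of squares whose vanishing forces $F^{1/2} G = 0$. An equivalent route, if one prefers to avoid the square root, is to use any factorization $F = M^T M$ (for instance from an eigendecomposition $F = U \Lambda U^T$ with $M = \Lambda^{1/2} U^T$); the same computation yields $MG = 0$ and hence $FG = M^T M G = 0$. I would favor the symmetric square root, since it keeps the algebra shortest and makes the final line $FG = F^{1/2}(F^{1/2} G)$ transparent.
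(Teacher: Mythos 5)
Your proof is correct and takes essentially the same approach as the paper: both arguments hinge on factoring $F$ through a square root so that the vanishing of $\tr(G^T F G)$ forces the square-root factor times $G$ to be zero, whence $FG=0$. The paper happens to use the eigendecomposition factorization $F = (\Lambda^{1/2}U)^T(\Lambda^{1/2}U)$ --- precisely the ``equivalent route'' you mention at the end --- rather than the symmetric square root, but the argument is the same.
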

\begin{remark}
When $F \succ 0$ i.e. positive definite, $\tr(G^T F G )= 0$ leads to $G=0$ as $F$ is invertible. 
\end{remark}
\begin{proof}
For any vector $z$ (with the same dimension as the number of columns in $G$), $z^T G^T  FG z=(z^T G^T)  F(G z)\ge 0$ since $F\succcurlyeq 0$. Thus, by definition, $G^T  FG \succcurlyeq 0 $, and therefore $\tr ( G^T  FG) \ge 0$. 

For the second part, if $\tr(G^T F G )=0$, all the eigenvalues of $G^T F G $ must be 0 (since none of them can be negative as $G^T  F G\succcurlyeq 0$), hence $G^T F G =0$.  This in fact requires $FG = 0$. To see this, let $U^T \Lambda  U=F$ be an orthogonal diagonalization of $F$. For any vector $z$ as above, $z^T G^T F  Gz=0$. Since the eigenvalues $\Lambda_{ii}$ are non-negative, let $\Lambda^{\frac{1}{2}}$ be the diagonal matrix with the square roots of $\Lambda_{ii}$. We have 
\beq
0 = z^TG^T U^T \Lambda U  Gz=(\Lambda^{\frac{1}{2}} U  Gz)^T (\Lambda^{\frac{1}{2}} U  Gz)=\tnorm{\Lambda^{\frac{1}{2}} U  Gz}^2.
\eeq
Therefore the vector $\Lambda^{\frac{1}{2}} U  Gz=0$ and $FGz=U^T \Lambda^{\frac{1}{2}}\Lambda^{\frac{1}{2}} U  Gz=0$.  
Since $z$ can be any vector, we must have $FG=0$.
\end{proof}

\begin{remark}
Because of the similarities in the formulae for $I_{\OLE}$ and $I_{\Flin}$, the same property can be shown for $I_{\Flin}$. In order for $C^n$ to be invertible, $I_\Flin$ is only defined over the open set of positive definite $C^n$. We therefore assume the closure of the allowed region is contained within this open set $C^n\succ 0$ to state the boundary result.
\end{remark}

A parallel version of Theorem~\ref{TH:boundary} can also be established for $I_\mutG$, as we next show. 

\begin{proof}[Proof of Theorem \ref{TH:boundary} for $I_\mutG$] Again consider the linear parameterization $C^n(t)$ along a direction $B$, as defined above. Let $M=\cov(s,s)- L^T(C^\mu+C^n(t))^{-1}L$. The consistency constraint in Eq.~\eqref{E:mutG_cond} assures $M \succcurlyeq 0$. To keep $I_\mutG$ finite, we further assume $M \succ 0$. Then, the derivative of $I_{\mutG}$ with respect to $t$ is

\beq
I^\prime_{\mutG}=-\frac{1}{2}\frac{\det(M) \tr(M^{-1}M^\prime)}{\det(M)}
=-\frac{1}{2}\tr(M^{-1}M^\prime),
\eeq
where we have used the identity $(\det(M))^\prime=\det(M)\tr(M^{-1}M^\prime)$. The second derivative is thus
\beqrn
I^{\prime\prime}_{\mutG}&=&\frac{1}{2}\tr(M^{-1}M^\prime   M^{-1}M^\prime)
-\frac{1}{2}\tr(M^{-1}M^{\prime\prime})\\
&=&\frac{1}{2}\tr(M^{-1}M^\prime \cdot I  \cdot  M^{-1}M^\prime)\\
&&+\tr(M^{-\frac{1}{2}}L^T\Gamma^{-1}B \cdot \Gamma^{-1} \cdot
B\Gamma^{-1}L
M^{-\frac{1}{2}})\\
&\ge 0&.
\eeqrn
Here $I$ is the identity matrix,
$M^\prime
=L^T\Gamma^{-1}B
\Gamma^{-1}L$,  
$M^{\prime\prime}
=-2L^T \Gamma^{-1}B
\Gamma^{-1}B
\Gamma^{-1}L$ and $\Gamma=C^\mu+C^n$ as defined below Eq.~\eqref{E:A_def}. $M$ being positive definite allows us to split it into its square root $M=M^{\frac{1}{2}}M^{\frac{1}{2}}$.  Moreover, the identity $\tr(PQR) = \tr(QRP)$, for any matrices $P,Q$, and $R$, is used in deriving the last line in the above equation.
 For the last inequality, we apply Lemma \ref{TH:psd_prod} to the two terms with $I$ and $\Gamma^{-1}$ being positive semidefinite.

  We have thus shown that $I_\mutG$ is convex. For the special case that $I_\mutG$ is constant, Proposition~\ref{TH:convex_0} shows $B\Gamma^{-1} L=0$. With the same argument as for $I_{\OLE}$, we observe that, in this (trivial) case $L=0$.
\end{proof}

\subsection*{Proof of Theorem \ref{TH:0noise_cond}: conditions on the noise covariance matrix, under which noise-free coding is possible}
\label{S:0_noise_proof}

We begin by showing that, for a given set of tuning curves, the maximum possible information -- which may or may not be attainable in the presence of noise -- is that which would be achieved if there were no noise in the responses. This is the content of Lemma~\ref{TH:0noise}. Next, we will introduce Lemma~\ref{TH:A_B}, which is a useful linear-algebraic fact that we will use repeatedly in our proofs.

We will then prove Theorem \ref{TH:0noise_cond}, which provides the conditions under which such noise-free performance can be obtained. One direction of the proof of Theorem~\ref{TH:0noise_cond} (sufficiency) is straightforward, while the other direction (necessity) relies on the observation of several conditions  that are equivalent to the one in the theorem. We prove these equalities in Proposition~\ref{TH:convex_0}.

For Theorem \ref{TH:0noise_cond}, we will only consider $I_\OLE$, since $I_\Flin$ and $I_\mutG$ will typically be infinity in the noise-free case ($C^n$ becomes singular). If one takes all instances of infinite information as ``equally optimal,'' a version of Theorem \ref{TH:0noise_cond} can also be obtained; moreover, the condition in Theorem \ref{TH:0noise_cond} becomes a sufficient but not necessary condition for infinite information.

\begin{lemma}[Upper bound by noise-free information]
\label{TH:0noise}

\beq
I_{\OLE} (C^n) \leq I_{\OLE} (0).
\eeq
Here the noise-free information $I_{\OLE} (0)$ refers to that which is obtained when plugging in $0$ at the place of $C^n$ in Eq.~\eqref{E:OLE}. 
\end{lemma}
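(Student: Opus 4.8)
The plan is to prove the statement in its sharper, multidimensional form as an inequality in the positive semidefinite order, and then read off the scalar (trace) version for free. Writing $\Gamma = C^\mu + C^n$, the claim $I_\OLE(C^n) \le I_\OLE(0)$ is $\tr(L^T \Gamma^{-1} L) \le \tr(L^T (C^\mu)^{-1} L)$, and it suffices to establish the matrix inequality $L^T \Gamma^{-1} L \preccurlyeq L^T (C^\mu)^{-1} L$, since the trace of a positive semidefinite matrix is nonnegative. (Recall $C^\mu \succ 0$ and $C^\mu + C^n \succcurlyeq C^\mu$ are already in force, so all inverses exist.)

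The heart of the argument is the operator-antitonicity of matrix inversion: if $0 \prec B \preccurlyeq A$ then $A^{-1} \preccurlyeq B^{-1}$. I would apply this with $B = C^\mu$ and $A = \Gamma$; the hypothesis $B \preccurlyeq A$ is exactly $C^n = \Gamma - C^\mu \succcurlyeq 0$, which holds by assumption. To prove the antitonicity cleanly, I would pass to the congruence $M := (C^\mu)^{-1/2} C^n (C^\mu)^{-1/2} \succcurlyeq 0$, so that $\Gamma = (C^\mu)^{1/2}(I + M)(C^\mu)^{1/2}$ and hence
\[
(C^\mu)^{-1} - \Gamma^{-1} = (C^\mu)^{-1/2}\bigl(I - (I+M)^{-1}\bigr)(C^\mu)^{-1/2}.
\]
Since $M \succcurlyeq 0$, every eigenvalue of $(I+M)^{-1}$ lies in $(0,1]$, so $I - (I+M)^{-1} \succcurlyeq 0$, and conjugating a positive semidefinite matrix by $(C^\mu)^{-1/2}$ keeps it positive semidefinite. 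This yields $\Gamma^{-1} \preccurlyeq (C^\mu)^{-1}$.

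It then remains to conjugate by $L$: for any $X \preccurlyeq Y$ one has $L^T X L \preccurlyeq L^T Y L$ (immediate from the definition of the order, testing against vectors of the form $Lz$), which delivers $L^T \Gamma^{-1} L \preccurlyeq L^T (C^\mu)^{-1} L$ and hence the scalar inequality after taking traces. I do not expect a serious obstacle here; the one point requiring care is that for multidimensional stimuli the statement is genuinely about the PSD order, so the congruence and antitonicity steps must be carried out at the matrix level rather than for scalars, and one should not be tempted to argue eigenvalue-by-eigenvalue.

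As a sanity check, and to match the remark that this bound is ``similar to the data processing inequality,'' I would note the estimation-theoretic reading. In the noise-free world the responses equal $\mu(s)$, with $\cov(\mu,\mu)=C^\mu$ and $\cov(\mu,s)=L$ (the latter because $\EVb{n \mid s}=0$ forces $\cov(n,s)=0$). Applying any fixed linear readout $(A,s_0)$ to the noisy responses $x = \mu(s) + n$ gives error $\EVb{(A^T\mu + s_0 - s)^2} + A^T C^n A$, the cross term vanishing since $\EVb{n\mid s}=0$. Thus every noisy linear estimator incurs an error at least $A^T C^n A \ge 0$ above its noise-free counterpart, so optimizing over $(A,s_0)$ cannot beat the noise-free optimum; this is precisely $I_\OLE(C^n) \le I_\OLE(0)$.
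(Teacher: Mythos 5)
Your proposal is correct and takes essentially the same route as the paper: both reduce the claim to the operator-antitonicity of matrix inversion applied to $C^\mu \preccurlyeq C^\mu + C^n$ (giving $(C^\mu+C^n)^{-1} \preccurlyeq (C^\mu)^{-1}$), then conjugate by $L$ and take traces, which is exactly the paper's use of Lemma~\ref{TH:psd_prod}. The only differences are cosmetic: you supply a proof (via the $(C^\mu)^{-1/2}$ congruence) of the antitonicity fact that the paper merely cites, and your closing estimation-theoretic reading is precisely the informal ``OLE error cannot beat the noise-free error'' intuition the paper states just before the lemma.
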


\begin{proof}
This follows essentially from the consistency between the information quantity and the positive semidefinite ordering of covariance matrices.  First, we write
\beq
I_\OLE(0)-I_\OLE(C^n)=\tr(L^T \left[ (C^\mu)^{-1} -(C^\mu+C^n)^{-1} \right]  L) \; \;.
\eeq
Then, we note the fact that for two positive definite matrices $F,G$, $F \succcurlyeq G $ if and only if $F^{-1} \preccurlyeq G^{-1}$. From this, we have $(C^\mu)^{-1} -(C^\mu+C^n)^{-1} \succcurlyeq 0$.  Finally, applying Lemma~\ref{TH:psd_prod} yields $I_\OLE(0)-I_\OLE(C^n)\ge 0$.

\end{proof}

\begin{lemma}[Useful linear algebra fact]
\label{TH:A_B}
If, for any $F$, $G$, and $M$, $GF^{-1}M=0$, then $(F+G)^{-1}M=F^{-1}M$.
\end{lemma}
\begin{proof}
$(F+G)F^{-1}M=M+GF^{-1}M=M$.
\end{proof}

\begin{proposition}\emph{(Equivalent conditions used in proving the noise-free coding Theorem ~\ref{TH:0noise_cond}).}\\
\label{TH:convex_0} Along a certain direction $(C^n)^\prime=B$, the following conditions are equivalent. 
\beq
(i)~
I_{\OLE}^{\prime\prime}(C^n)=0  \qquad
(ii)~ 
B (C^\mu+C^n)^{-1} L=0 \qquad
(iii)~
 I_{\OLE}(C^n+tB)\equiv I_{\OLE}(C^n).
\eeq
The same also holds for $I_{\Flin}$ and $I_\mutG$.
\end{proposition}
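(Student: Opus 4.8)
The plan is to prove the three conditions equivalent by establishing the cyclic chain $(i)\Rightarrow(ii)\Rightarrow(iii)\Rightarrow(i)$, carrying it out in full for $I_{\OLE}$ and then transferring it to $I_{\Flin}$ and $I_{\mutG}$ almost verbatim. The organizing observation is that condition $(ii)$, namely $B(C^\mu+C^n)^{-1}L=0$, is the genuine hinge of the argument: it is an algebraic identity at the base point which, once available, forces the information to be globally constant along the line $C^n+tB$. Thus the content of the proposition is really that local flatness (vanishing second derivative), the algebraic orthogonality $(ii)$, and global constancy are three faces of the same fact.

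For $(i)\Rightarrow(ii)$ I would start from the second-derivative formula already computed in the proof of Theorem~\ref{TH:boundary}, Eq.~\eqref{e:Ipp}, which reads $I_{\OLE}''=2\tr(A^T B(C^\mu+C^n)^{-1}BA)$ with $A=(C^\mu+C^n)^{-1}L$. Writing $G=BA$ and using that the perturbation direction $B$ is symmetric gives $I_{\OLE}''=2\tr(G^T(C^\mu+C^n)^{-1}G)$. Since $(C^\mu+C^n)^{-1}\succ 0$, the equality clause of Lemma~\ref{TH:psd_prod} (the Remark for positive definite $F$) applies: $I_{\OLE}''=0$ forces $G=BA=0$, which is precisely $(ii)$. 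For $(ii)\Rightarrow(iii)$ I would invoke Lemma~\ref{TH:A_B} with $F=C^\mu+C^n$, $G=tB$, and $M=L$: condition $(ii)$ gives $tB(C^\mu+C^n)^{-1}L=0$, so the lemma yields $(C^\mu+C^n+tB)^{-1}L=(C^\mu+C^n)^{-1}L$ wherever the left side is defined. Substituting into $I_{\OLE}=\tr(L^T(C^\mu+C^n+tB)^{-1}L)$ shows it equals $\tr(L^T(C^\mu+C^n)^{-1}L)$ for all $t$, i.e. $(iii)$. Finally $(iii)\Rightarrow(i)$ is immediate, since a function identically constant in $t$ has vanishing second derivative at $t=0$.

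The passage to $I_{\Flin}$ requires only the formal substitutions $C^\mu+C^n\to C^n$ and $L\to\nabla\mu$: the analogue of $(ii)$ is $B(C^n)^{-1}\nabla\mu=0$, and the same two lemmas close the loop. For $I_{\mutG}$ the one new feature is that its second derivative, as computed in the proof of Theorem~\ref{TH:boundary} for $I_{\mutG}$, is a sum of two separately nonnegative traces; I would observe that $I_{\mutG}''=0$ forces each summand to vanish, and that the second summand is $\tr(G^T\Gamma^{-1}G)$ with $G=B\Gamma^{-1}LM^{-\frac{1}{2}}$ and $\Gamma=C^\mu+C^n$, so Lemma~\ref{TH:psd_prod} gives $B\Gamma^{-1}L=0$ after cancelling the invertible factor $M^{-\frac{1}{2}}$, again recovering $(ii)$. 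The implication $(ii)\Rightarrow(iii)$ then follows because Lemma~\ref{TH:A_B} keeps $M=\cov(s,s)-L^T\Gamma^{-1}L$ constant along the line, hence $I_{\mutG}=\frac{1}{2}\log\det\cov(s,s)-\frac{1}{2}\log\det M$ is constant. I expect the main obstacle to be bookkeeping rather than conceptual: verifying that the equality clause of Lemma~\ref{TH:psd_prod} genuinely upgrades the scalar statement $\tr=0$ to the matrix identity $BA=0$, and, for $I_{\mutG}$, confirming that isolating the single relevant summand is legitimate and that the domain issue in $(ii)\Rightarrow(iii)$ — invertibility of $\Gamma+tB$ along the segment — is covered by the standing assumptions $C^\mu\succ 0$ and $C^n+tB\succcurlyeq 0$.
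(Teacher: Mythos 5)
Your proposal is correct and follows essentially the same route as the paper's proof: the same second-derivative formulas, the equality clause of Lemma~\ref{TH:psd_prod} for $(i)\Rightarrow(ii)$, and Lemma~\ref{TH:A_B} for $(ii)\Rightarrow(iii)$, including the same handling of the $M^{-\frac{1}{2}}$ factor for $I_{\mutG}$. The only cosmetic difference is that for $(ii)\Rightarrow(iii)$ you substitute $A_t=A_0$ directly into the information formula, whereas the paper shows the first derivative vanishes identically along the line; both rest on the identical use of Lemma~\ref{TH:A_B}.
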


\begin{proof}[Proof for $I_\OLE$]
``$(i) \Leftrightarrow (ii)$":\\
We again consider parametrized deviations from $C^n$, $C^n(t) = C^n + t B$ for some constant matrix B. Let $A_t=(C^\mu+C^n+tB)^{-1}L$, and recall (Eq.~\eqref{e:Ipp}),
\beq
I_{\OLE}^{\prime\prime}(C^n)= 2\tr(A_0^T B (C^\mu+C^n)^{-1} B A_0).
\eeq
Since $C^\mu+C^n$ is positive definite, according to the remark after Lemma~\ref{TH:psd_prod}, we have $(i) \Leftrightarrow (ii)$.

``$(ii) \rightarrow (iii)$)":
If $(ii)$, by Lemma~\ref{TH:A_B}, $A_t=A_0$. We have $I_{\OLE}^{\prime}(C^n+tB)=-\tr(A_t^T B A_t)=-\tr(A_0^T B A_0)=0$, for all $t$ in the allowed region, and hence $(iii)$.

``$(iii) \Rightarrow (i)$": immediate.\\
This concludes the proof for $I_\OLE$.
\end{proof}

\begin{proof}[Proof for $I_\Flin$]
For $I_{\Flin}$, we further assume $C^n \succ 0$ to avoid infinite information. Identical arguments will prove the properties above, where $(ii)$ is replaced by $B(C^n)^{-1}\nabla \mu=0$.
\end{proof}

\begin{proof}[Proof for $I_\mutG$]
For $I_\mutG$, we similarly assume $M \succ 0$ (as defined in the proof of Theorem~\ref{TH:boundary}). Let  $A_t=(C^\mu+C^n+tB)^{-1}L$, then $M^\prime \vert_{t=0}
=A_0^T B A_0$, 
\beqrn
I_{\mutG}^{\prime\prime}(C^n)&=&\frac{1}{2}\tr(M^{-1}M^\prime M^{-1}M^\prime)\\
&&+\tr(M^{-\frac{1}{2}}A_0^T B \cdot \Gamma^{-1} \cdot
B A_0
M^{-\frac{1}{2}}).
\eeqrn
It is easy to see $(ii) \Rightarrow (i)$. When $(i)$ holds, using Lemma~\ref{TH:psd_prod}, each of the two terms must be 0. In particular, as we discussed in the proof of Theorem \ref{TH:boundary} for $I_\mutG$ (above), each of the terms is non-negative. Thus, if their sum is $0$, then each term must individually be $0$. According to the remark after Lemma~\ref{TH:psd_prod}, the second term being 0 indicates that $B A_0 M^{-\frac{1}{2}}=0$ or $B A_0 =0$, which is $(ii)$.

If $(ii)$ holds, by Lemma~\ref{TH:A_B}, we have $A_t=A_0$. We have $I_{\mutG}^{\prime}(C^n+tB)=-\frac{1}{2}\tr(M^{-1}A_t^T B A_t)=-\frac{1}{2}\tr(M^{-1}A_0^T B A_0)=0$, for all $t$ in the allowed region, and hence $(iii)$. Similarly $(iii) \Rightarrow (i)$. This proves the property for $I_\mutG$.
\end{proof}

\ZeroNoiseCond*

\begin{proof}
If $C^n (C^{\mu})^{-1}L=0$, then Lemma~\ref{TH:A_B} implies that  $(C^{\mu} + C^n)^{-1}L= (C^{\mu})^{-1}L $, which means that $I_{\OLE}(C^n) = I_{\OLE}(0)$, using the definition in Eq.~\eqref{E:OLE}.

For the other direction of the theorem, consider a function of $t\in[0,1]$, $I_{\OLE}(tC^n)=\tr(L^T(C^\mu+tC^n)^{-1}L)$, whose values at the endpoints are equal, according to saturation of the information bound. The mean value theorem assures that there exists a $t_1\in [0,1]$ such that 
\beq
I_{\OLE}^\prime(t_1 C^n )=-\tr(L^T(C^\mu+t_1 C^n )^{-1} C^n (C^\mu+t_1 C^n)^{-1}L)=0.
\eeq
Since $ C^n$ is positive semidefinite, according to Lemma~\ref{TH:psd_prod}, $C^n (C^\mu+t_1 C^n)^{-1}L=0$. Now using Lemma~\ref{TH:A_B}, we have that $C^n (C^\mu)^{-1}L=0$, and the readout vector $A=(C^\mu+C^n)^{-1}L=(C^\mu)^{-1}L$.
\end{proof}

\subsection*{Proof of Theorem \ref{TH:0noise_cond_diag}: conditions on tuning curves and variance, under which noise-free coding performance is possible}
\label{S:0_noise_cond_proof}

Next, we will restate, and then prove, Theorem \ref{TH:0noise_cond_diag}. The proof will require using geometric ideas in Lemma~\ref{TH:q_to_B}, which we will state and prove below.

\ZeroNoiseCondDiag*

The proof is based on the condition in Theorem~\ref{TH:0noise_cond}. After taking several invertible transforms of the equation, the problem of finding a noise-canceling $C^n$ is transformed to that of finding a set of $N$ vectors, whose length are specified by $q_i$, that sum to zero (the vectors form a closed loop when connected consecutively). This allows us to take a geometrical point of view, in which inequality Eq.~\eqref{E:max_cond} becomes the triangle inequality.  This will prove the ``necessary" part of the Theorem.  Lemma~\ref{TH:q_to_B} shows the opposite direction, by inductively constructing the set of vectors that sum to zero. 

This procedure will yield one ``particular" $C^n$ with the noise-canceling property. Very much like finding all general solutions of an ODE, we then add to our particular solution an arbitrary homogeneous solution, which belongs to a vector space of dimension $\frac{N(N-3)}{2}$. In order for our perturbed solution, at least for small enough perturbations, to still be positive semidefinite, the particular $C^n$ we start with must be generic.  In other words, it must satisfy a rank condition, which is guaranteed by the construction in Lemma~\ref{TH:q_to_B}. We can then conclude that the set of all noise canceling $C^n$ forms a linear segment with the dimension of the space of homogeneous solutions.

Finally, special treatments are given for the cases of ``$=$" in Eq.~\eqref{E:max_cond}, as well as cases where some $q_i\text{'s}$ are 0.

\begin{proof} 
To establish the necessity direction of the Theorem, first let $D$ be a diagonal matrix with $D_{ii}=A_i$ or $A=De$, where vector $e=(1,\cdots,1)^T$.  Note that
\beq
\label{E:p1}
C^nA=0 \Rightarrow DC^nDe=0 \; \;.
\eeq 
Let $DC^nD=\tilde{C^n}$, a positive semidefinite matrix with diagonal $\{q_i^2\}$.

$\tilde{C^n}$ can be diagonalized by an orthogonal matrix $U$, $\tilde{C^n}=U^T\Lambda U$. Without loss of generality, further assume that the first $k$ diagonal elements of $\Lambda$ are positive, with the rest being 0, where $k=\rank(C^n)$. Let $\hat{\Lambda}$ be the first $k$ block of $\Lambda$,  and $\hat{U}$ be the first $k$ rows of $U$.  Then we have
\beq
\label{E:p2}
U^T\Lambda U e=0 \Rightarrow \Lambda U e=0 \Rightarrow \hat{\Lambda}^{\frac{1}{2}}\hat{U} e=0 \; \; .
\eeq 
Let $B=\hat{\Lambda}^{\frac{1}{2}}\hat{U}$, a $k\times N$ matrix, and $B_{i}$ be the $i$-th column. As $\tilde{C^n}=B^T B$, the 2-norm of vector $B_i$ is $q_i$. Let $q_j$ be the maximum of $\{q_i\}$,
\beq
\label{E:p3}
Be=0 \Rightarrow \sum_{i} B_i=0 \Rightarrow -B_j=\sum_{i\neq j} B_i
\Rightarrow \tnorm{ B_j} \leq \sum_{i\neq j} \tnorm{ B_i }.
\eeq
This concludes the necessary direction of our proof.

To establish sufficiency, we first focus on the case of ``$<$" and all $A_i\neq 0$. We will construct a generic $C^n$  that has rank $N-1$, satisfying $C^nA=0$. We will basically reverse the direction of arguments in Eq.~(\ref{E:p1}-\ref{E:p3}). We will later deal with the``$=$" case, and the case of $A_i = 0$ for some $i$.

\begin{lemma}
\label{TH:q_to_B}
Let $e_i$, $i=1,\cdots,N-1$ be an orthonormal basis of $\mathbb{R}^{N-1}$. Given a set of positive $\{q_i\}_{i=1}^{N}$ satisfying ``$<$" in Eq.~\eqref{E:max_cond}, there exist $N$ vectors $\{B_i\}$, such that $\sum_{i} B_i=0$, $\tnorm{ B_i}=q_i$ and the spanned linear subspace $\vspan\{B_i\}_{i=1}^N=\vspan\{e_i\}_{i=1}^{N-1}$. 
\end{lemma}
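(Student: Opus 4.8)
The plan is to prove Lemma~\ref{TH:q_to_B} by induction on $N$, realizing the desired vectors as the directed edges of a closed, nondegenerate polygon in $\mathbb{R}^{N-1}$. The strict inequality in Eq.~\eqref{E:max_cond} is precisely the strict polygon inequality $\max_i q_i < \frac{1}{2}\sum_i q_i$, and the extra requirement $\vspan\{B_i\}=\mathbb{R}^{N-1}$ amounts to demanding that this polygon be \emph{full-dimensional} (contained in no proper subspace). For the base case $N=3$, the strict inequality is the ordinary strict triangle inequality, so a nondegenerate triangle exists in $\mathbb{R}^2$; its three directed edges $B_1,B_2,B_3$ sum to zero, have norms $q_1,q_2,q_3$, and span $\mathbb{R}^2$. (The cases $N\le 2$ are vacuous, as the strict inequality cannot then hold.)

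For the inductive step I would assume the statement for $N-1$ and, given $q_1,\dots,q_N$, relabel so that $q_1\ge q_2\ge\cdots\ge q_N$. The idea is to \emph{merge} the two smallest lengths into a single auxiliary length $q'=q_{N-1}+q_N-\delta$ for a parameter $\delta>0$ to be chosen, apply the inductive hypothesis to the reduced list $\{q_1,\dots,q_{N-2},q'\}$ to obtain vectors $B_1,\dots,B_{N-2},B'$ in $\mathbb{R}^{N-2}$ that sum to zero, carry the prescribed norms, and span $\mathbb{R}^{N-2}$, and then \emph{split} $B'$ back into $B_{N-1}+B_N$. The split is performed in the plane spanned by $B'$ and a fresh orthonormal direction $e_{N-1}$, choosing $B_{N-1},B_N$ with $B_{N-1}+B_N=B'$, $\tnorm{B_{N-1}}=q_{N-1}$, and $\tnorm{B_N}=q_N$; this is exactly the requirement that $q_{N-1},q_N,q'$ form a triangle. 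Embedding $\mathbb{R}^{N-2}\hookrightarrow\mathbb{R}^{N-1}$ in the first $N-2$ coordinates, the resulting $B_1,\dots,B_N$ still sum to zero (since $B_{N-1}+B_N=B'$) and have the correct norms by construction.

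The crux — and the step I expect to be the main obstacle — is showing that $\delta$ can be chosen so that \emph{both} constructions are simultaneously legal and the span grows correctly. Nondegeneracy of the splitting triangle requires $|q_{N-1}-q_N|<q'<q_{N-1}+q_N$, i.e. $0<\delta<2\min(q_{N-1},q_N)$; this forces $B_{N-1}$ to have a nonzero $e_{N-1}$-component, so that $\vspan\{B_1,\dots,B_N\}$ contains both $\mathbb{R}^{N-2}=\vspan\{B_1,\dots,B_{N-2},B'\}$ and $e_{N-1}$, hence equals $\mathbb{R}^{N-1}$. Applicability of the inductive hypothesis requires the reduced list to obey the strict polygon inequality, i.e. each of $q_1,\dots,q_{N-2},q'$ to be less than $\frac{1}{2}(S-\delta)$ with $S=\sum_i q_i$. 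The binding constraints are $q_1<\frac{1}{2}(S-\delta)$, which holds for $\delta<S-2q_1$ (and $S-2q_1>0$ by hypothesis), and $q'<\frac{1}{2}(S-\delta)$, which reduces to $\delta>2(q_{N-1}+q_N)-S$ (the remaining $q_i\le q_1$ are then automatic). Because $q_{N-1},q_N$ are the two smallest entries and $N\ge 4$, one has $q_{N-1}+q_N\le\sum_{i=1}^{N-2}q_i$, so this last bound is nonpositive and imposes nothing. Hence any $\delta$ in the nonempty interval $\bigl(0,\,\min\{S-2q_1,\,2\min(q_{N-1},q_N)\}\bigr)$ works, which closes the induction.
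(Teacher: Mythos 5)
Your proof is correct and follows essentially the same route as the paper's: induction on $N$, merging two of the lengths into a single auxiliary length chosen to satisfy both the reduced strict polygon inequality and the strict triangle inequality, then splitting the corresponding vector inside the plane spanned by it and the fresh orthonormal direction $e_{N-1}$, whose nonzero component forces the span to grow. The only cosmetic differences are that the paper merges $q_{N-1}$ with the largest entry $q_N$ and selects the auxiliary length from an explicit max/min interval, whereas you merge the two smallest entries and parametrize the auxiliary length by $\delta$; both reduce to the same nonemptiness check.
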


\begin{proof}
We prove this by induction. $N$ has to be at least 3 for the inequality to hold. For $N=3$, this is the case of a triangle. There is a (unique) triangle $X_1 X_2 X_3$, for which the length of the three sides $X_1 X_2$, $X_2 X_3$, $X_1 X_3$ are $q_3,q_1,q_2$ respectively. The altitude from $X_3$ intersects the line of $X_1 X_2$ at $O$. Let $O$ be the origin of the coordinate system, with $X_1 X_2$ being the x-axis and aligned with $e_2$, and the altitude $O X_3$ being the y-axis aligned with $e_1$. From such a picture, it is easy to verify the following: $B_3=-(\abs{X_1 O}+p\abs{X_2 O}) e_2$, $B_1=\abs{X_1 O} e_2+\abs{O X_3} e_1$, $B_2=p\abs{X_2 O} e_2-\abs{O X_3} e_1$ satisfies the lemma, where $p=1$ if $O$ lies within $X_1X_2$ and $p=-1$ otherwise. 

For the case of $N\ge 4$, assume  that $q_N$ is the largest of the $q\text{'s}$. Because of the inequality, there will always exist some non-negative real number $q$ (not necessarily one of the $q_i\text{'s}$) such that
\beq
\max\{q_N-q_{N-1},q_1,\cdots,q_{N-2} \}<q<\min\{ \sum_{i=1}^{N-2} q_i, q_N+q_{N-1}\}.
\eeq
We can verify that the set $\{q_1,\cdots,q_{N-2}, q\}$ satisfies the inequality as well. By the assumption of induction, there exist vectors $\{B_1,\cdots, B_{N-2}, B\}$ that span the space of $\{e_1,\cdots,e_{N-2}\}$, such that $\tnorm{ B_i}=q_i$ and $\tnorm{ B}=q$.

Note the choice of $q$ also guarantees that $\{q_{N},q_{N-1},q\}$ can be the edge lengths of a triangle. Applying the result at $N=3$, the three sides $X_1 X_2$, $X_2 X_3$, $X_1 X_3$ correspond to $q, q_{N},q_{N-1}$ respectively. Let $B_{N-1}=-\frac{|X_1 O|}{\abs{X_1 O}+p\abs{X_2 O}} B+|O X_3| e_{N-1}$, $B_{N}=-p\frac{|X_2 O|}{\abs{X_1 O}+p\abs{X_2 O} } B-|O X_3| e_{N-1}$. It is easy to verify that these $\{B_{i}\}_{i=1}^{N}$ satisfy the lemma. 

\end{proof}

Using the lemma, we have a set of $B_i$. Stacking them as column vectors gives a matrix $B$; moreover, $Be=0$. Let $\tilde{C^n}=B^TB$, which is positive semidefinite with diagonals $\{q_i^2\}$. It is easy to show that $\rank(\tilde{C^n})=\rank(B)=N-1$, by comparing the null spaces of the matrices. Let $C^n=D^{-1}\tilde{C^n}D^{-1}$, where $D$ is defined as above. Then $C^nA=D^{-1}\tilde{C^n} D^{-1}A=D^{-1}\tilde{C^n} e=0$. 

Now consider the case where there are zeros in $A_i$.  Assume that the first $k$ entries contain all of the the non-zero values. We apply the construction above for the first $k$ dimensions, and get a $k\times k$ matrix such that $\hat{C^n}\hat{A}=0$, $\rank(\hat{C^n})=k-1$, where $\hat{A}$ is part of $A$ with the first $k$ elements. The following block diagonal matrix 
\beq
C^n=
\left(
\barr{cc}
\hat{C^n} & 0\\
0 & \tilde{D^n}
\earr
\right),\quad   \text{where} \quad \tilde{D^n}=\diag\{C^n_{k+1,k+1},\cdots C^n_{N,N}\},
\eeq
satisfies $C^nA=0$ and $\rank(C^n)=N-1$.

We have shown that for the ``$<$" case in the theorem, there is always a noise canceling $C^n$. Consider the direction $(C^n)^\prime$, in which off-diagonal elements of $C^n$ vary, while keeping $(C^n+ (C^n)^\prime)A=0$ (temporarily ignoring the positive semidefinite constraint). The set of all such $(C^n)^\prime$  form a linear subspace $M$ of $\mathbb{R}^{\frac{N(N-1)}{2}}$, determined by the linear system $(C^n)^\prime A=0$. Since there are $N$ equations, the dimension of $M$ is at least $\frac{N(N-1)}{2}-N=\frac{N(N-3)}{2}$.

In the ``$<$" case, there must be at least 3 non-zero $A_i\text{'s}$ in order for the triangle inequality to be satisfied in Eq.~\ref{E:max_cond}.  We will choose these three $A_i\text{'s}$ to be $A_1,A_2,A_N \neq 0$. Consider a block of the coefficient matrix associated with the system $(C^n)^\prime A=0$ (note that the entries of $(C^n)^\prime$ are considered to be unknown variables), that are columns corresponding to variables $(C^n)^\prime_{12}, \cdots, (C^n)^\prime_{1N}, (C^n)^\prime_{2N}$
\beq
\left(
\barr{cccccc}
A_2 & A_3 & \cdots & \cdots & A_N &  0\\
A_1 & & & & &A_N\\
& A_1& & \bigzero & &0\\
& & \ddots& & & \vdots\\
& \bigzero & &\ddots & &0 \\
& & & & A_1 & A_2
\earr
\right).
\eeq

Performing Gaussian elimination on the columns of this matrix, we obtain the following matrix, which will have the same rank.

\beq
\left(
\barr{cccccc}
A_2 & A_3 & \cdots & \cdots & A_N &  -\frac{2 A_2 A_N}{A_1} \\
A_1 & & & & &0\\
& A_1& & \bigzero & &\\
& & \ddots& & & \vdots\\
& \bigzero & &\ddots & &\\
& & & & A_1 &0
\earr
\right)
\eeq
This matrix -- which determines the number of constraints that must be satisfied in order for $(C^n)^\prime A=0$ -- has rank $N$, and hence $\dim(M)$ is exactly $\frac{N(N-1)}{2} - N =  \frac{N(N-3)}{2}$.

For any direction in $M$, we can always perturb the generic $C^n$ we found above by some finite amount $\epsilon$, and still have $C^n+\epsilon (C^n)^\prime $ be positive semidefinite. Let $\lambda_{\min}$ be the smallest non-zero eigenvalue of $C^n$. Take any $|\epsilon|<\frac{1}{2} \frac{\lambda_{\min}}{\tnorm{ (C^n)^\prime}}$. For any vector $z$, let $z=z_0+z_1$ be an orthogonal decomposition where $z_0=\frac{1}{\tnorm{ A}^2}AA^T z$ is the projection along the direction of $A$. Then
\beq
z^T(C^n+\epsilon (C^n)^\prime)z =z_1^T C^n z_1+\epsilon z_1^T (C^n)^\prime z_1 \ge \lambda_{\min} \tnorm{ z_1}^2 -\abs{\epsilon} \tnorm{ z_1}^2 \tnorm{ (C^n)^\prime} \ge 0.
\eeq

This shows that the $C^n+\epsilon (C^n)^\prime$ are positive semidefinite and they form a set of dimension as $M$. We can always take the admissible $\epsilon$ values to their extremes, and the resulting matrices are all the possible noise canceling $C^n$. For any $\tilde{C^n}A=0$, $(C^n-\tilde{C^n})A=0$, and $C^n-\tilde{C^n}$ must be in $M$. Note that the sets of positive semidefinite $C^n$ (spectrahedra) are convex.  As a consequence, any point along the segment $C^n+t(\tilde{C^n}-C^n)$ will be positive semidefinite. This shows we must have encompassed $\tilde{C^n}$ when considering the largest possible perturbations of $C^n$, in any direction $(C^n)^\prime \in M$. Moreover, we note that the set of all noise-canceling $C^n$ is convex: if $C^n_{i}A=0$, $i\in\{1,2\}$, $(\lambda C^n_1+(1-\lambda)C^n_2)A=0$ for any $\lambda \in [0,1]$ and $(\lambda C^n_1+(1-\lambda)C^n_2)$ is positive semidefinite, with the diagonal matching $C^n_{ii}$. 

Thus, we have proved the claim about the dimension and convexity of the set of optimal correlations for the case of ``$<$" in Eq.~\eqref{E:max_cond}.

Finally, for the special case of ``$=$" in Eq.~\eqref{E:max_cond}, again first consider the case where all $A_i\neq 0$. As before, solving $C^nA=0$ is equivalent to solving $\tilde{C^n}e=0$ and there is an one to one correspondence between the two. Revisiting Eq.~\eqref{E:p3} in the proof above, the equality condition in the triangle inequality implies that $\{B_i, i=1,\cdots,N-1\}$ all point along the same direction, and that $B_N$ is in the opposite direction, in order to cancel their sum. This fully determines $\tilde{C^n}=D^q C^0 D^q$, where $D^q=\diag\{q_1,\cdots q_N\}$, and

\beq
\label{E:C^0}
C^0=
\left(
\barr{ccccc}
1 & \cdots & \cdots& 1& -1\\
\vdots & \ddots &  &\vdots &\vdots\\
\vdots &   & \ddots& 1&-1 \\
1 & \cdots & 1& 1 &-1\\
-1 & \cdots &-1 &-1 &1
\earr
\right).
\eeq
It is easy to verify that $\tilde{C^n}e=0$, and hence there is a unique noise canceling $C^n$.

For the case when there are $N_0$ 0's among the $\{A_i\}$, assume that the first $N-N_0$ coordinates are non-zero, so that $A=(\hat{A},0,\cdots,0)^T$. Next, we write $C^n A=0$ in block matrix form, with blocks of dimension $N-N_0$ and $N_0$:
\beq
C^n A=
\left(
\barr{cc}
\hat{C^n} & E^T\\
E &F
\earr
\right)
\left(
\barr{c}
\hat{A} \\
0
\earr
\right)
=\left(
\barr{c}
\hat{C^n}\hat{A} \\
E \hat{A}
\earr
\right)
=0.
\eeq
Applying the previous argument from the $A_i\neq 0$ case, there is a unique $\hat{C^n}$.  Moreover, note that $\rank(\hat{C^n})=1$, following from the fact that $C^0$ in Eq.~\eqref{E:C^0} has rank 1. Let $\hat{C^n}=U^T \Lambda U$ be the orthogonal diagonalization and $\Lambda_{N-N_0,N-N_0}=\lambda\neq 0$. Let $I_{N_0}$ be the identity matrix of dimension $N_0$. Then we can take an orthogonal transform:

\beqrn
&&
\left(
\barr{cc}
U^T & 0\\
0 & I_{N_0}
\earr
\right)
\left(
\barr{cc}
\hat{C^n} & E^T\\
E &F
\earr
\right)
\left(
\barr{cc}
U & 0\\
0 & I_{N_0}
\earr
\right)
\left(
\barr{cc}
U^T & 0\\
0 & I_{N_0}
\earr
\right)
\left(
\barr{c}
\hat{A} \\
0
\earr
\right)\\
&=&
\left(
\barr{cc}
\Lambda & U^T E^T\\
EU & F
\earr
\right)
\left(
\barr{c}
U^T\hat{A} \\
0
\earr
\right).
\eeqrn
With the notation $U^T \hat{A}=\hat{A}^\prime$, the original problem $C^n A=0$ is therefore equivalent to finding all $E^\prime$ and $F$ such that,
\beq
\label{E:A'}
\left(
\barr{cc}
\Lambda & E^{\prime T} \\
E^\prime & F
\earr
\right)
\left(
\barr{c}
\hat{A}^\prime \\
0
\earr
\right)
=0,
\eeq
while keeping the matrix in this equation positive semidefinite.

 For any positive semidefinite matrix $X$, it is easy to show that $X_{ii}X_{jj} \ge X_{ij}^2$ by considering the principle minor with indices $i,j$, which must be non-negative. Note that since $\Lambda$ has only one non-zero diagonal entry, this forces the first $N-N_0-1$ columns of $E^\prime$ to be entirely 0. So we can rewrite the block matrix by dimension $N-N_0-1$ and $N_0+1$ as
\beq
\label{E:=block}
\left(
\barr{cc}
\Lambda & E^{\prime T} \\
E^\prime & F
\earr
\right)
=
\left(
\barr{ccc}
0 & 0& 0\\
0 & \lambda & e^{\prime T} \\
0& e^\prime & F
\earr
\right),
\eeq
where $e^\prime$ is the $(N-N_0)$-th column of $E^\prime$.
Since $\Lambda \hat{A}^\prime= \lambda(\hat{A}^\prime)_{N-N_0}=0$, we have $(\hat{A}^\prime)_{N-N_0}=0$. It can be verified that, as long as the block structure of Eq.~\eqref{E:=block} is satisfied, Eq.~\eqref{E:A'} is always true. The positive semidefinite constraint becomes the constraint that the lower block be positive semidefinite; in turn, this corresponds to a spectrahedron (and hence a convex set) of dimension $\frac{N_0(N_0+1)}{2}$. Note that this dimensionality and convexity will be preserved when we undo the invertible linear transforms performed in prior steps to obtain the noise-canceling $C^n\text{'s}$.
\end{proof}

\subsection*{Proof of Theorem \ref{TH:0noise_prob}: probability that noise-free coding is possible}
\label{S:prob_0_noise_proof}

In this subsection, we will restate, and then prove, Theorem \ref{TH:0noise_prob}.

\ZeroNoiseProb*

\begin{proof}

We will use the following fact to establish a lower bound for the probability of the event in the theorem (below, we denote this event as $C$). 
\beq
\label{E:PAB}
P(A\cap B)\geq P(A)+P(B)-1
\eeq
We choose the two events $A$ and $B$ as $A=\frac{1}{N}\sum_i q_i > \frac{2}{3} \EVb{X}$ and $B=\max\{ q_i\} \leq \frac{N}{4} \EVb{X} $. Note that $A\cap B$ implies C,
\beq
\max\{ q_i\} \leq \frac{N}{4} \EVb{X} <  \frac{N}{3} \EVb{X}\leq\frac{1}{2}\sum_i q_i,
\eeq
the event in concern. We will then show that, for large populations, $P(A) \to 1$ and $P(B) \to 1$, and thus $P(C)\ge P(A\cap B) \to 1$.

For $A$, by the law of large numbers, the average should converge to the expectation (which is a positive number), hence
\beq
\lim_{N\rightarrow \infty}P\left(\frac{1}{N}\sum_i q_i > \frac{2}{3} \EVb{X} \right)=1.
\eeq

We next consider event B.  Let the cumulative distribution function of $X$ be $F(x)$. Then cumulative distribution function for $\max \{q_i\}$ is $F^N(x)$ by the assumption that these variables are drawn i.i.d.  It follows that 
\beqrn
&&P\left(\max\{ q_i\} > \frac{N}{4} \EVb{X} \right)=\int_{\frac{N}{4} \EVb{X}}^{\infty} N F^{N-1}(x) dF(x)\\
&\leq& \int_{\frac{N}{4} \EVb{X}}^{\infty} \frac{4x}{\EVb{X}}F^{N-1}(x) dF(x)
\leq  \frac{4}{\EVb{X}} \int_{\frac{N}{4} \EVb{X}}^{\infty}  x dF(x).
\eeqrn
Here, the first inequality is obtained via the lower bound of $x$ over the interval of integration, and the second uses the fact $F(x)\leq 1$.

As $N\rightarrow \infty$, the last integral converges to 0 because of the fact that $\EVb{X}<\infty$, together with the Lebesgue dominated convergence theorem. Hence $P\left(\max\{ q_i\} \leq \frac{N}{4} \EVb{X} \right) \rightarrow 1$ as $N\rightarrow \infty$.

Combining the limits of $A$ and $B$ using Eq.~\eqref{E:PAB}, together with the fact $C\supseteq A\cap B$, we conclude that $P(C)$ must approach 1 as $N\rightarrow \infty$.
\end{proof}

\subsection*{Proof of Proposition~\ref{TH:cond_Cn}: sensitivity to perturbations}
\label{S:robust_proof}

Here, we will prove Proposition~\ref{TH:cond_Cn}, which puts bounds on the condition numbers that define the sensitivity of our coding metrics to perturbations in noise correlations or the tuning curves.  For our proof, we will require three different lemmas.  We state and prove these, before moving on to Proposition~\ref{TH:cond_Cn}.

Here, we will first consider the condition number for the case of a scalar stimulus $s$, when $L$ is a vector. In the proof of the proposition, we show how to extend the results to the case of multivariate $s$. As we mentioned in Section~\namerefquote{S:robust}, the same proof works for $I_\Flin$ as well as $I_\OLE$.

\begin{lemma}
\label{TH:1/(I+A)}
For any submultiplicative matrix norm $\norm{ \cdot }$ and $\norm{A} \leq 1/2$,
\beq
\norm{ (I-A)^{-1} }\leq 2.
\eeq
\end{lemma}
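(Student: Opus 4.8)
The plan is to prove both the invertibility of $I-A$ and the stated bound in one stroke via the Neumann series $\sum_{k=0}^{\infty} A^k$. First I would note that submultiplicativity of $\norm{\cdot}$ gives $\norm{A^k}\leq \norm{A}^k \leq 2^{-k}$ for every $k$, so that $\sum_{k=0}^{\infty}\norm{A^k}$ is dominated by the convergent geometric series $\sum_{k=0}^{\infty} 2^{-k}=2$. Since the space of matrices is finite-dimensional and hence complete (and all norms on it are equivalent, so absolute convergence in $\norm{\cdot}$ is genuine convergence), the partial sums $S_n=\sum_{k=0}^{n}A^k$ form a Cauchy sequence and converge to some matrix $S$.

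Next I would identify $S$ as $(I-A)^{-1}$. Multiplying the partial sum by $I-A$ telescopes, $(I-A)S_n = I - A^{n+1}$, and because $\norm{A^{n+1}}\leq 2^{-(n+1)}\to 0$ we have $A^{n+1}\to 0$; letting $n\to\infty$ yields $(I-A)S=I$, and the analogous computation with the factor on the right gives $S(I-A)=I$. Hence $I-A$ is invertible with $(I-A)^{-1}=S$.

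Finally the bound drops out by applying the triangle inequality termwise to the convergent series: $\norm{(I-A)^{-1}}=\norm{S}\leq \sum_{k=0}^{\infty}\norm{A^k}\leq \sum_{k=0}^{\infty}\norm{A}^k=\frac{1}{1-\norm{A}}$, and since $x\mapsto 1/(1-x)$ is increasing on $[0,1)$ and $\norm{A}\leq 1/2$, this is at most $1/(1-\tfrac12)=2$, as claimed.

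There is no real obstacle here; the only step requiring a word of care is the convergence-and-identification argument for the Neumann series, and in finite dimensions this is entirely routine, resting only on completeness and submultiplicativity. I would expect the write-up to be just a few lines, and I would keep the geometric-series domination explicit so that the hypothesis $\norm{A}\leq 1/2$ is visibly what pins the bound to $2$.
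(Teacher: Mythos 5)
Your proposal is correct and follows essentially the same route as the paper: the paper also expands $(I-A)^{-1}$ as the Neumann series $\sum_{n=0}^{\infty} A^n$ and bounds its norm by the geometric series $\sum_{n=0}^{\infty}\norm{A}^n=(1-\norm{A})^{-1}\leq 2$. The only difference is that you spell out the convergence and identification of the series limit, which the paper takes as known.
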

\begin{proof}
Since $\norm{ A } \leq 1/2$, $(I-A)^{-1}$ exists and 
\beq
\norm{ (I-A)^{-1} }= \norm{ \sum_{n=0}^{\infty} A^n}
\leq  \sum_{n=0}^{\infty} \norm{ A }^n = (1-\norm{ A})^{-1}
\leq 2.
\eeq
\end{proof}

\begin{lemma}
\label{TH:dl} For any positive definite matrix $A$, vectors $l$ and $a$ such that $\tnorm{ a } \tnorm{ A}^{\frac{1}{2}} \tnorm{ A^{-1}}^{\frac{1}{2}} \leq \tnorm{ l }$,
\beq
\frac{| (l+a)^T A^{-1} (l+a) - l^T A^{-1} l | }
{|  l^T A^{-1} l  |}
\leq 3 \tnorm{A^{-1}}^{\frac{1}{2}}  \tnorm{A}^{\frac{1}{2}}    \frac{\tnorm{ a}} {\tnorm{ l} }. 
\eeq
\end{lemma}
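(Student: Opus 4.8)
The plan is to expand the numerator algebraically and then bound the resulting cross term using the inner product induced by $A^{-1}$ rather than the ordinary Euclidean inner product; this choice is the whole point of the lemma, since it is what converts the naive factor $\kappa_2(A)$ into its square root, as the stated bound demands. Expanding,
\beq
(l+a)^T A^{-1}(l+a)-l^T A^{-1} l = 2\, l^T A^{-1} a + a^T A^{-1} a,
\eeq
and noting that the denominator is exactly $l^T A^{-1} l$, I would introduce the energy inner product $\langle u,v\rangle := u^T A^{-1} v$, which is a genuine inner product because $A^{-1}\succ 0$, with induced norm $\|v\|_{A^{-1}}=\sqrt{v^T A^{-1} v}$. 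The case $a=0$ is trivial, and the hypothesis forces $l\neq 0$ whenever $a\neq 0$, so every denominator that appears is strictly positive.

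By Cauchy--Schwarz in this energy inner product, $|l^T A^{-1} a|\le \|l\|_{A^{-1}}\|a\|_{A^{-1}}$, while $a^T A^{-1} a=\|a\|_{A^{-1}}^2$. Hence, writing $r := \|a\|_{A^{-1}}/\|l\|_{A^{-1}}$, the ratio in the lemma is controlled by
\beq
\frac{\lvert 2\, l^T A^{-1} a + a^T A^{-1} a\rvert}{l^T A^{-1} l}\le 2r + r^2.
\eeq
It then remains to translate $r$ back into the Euclidean ratio $\tnorm{a}/\tnorm{l}$. Using the extremal eigenvalues of the positive definite $A$, I have $\|a\|_{A^{-1}}^2\le \tnorm{A^{-1}}\,\tnorm{a}^2$ (from $\lambda_{\max}(A^{-1})=\tnorm{A^{-1}}$) and $\|l\|_{A^{-1}}^2\ge \tnorm{l}^2/\tnorm{A}$ (from $\lambda_{\min}(A^{-1})=1/\tnorm{A}$). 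Dividing these gives $r\le \tnorm{A}^{1/2}\tnorm{A^{-1}}^{1/2}\,\tnorm{a}/\tnorm{l}=:R$, and crucially each of the two eigenvalue estimates contributes only a single factor of the square root — the symmetric splitting that produces $\kappa_2(A)^{1/2}$ instead of $\kappa_2(A)$.

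Finally, the hypothesis $\tnorm{a}\,\tnorm{A}^{1/2}\tnorm{A^{-1}}^{1/2}\le \tnorm{l}$ is exactly the assertion $R\le 1$, so $r\le R\le 1$. Because $r\le 1$ we have $r^2\le r$, giving $2r+r^2\le 3r\le 3R = 3\,\tnorm{A^{-1}}^{1/2}\tnorm{A}^{1/2}\,\tnorm{a}/\tnorm{l}$, which is the claimed inequality. For the multivariate-stimulus extension one would apply the scalar argument componentwise through the $\log\det$/trace structure, but that is routine once the scalar case is in hand. The one genuinely load-bearing step — and the only place the argument could silently produce the wrong (weaker) constant — is the decision to apply Cauchy--Schwarz in the $A^{-1}$-energy norm; the naive Euclidean estimate $|l^T A^{-1}a|\le \tnorm{A^{-1}}\tnorm{l}\tnorm{a}$ would yield $\kappa_2(A)$ rather than $\kappa_2(A)^{1/2}$ and miss the bound entirely.
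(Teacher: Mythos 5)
Your proof is correct and is essentially the paper's own argument: your Cauchy--Schwarz step in the $A^{-1}$-energy inner product is precisely the paper's factorization $|a^T A^{-\frac{1}{2}} A^{-\frac{1}{2}} l| \le \tnorm{A^{-\frac{1}{2}}a}\,\tnorm{A^{-\frac{1}{2}}l}$, the eigenvalue conversions match the paper's use of $\tnorm{A^{-\frac{1}{2}}}=\tnorm{A^{-1}}^{\frac{1}{2}}$ and $\tnorm{l}\le\tnorm{A^{\frac{1}{2}}}\tnorm{A^{-\frac{1}{2}}l}$, and both proofs close by invoking the hypothesis to absorb the quadratic term via $2r+r^2\le 3r$. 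The only difference is bookkeeping: you normalize to the energy-norm ratio $r$ first and convert to Euclidean quantities at the end, whereas the paper interleaves those conversions.
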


\begin{proof}
\beqrn
&& | (l+a)^T A^{-1} (l+a) - l^T A^{-1} l |  \leq 
2| a^T A^{-1} l | +|a^T A^{-1} a | \\
&=& 2 | a^T A^{-\frac{1}{2}}  A^{-\frac{1}{2}}   l |  
+ |a^T A^{-1} a |  \\
&\leq& 2 \tnorm{ a} \tnorm{ A^{-\frac{1}{2}} } \tnorm{ A^{-\frac{1}{2}} l }
+ \tnorm{ a}^2 \tnorm{ A^{-1} }\\
& = & 2\frac{ \tnorm{ a}}{\tnorm{ l }}  \tnorm{ A^{-\frac{1}{2}} } \tnorm{ A^{-\frac{1}{2}} l }^2 \frac{\tnorm{ l }}{\tnorm{ A^{-\frac{1}{2}} l }}
+ \left(\frac{ \tnorm{ a}}{\tnorm{ l }}  \frac{\tnorm{ l }}{\tnorm{ A^{-\frac{1}{2}} l }} 
 \right)^2     \tnorm{ A^{-\frac{1}{2}} l }^2      \tnorm{ A^{-1} }\\
 & \leq & 2\frac{ \tnorm{ a}}{\tnorm{ l }}  \tnorm{ A^{-\frac{1}{2}} } \tnorm{ A^{-\frac{1}{2}} l }^2 
  \tnorm{ A^{\frac{1}{2}} }
  +\left(\frac{ \tnorm{ a}}{\tnorm{ l }}   \tnorm{ A^{\frac{1}{2}} }
 \right)^2      \tnorm{ A^{-\frac{1}{2}} l }^2     \tnorm{ A^{-1} }\\
 & \leq & 3 \tnorm{ A^{-1} }^{\frac{1}{2}}  \tnorm{ A }^{\frac{1}{2}}    \frac{\tnorm{ a }} {\tnorm{ l }} |  l^T A^{-1} l  | .
   \eeqrn
   Here, we have used  $\tnorm{  A^{-\frac{1}{2}} }
=\tnorm{  A^{-1} }^{\frac{1}{2}}$, $\tnorm{ A^{-\frac{1}{2}} l }^2= |  l^T A^{-1} l  |$, and the assumed condition in the last line.
\end{proof}

\begin{lemma}
\label{TH:d1/A}
For any positive definite matrix $A$, vector $l$ and matrix $B$ where $\tnorm{A^{-1} } \tnorm{B } \leq 1/2 $,
\beq
\frac{\abs{l^T A^{-1} l - l^T (A+B)^{-1} l } }
{\abs{  l^T A^{-1} l  }}
\leq 2 \tnorm{ A^{-1}} \tnorm{B}. 
\eeq
\end{lemma}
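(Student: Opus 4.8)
The plan is to combine the resolvent identity with a symmetric square-root substitution, so that the estimate reduces to the Neumann-series bound already recorded in Lemma~\ref{TH:1/(I+A)}. First I would write the resolvent identity
\beq
A^{-1}-(A+B)^{-1}=A^{-1}\left[(A+B)-A\right](A+B)^{-1}=A^{-1}B(A+B)^{-1},
\eeq
and then factor $A+B=A(I+A^{-1}B)$ to obtain $(A+B)^{-1}=(I+A^{-1}B)^{-1}A^{-1}$. Substituting this gives
\beq
l^T A^{-1}l - l^T(A+B)^{-1}l = l^T A^{-1}B(I+A^{-1}B)^{-1}A^{-1}l.
\eeq

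The key step is to make the denominator $l^T A^{-1}l$ appear explicitly. Since $A\succ 0$, I would set $w=A^{-1/2}l$, so that $l^T A^{-1}l=\tnorm{w}^2$, $A^{-1}l=A^{-1/2}w$, and $l^T A^{-1}=w^T A^{-1/2}$. The numerator then takes the quadratic form $w^T M w$ with $M=A^{-1/2}B(I+A^{-1}B)^{-1}A^{-1/2}$, whence
\beq
\abs{l^T A^{-1}l - l^T(A+B)^{-1}l}\le \tnorm{M}\,\tnorm{w}^2=\tnorm{M}\,\abs{l^T A^{-1}l}.
\eeq

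It remains to bound $\tnorm{M}$. By submultiplicativity of the $2$-norm, $\tnorm{M}\le \tnorm{A^{-1/2}}^2\,\tnorm{B}\,\tnorm{(I+A^{-1}B)^{-1}}$, and for the $2$-norm with $A\succ 0$ one has $\tnorm{A^{-1/2}}^2=\tnorm{A^{-1}}$. Since $\tnorm{A^{-1}B}\le \tnorm{A^{-1}}\tnorm{B}\le 1/2$ by hypothesis, Lemma~\ref{TH:1/(I+A)} yields $\tnorm{(I+A^{-1}B)^{-1}}\le 2$, so $\tnorm{M}\le 2\tnorm{A^{-1}}\tnorm{B}$ and the claimed bound follows.

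The main obstacle is obtaining the relative bound with respect to $l^T A^{-1}l$ rather than $l^T A^{-2}l=\tnorm{A^{-1}l}^2$: a direct resolvent estimate naturally produces the latter quantity. The symmetric substitution $w=A^{-1/2}l$ is precisely what resolves this, since it simultaneously exposes $l^T A^{-1}l=\tnorm{w}^2$ as the target denominator and sandwiches the perturbation between two copies of $A^{-1/2}$, whose norms combine to give exactly $\tnorm{A^{-1}}$. This mirrors the role played by the $A^{-1/2}$ manipulations in the proof of Lemma~\ref{TH:dl}.
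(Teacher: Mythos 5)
Your proof is correct and follows essentially the same route as the paper's: the resolvent identity, insertion of $A^{-1/2}$ factors to expose $l^T A^{-1} l$ as a squared vector norm, and Lemma~\ref{TH:1/(I+A)} giving the factor of $2$. The only cosmetic difference is that you apply the Neumann bound to $(I+A^{-1}B)^{-1}$ while the paper applies it to the symmetrized $(I+A^{-1/2}BA^{-1/2})^{-1}$; both are controlled by the same hypothesis $\tnorm{A^{-1}}\tnorm{B}\leq 1/2$.
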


\begin{proof}
\beqrn
&&\abs{ l^T A^{-1} l - l^T (A+B)^{-1} l } 
=\abs{l^T (A+B)^{-1} B A^{-1} l }\\
&=&\abs{ l^T A^{-\frac{1}{2}} 
(I+A^{-\frac{1}{2}}BA^{-\frac{1}{2}})^{-1} 
A^{-\frac{1}{2}}B A^{-\frac{1}{2}} A^{-\frac{1}{2}} l }\\
&\leq&
\tnorm{l^T A^{-\frac{1}{2}} }
\tnorm{(I+A^{-\frac{1}{2}}BA^{-\frac{1}{2}})^{-1} }
\tnorm{A^{-\frac{1}{2}} }     \tnorm{B }     \tnorm{A^{-\frac{1}{2}} }    \tnorm{A^{-\frac{1}{2}} l  }\\
&=&
( l^T A^{-1} l )
\tnorm{A^{-1} }   \tnorm{B }
\tnorm{ (I+A^{-\frac{1}{2}}BA^{-\frac{1}{2}})^{-1} }\\
&\leq&
2(  l^T A^{-1} l  )
\tnorm{A^{-1} }   \tnorm{B }.
\eeqrn
Here we have used $\tnorm{A^{-\frac{1}{2}} }=\tnorm{A^{-1} }^{\frac{1}{2}}$. As $\tnorm{A^{-\frac{1}{2}}BA^{-\frac{1}{2}}}  \leq   \tnorm{A^{-1} }  \tnorm{B}\leq 1/2$, we apply Lemma \ref{TH:1/(I+A)} is applied to obtain the last line.
\end{proof}

\CondCn*

\begin{proof}

Note that
\beq
I_\Flin=\tr(\nabla \mu^T (C^n)^{-1} \nabla \mu)
=\sum_{i=1}^K e_i^T \nabla \mu^T (C^n)^{-1} \nabla \mu e_i,
\eeq
where $e_i=(0,\cdots,1,\cdots,0)^T$ is the $i$-th unit vector ($K\times 1$). Since the bound in Lemma~\ref{TH:d1/A} does not depend on $l$, we apply the Lemma for $l=\nabla \mu e_i=(\nabla \mu)_{\cdot,i}$ and each $i$ respectively.  For any perturbation $B$ satisfying $\tnorm{(C^n)^{-1}}  \tnorm{B} \leq 1/2 $, we have
\beqrn
\frac{| I_\Flin(C^n)-I_\Flin(C^n+B)|} { I_\Flin(C^n)} &\leq& \frac{1}{ I_\Flin(C^n)} \sum_{i=1}^K 2\kappa_2 \frac{\tnorm{ B}}{\tnorm{ C^n}}  e_i^T \nabla \mu^T (C^n)^{-1} \nabla \mu e_i \\
&\leq& 2\kappa_2 \frac{\tnorm{ B}}{\tnorm{ C^n}}.
\eeqrn
Here $\kappa_2=\tnorm{ (C^n)^{-1}}\cdot  \tnorm{ C^n}$.  We then note that for positive semidefinite matrices $\tnorm{ C^n}=\lambda_{\max}$, $\tnorm{ (C^n)^{-1}}= \lambda_{\min}^{-1}$, where $\lambda_{\min}$ and $\lambda_{\max}$ are the smallest and largest eigenvalues of $C^n$. This proves the bound on the condition number for perturbing $C^n$.

Similarly, for a perturbation of $\nabla \mu$ with $\nu$, $\tnorm{ \nu } \tnorm{ A}^{\frac{1}{2}} \tnorm{ A^{-1}}^{\frac{1}{2}} \leq \min_i \tnorm{ \nabla \mu e_i }$. This guarantees that 
\beq
\tnorm{ \nu e_i } \tnorm{ A}^{\frac{1}{2}} \tnorm{ A^{-1}}^{\frac{1}{2}} 
\leq 
 \tnorm{ \nu  }  \tnorm{ e_i } \tnorm{ A}^{\frac{1}{2}} \tnorm{ A^{-1}}^{\frac{1}{2}}
 \leq  \tnorm{ \nabla \mu e_i }. 
\eeq
Applying Lemma~\ref{TH:dl} for each $\nu e_i$ and $\nabla \mu e_i$, we have
\beqrn
&&\frac{| I_\Flin(\nabla \mu)-I_\Flin(\nabla \mu+\nu)|} { I_\Flin(\nabla \mu)} \leq \frac{1}{I_\Flin(\nabla \mu)} \sum_{i=1}^K 3\sqrt{\kappa_2} \frac{\tnorm{ \nu e_i }}{\tnorm{ \nabla \mu e_i }}  e_i^T \nabla \mu^T (C^n)^{-1} \nabla \mu e_i \\
&\leq& 3\sqrt{\kappa_2} \frac{\tnorm{ \nu }}{\min_i \tnorm{ \nabla \mu e_i }} 
= 3\sqrt{\kappa_2}
\frac{\tnorm{ \nabla \mu }}{\min_i \tnorm{ \nabla \mu e_i }} 
 \frac{\tnorm{ \nu }}{\tnorm{ \nabla \mu }} 
 \leq 3\sqrt{\kappa_2}
\frac{\norm{ \nabla \mu }_F}{\min_i \tnorm{ \nabla \mu e_i }} 
 \frac{\tnorm{ \nu }}{\tnorm{ \nabla \mu }} \\
 &\leq&  3\sqrt{\kappa_2}
\frac{\sqrt{K}\max_i\tnorm{ \nabla \mu e_i }}{\min_i \tnorm{ \nabla \mu e_i }} 
 \frac{\tnorm{ \nu }}{\tnorm{ \nabla \mu }}  .
\eeqrn
Here $\norm{\cdot}_F$ is the Frobenius norm and we have used the fact for any matrix $G$, $\norm{G}_2\leq \norm{G}_F$. The last inequality follows  from the definition of $\norm{\cdot}_F$.
\end{proof}

\subsection*{Details for numerical examples and simulations}
\label{S:numerical_details}

Here, we describe the parameters of our numerical models, and the numerical methods we used.

\subsubsection*{Parameters for Fig.~\ref{F:tuning}, Fig.~\ref{F:2d_sign} and Fig.~\ref{F:3d_boundary}}
\label{S:numerics}
All parameters we use are dimensionless, unless stated otherwise.

In Fig.~\ref{F:tuning}, the mean response for the three neurons under stimulus 1 (red) and 2 (blue) is $\mu_1$ and $\mu_2$ respectively:
\beq
\mu_1=
\left(
\barr{c}
1.48 \\
1.16 \\
1.16
\earr
\right),
\quad
\mu_2=
\left(
\barr{c}
0.52 \\
0.84 \\
0.84
\earr
\right).
\eeq
For each case of correlation structure (i.e., for each row in Figure~\ref{F:tuning}), the noise covariance matrix is the same for the two stimuli, and all neuron variances $C^n_{ii}=1$.  In detail:  
\beq
C^n_{\text{A}}=
\left(
\barr{ccc}
1 &  & \\
 & 1 &  \\
  &  & 1
\earr
\right),
\quad
C^n_{\text{B}}=
\left(
\barr{ccc}
1 & 0.3381 & 0.3381 \\
0.3381 & 1 & 0.1127  \\
0.3381  & 0.1127 & 1
\earr
\right),
\eeq
\beq
C^n_{\text{C}}=
\left(
\barr{ccc}
1 & -0.1833 & -0.1833 \\
 -0.1833 & 1 & -0.7333 \\
-0.1833  & -0.7333 & 1
\earr
\right),
\quad
C^n_{\text{D}}=
\left(
\barr{ccc}
1 & -0.405 & 0.675\\
-0.405 & 1 & 0.225 \\
 0.675 & 0.225 & 1
\earr
\right).
\eeq

The confidence circles and spheres are calculated based on a Gaussian assumption for the response distributions.

In Fig.~\ref{F:2d_sign}, the noise variances $C^n_{ii}$ are all set to 1.  Additionally, 
\beq
C^\mu=
\left(
\barr{ccc}
0.8 &  & \\
 & 0.8 &  \\
  &  & 0.8
\earr
\right),
\quad
L=
\left(
\barr{c}
0.0310 \\
0.4012 \\
0.0406
\earr
\right).
\eeq

In Fig.~\ref{F:3d_boundary}, the noise variances $C^n_{ii}$ are all set to 1. In panel {\bf A} 
\beq
C^\mu=
\left(
\barr{ccc}
1 & 0.3 & 0.2\\
0.3 & 1 & -0.1 \\
0.2  & -0.1 & 1
\earr
\right),
\quad
L=
\left(
\barr{c}
1 \\
1 \\
1
\earr
\right).
\eeq
For panel {\bf B} 
\beq
C^\mu=
\left(
\barr{ccc}
1 &  & \\
 & 1 &  \\
  &  & 1
\earr
\right),
\quad
L=
\left(
\barr{c}
1 \\
0 \\
0
\earr
\right).
\eeq

\subsubsection*{Heterogeneous tuning curves}

For the results in Section~\namerefquote{S:heter_eg}, we use the same model and parameters as in \cite{Ecker:2011bx} to set up a heterogeneous population with tuning curves of random amplitude and width. For completeness, we include the details of this setup as follows:

The shape of each tuning curve (specifying firing rates) is modeled by a von Mises distribution.  This an analog of the Gaussian distribution over the unit circle:
\beq
r_i(\theta)=\alpha_i+\beta_i \exp(\gamma_i [\cos(\theta-\phi_i)-1]).
\eeq
The parameters $\beta_i,\gamma_i, \; \text{and} \; \phi_i$ respectively control the magnitude, width and preferred direction for each neuron. We set $\phi_i$ to be equally spaced along $[0, 2\pi]$ and $\alpha_i=1$. The $\beta_i$ are independently chosen from a $\chi$-square distribution with 3 degrees of freedom, scaled to a mean of 19. $\gamma_i$ is similarly drawn from a $\log$-normal distribution with parameters giving mean 2 and standard deviation 2 (for the underlying normal distribution).

We assume Poisson firing variability, so that $(C^n)_{ii}=\EVb{\var(x_i | s)}=\EVb{\mu_i}=T\EVb{r_i}$, and use a spike-count window $T=100$ms in Fig~\ref{F:diff_tuning} and \ref{F:scatter_C}.

\subsubsection*{Equivalence between penalty functions and constrained optimizations}
In this section we note a standard fact about implementing constrained optimization with penalty functions --- i.e., the method of Lagrange multipliers.

Consider an optimization problem: $\max_{x} f(x)$. Now add a penalty term $p(x)$ with constant $\lambda_0$ and consider the new optimization problem: $\max_{x} f(x) -\lambda_0 p(x)$. If $x_0$ is one of the solutions to this new optimization problem, then it is also an optimal solution to the constraint optimization problem $\max_{x,~p(x)=p(x_0)} f(x)$. 

To show this, let $x_1$ be any point that satisfies $p(x_1)=p(x_0)$. Further, note $x_0$ is also the solution to the problem of $\max_{x} f(x)-\lambda_0(p(x)-p(x_0))$, since we simply add a constant $\lambda_0 p(x_0)$. Therefore,
\beqrn
 f(x_0)-\lambda_0(p(x_0)-p(x_0)) &\ge&  f(x_1)-\lambda_0(p(x_1)-p(x_0)),\\
 f(x_0) &\ge& f(x_1).
\eeqrn

As $x_0$ also satisfies the constraint, we conclude that $x_0$ is an optimal solution to the constrained optimization problem. 

We use this fact to find the information-maximizing noise correlations, with the restriction that the noise correlations by small in magnitude. For a given $\lambda_0$, we perform the optimization $\max_{x} f(x) -\lambda_0 p(x)$, where $f(\cdot)$ in this case is one of our information measures, $x$ refers to the off-diagonal elements of the covariance matrix, and $p(x)$ is the measure of the correlation strength as in Eq.~\eqref{E:p(x)}. Thanks to the above result, we can be assured that the resulting covariance matrix (described by $x$) will be the one that maximizes the information for a particular strength of correlations. By varying $\lambda_0$ (or $r$ in Eq.~\eqref{E:p(x)}), we can thus parametrically explore how the optimal correlation structures change as one allows either larger, or smaller, correlations in the system.

\subsubsection*{Penalty function}
In Section~\namerefquote{S:heter_eg}, our aim is to plot optimized noise correlations at various levels of the correlation strength, as quantified by the Euclidean norm. This constrained optimization problem can be achieved, as shown in the previous section, by adding a term to the information that penalizes the Euclidean norm --- that is, a constant times the sum-of-squares of correlations.  This is precisely the procedure that we follow, ranging over a number of different values of the constant to produce the plot of Fig.~\ref{F:diff_tuning}.  

In more detail, we choose these different values of the constant as follows.  To force the correlations towards a fixed strength of $r$, we optimize a modified objective function with an additional term:
\beq
\label{E:p(x)}
I(C^n) - \frac{\tnorm{ G*V^C }}{2 r} \sum_{i<j} \left(\frac{C^n_{ij}}{V^C_{ij}} \right)^2.
\eeq
As will become clear, the term before the sum is a constant with respect to the terms being optimized; from one optimization to the next, we adjust the value of $r$ in this term. Here the variance terms $V^C_{ij}=\sqrt{C^n_{ii} C^n_{jj}}$ are constants to scale $C^n_{ij}$ properly as correlation coefficients. Also, $G$ is the gradient vector of $I(\cdot)$ at $C^n=D^n$ (the diagonal matrix corresponding independent noise) with respect to off-diagonal entries of $C^n$ (see the remarks after the proof of Theorem \ref{TH:p_or_n}). $G*V^C$ means the entry-wise product of the two vectors (of length $\frac{N(N-1)}{2}$ indexed by $(i,j) ,i< j$). Note that $\tnorm{ \cdot }$ is the ordinary vector 2-norm. 

To understand this choice of the constant in \eqref{E:p(x)}, note that the new optimal correlations with the penalty can be characterized by setting the gradient of the total objective function to 0. In a small neighborhood of $D^n$, the gradient of $I(\cdot)$ is close to $G$. With these substitutions, the equation for the gradient of the total objective function yields approximately:
\beq
\label{E:grad_line}
G-\frac{\tnorm{ G*V^C }}{r} \left\{ \frac{C^n_{ij}}{(V^C_{ij})^2} \right\}_{i<j} =0 \; ,
\; \; \; \; \text{or,}  \; \; \; \; \frac{G*V^C}{\tnorm{ G*V^C }}=\frac{\{\frac{C^n_{ij}}{V^C_{ij}} \}}{ r}.
\eeq
where we took an entry-wise product with $V^C$ and rearranged terms to obtain the final equality.  The final equality implies that the (vector) 2-norm of noise correlations $\{\frac{C^n_{ij}}{V^C_{ij}} \}$ (i.e., the Euclidean norm) is approximately $r$. This is what we set out to achieve with the additional term in the objective function.

\subsubsection*{Rescaling signal correlation}
In Fig.~\ref{F:scatter_C}{\bf DEF}, we make scatter plots comparing noise correlations with the rescaled signal correlations. Here, we explain how and why this rescaling was done. 

First, we note that the rescaling is done by multiplying each signal correlation by a positive weight.  This will not change its sign, the property associated with the sign rule (Fig.~\ref{F:scatter_C}{\bf ABC}).

Next, recall that in deriving the sign rule (Eq.~\eqref{E:dir_prime}), we calculated the gradient of the information with respect to noise correlations.  One should expect alignment between this gradient and the optimal correlations when their magnitudes are small. In other words, if we make a scatter plot with dots whose y and x coordinates are entries of the gradient and noise correlation vectors, respectively (so that the number of dots is the length of these vectors), we expect to see that a straight line will pass through all the dots.

We next note that the entries of the gradient vector $G*V^C$ are not exactly the normalized signal correlations (see Eq.~\eqref{E:grad_line}).  Instead, this vector has additional ``weight factors" that differ for each entry (neuron pair), and hence for each dot in the scatter plot. Thus, to reveal a linear relationship between signal and noise correlations in a scatter plot, we must scale each signal correlation with a proper (positive) weight, determined below, so that $\rhosig_{ij}\rightarrow V^C_{ij}V^G_{ij} \rhosig_{ij}=\text{[universal constant]}\cdot (G*V^C)_{ij}$.  We then redo the scatter plots with these new values  on the horizontal axis. As we will see, the weights $V^C_{ij}V^G_{ij}$ (defined below) do not depend on the noise correlations.  

We now determine $V^G_{ij}$. Recall that our goal is to define $V^G_{ij}$ such that, when it is used to rescale signal correlations as above, we will see a linear alignment between signal and noise correlations.  In other words, if we choose $V^G_{ij}$ correctly, we will have $\frac{C^n_{ij}}{V^C_{ij}} \propto G_{ij}V^C_{ij} = \text{constant}\cdot V^C_{ij} V^G_{ij} \rhosig_{ij}$ (for any $i<j$). Comparing the formulae for $G$ (from the remarks after the proof of Theorem \ref{TH:p_or_n}) and $\rhosig_{ij}$ (Eq.~\eqref{E:rhosig_def_OLE}), we see that $V^G_{ij}=\tnorm{A^0_i} \tnorm{ A^0_{j}}$ satisfies this (with constant$=-\frac{1}{2}$).
Here $A^0=(C^\mu+D^n)^{-1}L$.

\section*{Acknowledgments}
This work was inspired by an ongoing collaboration with Fred Rieke and his colleagues on retinal coding, which suggested the importance of ``mapping" the full space of possible signal and noise correlations.  We gratefully acknowledge the ideas and insights of these scientists. We further wish to thank Andrea Barreiro, Fred Rieke, Kresimir Josic and Xaq Pitkow for helpful comments on this manuscript.

\bibliography{papers}

%

\end{document}